\newtheorem{theorem}{Theorem}[section]
\newtheorem{fact}[theorem]{Fact}
\newtheorem{corollary}[theorem]{Corollary}
\newtheorem{proposition}[theorem]{Proposition}
\newtheorem{lemma}[theorem]{Lemma}
\newtheorem{remark}[theorem]{Remark}
\theoremstyle{definition}
\newtheorem{definition}[theorem]{Definition}
\begin{document}
\title{\bf  Dynamic Probability Logic: \\
Decidability \& Computability}
\author{Somayeh Chopoghloo {\footnote{\noindent School of Mathematics, Institute for Research in Fundamental Sciences (IPM), PO Box 19395-5746, Tehran, Iran.
			E-mail: s.chopoghloo@ipm.ir.}}
		\hspace{8mm}
		\vspace*{5mm}
		Mahdi Heidarpoor {\footnote{\noindent School of Mathematics, Institute for Research in Fundamental Sciences (IPM), 	PO Box 19395-5746, Tehran, Iran; Department of Mathematics and Computer Science, Amirkabir University of Technology (Tehran Polytechnic), 15194, Tehran, Iran. E-mail: mahdi.heidarpoor@gmail.com.}}
		\hspace{8mm}
		\vspace*{5mm}
		Massoud Pourmahdian {\footnote{\noindent  Corresponding author,
				School of Mathematics, Institute for Research in Fundamental Sciences (IPM),
				PO Box 19395-5746, Tehran, Iran; Department of Mathematics and Computer Science, Amirkabir University of Technology (Tehran Polytechnic), 15194, Tehran, Iran.
				E-mail: pourmahd@ipm.ir.}}	   	
}
\date{}
\maketitle
\begin{abstract}
In this article, the decidability and computability issues of dynamic probability logic ($\mathsf{DPL}$) are addressed. Firstly, a proof system $\mathcal{H}^{-}_{\mathsf{DPL}}$ is introduced for  $\mathsf{DPL}$ and shown that it is weakly complete. Furthermore, this logic has the finite model property and so is decidable.
Secondly, a strongly complete proof system $\mathcal{H}_{\mathsf{DPL}}$ is presented for $\mathsf{DPL}$ and proved that its canonical model is a computable structure.

\end{abstract}
\textbf{MSC 2020:} 03B45, 03B48, 03B44, 03B70, 60J05
\\
\\
\textbf{Keywords:} Modal probability logic, temporal logic, axiomatization and proof system, (weak) completeness, finite model property, decidability, computable analysis, computable topological and metric spaces, Markov processes.
\section{Introduction}
This work is a further continuation of \cite{CP2024} to carry out a research to investigate (propositional modal) dynamic probability logic ($\mathsf{DPL}$) in  which the probability and dynamic (or temporal) logics are combined. The language of $\mathsf{DPL}$ consists of countably many probability modal operators of the form $L_r$ as well as a dynamic (temporal) operator $\bigcirc$.
This logic describes and reasons about dynamic Markov processes which are mathematical structures of the form $\langle \Omega, \mathcal{A}, T, f \rangle$ where $\langle \Omega, \mathcal{A} \rangle$ is a measure space, $T:\Omega\times \mathcal{A}\to [0, 1]$ is a Markov kernel and $f: \Omega\to \Omega$ is a measurable function. Within  $\mathsf{DPL}$, a formula of the form $L_r \varphi$ is interpreted as `the probability of $\varphi$ is at least $r$'. While $\bigcirc\varphi $ means that `the formula $\varphi$ holds in the next instance of time'.
Some strongly complete axiomatic systems of modal probability logic ($\mathsf{PL}$),  were provided earlier in the literature by several authors, e.g. \cite{Hei, Zhou2009, Kozen2}. In \cite{CP2024},  we introduced a Hilbert-style system $\mathcal{H}_{\mathsf{DPL}}$ and proved that it is strongly sound and complete with respect to the class of all dynamic Markov processes.

The first contribution of the current research is to extend the ideas in \cite{Zhou2009} and show that the logic $\mathsf{DPL}$ also enjoys  weak completeness and finite model property as well as decidability. So this logic is computationally tame extension of $\mathsf{PL}$.

Another achievement of this paper is devoted to showing the computable version of strong completeness of $\mathsf{DPL}$. In \cite{CP2024},  to prove the strong completeness,  we introduced a dynamic Markov process $\langle \Omega_c, \mathcal{A}_c, T_c, f_c \rangle$ where $\langle \Omega_c, \mathcal{A}_c \rangle$ is a standard Borel space can be endowed with a valuation function $v_c: \mathbb{P}\to \mathcal{A}_c$ such that any consistent theory $\Gamma$ can be satisfied in $w\in \Omega_c$.
Here using the machinery developed for computable analysis it is proved that the canonical model
$\mathfrak{M}_c=\langle \Omega_c, \mathcal{A}_c, T_c, f_c, v_c \rangle$ is indeed a computable structure.
 The interesting fact is that  when one discard the dynamic part and consider only the probability logic then the canonical model for $\mathsf{PL}$ is also a computable model. So our result also gives a computable version of strong completeness theorem even for $\mathsf{PL}$.

The organization of the paper is as follows. In Section \ref{DPL}, we review the syntax and semantics of $\mathsf{DPL}$ and present an axiomatization of $\mathcal{H}^{-}_{\mathsf{DPL}}$ proving the weak completeness, finite model property and Decidability of  $\mathsf{DPL}$. Section \ref{efsc} is devoted to computability aspects of $\mathsf{DPL}$. In Subsection \ref{SCom}, we recall $\mathcal{H}_{\mathsf{DPL}}$,
introduced \cite{CP2024},  for which the strong soundness and completeness are proved.  In Subsection \ref{ETOP}, we review computability of topological and metric spaces. Subsequently  in Subsection \ref{comcan}, within the framework of computable metric spaces, we show that the canonical model  $\mathfrak{M}_c$ is a computable structure.
\section{Dynamic Probability Logic} \label{DPL}
In this section, we briefly review the syntax and semantics of $\mathsf{DPL}$ and introduce a Hilbert-style axiomatization $\mathcal{H}^{-}_{\mathsf{DPL}}$ proving the weak completeness of this system. Furthermore, we show that $\mathsf{DPL}$ has the finite model property and is decidable.
\subsection{Syntax and semantics} \label{Syntax and semantics}
In this subsection, we review the basic definitions and facts about dynamic probability logic. Most of the concept can be also found in \cite{CP2024}. Suppose  $\mathbb{P}$ is a countable set of propositional variables. The language $\mathcal{L}_\mathsf{DPL}$ of {\em dynamic probability logic} is recursively defined by the following grammar:
\begin{align*}
	\varphi ::&=\ \ p\mid \neg\varphi \mid \varphi \land \varphi \mid L_r \varphi \mid \bigcirc\varphi
\end{align*}
where $p \in \mathbb{P}$ and $r\in \mathbb{Q} \cap [0,1]$. One can use the Boolean connectives $\top$, $\perp$, $\to$, $\leftrightarrow$ and $\vee$ in the usual way. For a formula $\varphi $, the meaning of $L_r \varphi$ is considered as `the probability of $\varphi$ is at least $r$.' Furthermore, $M_r \varphi$ is defined as an abbreviation for $L_{1-r} \neg\varphi$ and interpreted as `the probability of $\varphi $ is at most $r$.' On the other hand, the modality operator $\bigcirc$ is implemented as the {\em dynamic} (or {\em next}) {\em operator} and has dynamic (or temporal) interpretation; $\bigcirc\varphi$ states that `$\varphi$ holds at the next instance of time.'

\begin{definition} \label{Markov}
Let $\langle{\Omega, \mathcal{A}}\rangle$ be a measurable space. A function  $T: \Omega\times \mathcal{A}\to [0, 1]$  is said to be a {\em Markov kernel} (or {\em transition probability}) if it satisfies the following conditions:
\begin{itemize}
\item for each $w \in \Omega$, $T(w, .): A \mapsto T(w, A)$ is a  probability measure on $\mathcal{A}$;
\item  for each $A \in \mathcal{A}$,  $T(., A): w \mapsto T(w, A)$  is a measurable function on $\Omega$.
\end{itemize}
The triple $\langle{\Omega, \mathcal{A}, T}\rangle$ is denoted as a {\em Markov process}  (or {\em Markov chain}) on the state space $\langle{\Omega, \mathcal{A}}\rangle$. In the following,  we may sometimes use $T(w)(A)$ instead of $T(w,A)$.
\end{definition}

\begin{definition}\label{frames}
	 A quadruple $\mathfrak{P} = \langle{\Omega, \mathcal{A}, T, f}\rangle$ is called a {\em dynamic Markov process} 
	 whenever $\langle{\Omega, \mathcal{A}, T}\rangle$ is a Markov process and  $f:\Omega\to \Omega$ is a measurable function, and  an element $w\in\Omega$ is called a {\em world} (or {\em state}).
\end{definition}

\begin{definition}
	A {\em dynamic Markov model}, or simply a {\em  model}, is a tuple $ \mathfrak{M} = \langle{\Omega, \mathcal{A}, T, f, v}\rangle$ where $\langle{\Omega, \mathcal{A}, T, f}\rangle$ is a dynamic Markov process  and  $v: \mathbb{P} \to \mathcal{A}$ is a valuation function which assigns to every propositional variable $p \in \mathbb{P}$ a measurable set $v(p)\in\mathcal{A}$.
\end{definition}

The satisfiability relation for arbitrary formulas of $\mathcal{L}_\mathsf{DPL}$ in a given model $ \mathfrak{M} = \langle{\Omega, \mathcal{A}, T, f, v}\rangle$, is
defined inductively as follows:
\begin{itemize}
	\item[] $\mathfrak{M}, w \vDash p \;$ iff $\; w \in v(p)$,
	\item[] $\mathfrak{M}, w \vDash \neg\varphi \;$ iff $\; \mathfrak{M}, w \nvDash \varphi$,
	\item[] $\mathfrak{M}, w \vDash \varphi\land \psi\;$ iff $\; \mathfrak{M}, w \vDash \varphi$ and $\mathfrak{M}, w \vDash \psi$,
	\item[] $\mathfrak{M}, w \vDash L_r \varphi \;$ iff $\; T(w, [\![\varphi]\!]_{\mathfrak{M}}) \geq r$. Here
	$[\![\varphi]\!]_{\mathfrak{M}} = \{ w \in \Omega\;| \; \mathfrak{M}, w \vDash \varphi \}$,
	\item[] $\mathfrak{M}, w \vDash \bigcirc\varphi \;$ iff $\; \mathfrak{M}, f(w) \vDash \varphi$.
\end{itemize}
Moreover, a set $\Gamma$ of formulas holds in the world $w$ if $\mathfrak{M}, w \vDash \varphi$ for each $\varphi \in \Gamma$.


\begin{definition}
    A formula $\varphi$ is {\em valid in a model} $\mathfrak{M}$, denoted by $\mathfrak{M}\vDash \varphi$, if $\mathfrak{M}, w \vDash \varphi$ for all $w \in \Omega$. Likewise, $\varphi$ is \textit{valid in a dynamic Markov process} $\mathfrak{P}$, denoted by $\mathfrak{P}\vDash \varphi$, if it is valid in every model based on $\mathfrak{P}$.
	We say that $\varphi$ is \textit{valid in a class $\mathcal{C}$} of dynamic Markov processes, denoted by ${\vDash}_{\mathcal{C}}\; \varphi$, if it is valid in every element of $\mathcal{C}$.
	Further, $\varphi$ is  a {\em semantic consequence} of a set $\Gamma$ of formulas over a class $\mathcal{C}$  if for every model $\mathfrak{M}$ based on an element in $\mathcal{C}$, if $\Gamma$ holds in a world $w$ in $\mathfrak{M}$, then so is $\varphi$.
	In this case, we write $\Gamma \;{\vDash}_{\mathcal{C}} \;\varphi$.
	For abbreviation, we may omit the subscript $\mathcal{C}$ and instead write ${\vDash}\; \varphi$ and $\Gamma \;{\vDash} \;\varphi$ when it is the class of all dynamic Markov processes.
\end{definition}

\begin{definition}
By {\em dynamic probability logic}, or $\mathsf{DPL}$, we mean the set of all valid formulas of $\mathcal{L}_\mathsf{DPL}$ over the class of all dynamic Markov processes.
\end{definition}
\subsection{Axiomatization} \label{axiomatization}
In this subsection, we provide a (sound) Hilbert-style axiomatization for $\mathsf{DPL}$. We denote this proof system by $\mathcal{H}^{-}_{\mathsf{DPL}}$ .

\begin{definition}
The proof system $\mathcal{H}^{-}_{\mathsf{DPL}}$ has the following axiom schemes:
\begin{align*}
& \textnormal{Taut}&&\text{All propositional tautologies}\\
&\textnormal{FA$_1$}&&L_0 \perp &&\text{(finite additivity)}\\
&\textnormal{FA$_2$}&&L_r \neg\varphi \to \neg L_s\varphi, \text{ if } \; r+s > 1 &&\text{}\\
&\textnormal{FA$_3$}&&L_r (\varphi\land \psi) \land L_s(\varphi\land \neg\psi) \to L_{r+s}\varphi, \text{ if } \; r+s \leq 1&&\text{}\\
&\textnormal{FA$_4$}&& \neg L_r (\varphi\land \psi) \land \neg L_s(\varphi\land \neg\psi) \to \neg L_{r+s}\varphi, \text{ if } \; r+s \leq 1&&\text{}\\
&\textnormal{Mono}&&L_1 (\varphi\to\psi) \to (L_r\varphi\to L_r \psi)  &&\text{(monotonicity)}\\
&\textnormal{Func$_\bigcirc$}&&\bigcirc\neg\varphi\leftrightarrow \neg\bigcirc\varphi &&\text{(functionality)}\\
&\textnormal{Conj$_\bigcirc$}&&\bigcirc(\varphi \wedge \psi) \leftrightarrow (\bigcirc\varphi\wedge\bigcirc\psi) &&\text{(conjunction)}
\end{align*}
and the inference rules:
\begin{align*}
&\textnormal{MP}&&\frac{\displaystyle{\;\varphi\to\psi\;\;\;\varphi\;}}{\displaystyle{\psi}}&& \text{(modus ponens)}\;\\
&\textnormal{Arch$_{\bigcirc^n, r}$}&&\frac{\displaystyle{\;\{\psi\to{\bigcirc}^n L_ s \varphi \;|\; s< r\}\;}}{\displaystyle{\psi \to {\bigcirc}^n L_r \varphi}}, \text{ for } \; n\in \mathbb{N} \;\;\;\;\;\;\;\;\;\;\;\;\;\;\;\;\;\;\;\;\;\;\;\; && \text{(Archimedean)}\\
&\textnormal{Nec$_{L_1}$}&&\frac{\displaystyle{\;\varphi\;}}{\displaystyle{L_1 \varphi}} && \text{(necessitation)}\\
&\textnormal{Nec$_\bigcirc$}&&\frac{\displaystyle{{\;\varphi\;}}}{\displaystyle{\bigcirc\varphi}} && \text{}
\end{align*}
\end{definition}

Notice that the Archimedean rule has {\em infinitary} nature in the sense that it allows us to derive a conclusion from a countably infinite set of premises.

\begin{definition}\label{Theorem}
	\begin{itemize}
		\item[1.] A formula $\varphi$ is said to be a {\em theorem} of $\mathcal{H}^{-}_{\mathsf{DPL}}$, or to be {\em derivable} in $\mathcal{H}^{-}_{\mathsf{DPL}}$, (denoted by $\vdash\varphi$) whenever there exists a sequence $\varphi_0, \varphi_1, \dots, \varphi_{\alpha+1}$ ($\alpha$ is a finite or infinite countable ordinal) of the formulas of $\mathcal{L}_\mathsf{DPL}$ with  $\varphi_{\alpha+1} =\varphi$ and for each $\beta \leq{\alpha+1}$,  $\varphi_\beta$ is either an axiom scheme or derived by applying one of the inference rules on some (possibly infinite) preceding formulas of the sequence.
		\item[2.] A formula $\varphi$ is {\em derivable from a set $\Gamma$} of assumptions in $\mathcal{H}^{-}_{\mathsf{DPL}}$ (denoted by $\Gamma  \vdash\varphi$), if there exists a sequence $\varphi_0, \varphi_1, \dots, \varphi_{\alpha+1}$ ($\alpha$ is a finite or infinite countable ordinal) of the formulas of $\mathcal{L}_\mathsf{DPL}$ with $\varphi_{\alpha+1} = \varphi$ and for each $\beta \leq{\alpha+1}$, either $\varphi_\beta$  is a member of $\Gamma$, or a theorem of $\mathcal{H}^{-}_{\mathsf{DPL}}$, or derived by applying one of the inference rules, other than Nec$_{L_1}$ and Nec$_\bigcirc$, on some preceding formulas of the sequence.
		\item[3.] In Parts 1 and 2, the sequence $\varphi_0, \varphi_1, \dots, \varphi_{\alpha+1}$ is called a {\em derivation} (or {\em proof}) {\em of $\varphi$} and a {\em derivation of $\varphi$ from $\Gamma$}, respectively. Further, the length of this sequence is called the {\em length} of that derivation.
	\end{itemize}
\end{definition}

\begin{lemma} \label{prop}
	Let $\varphi$ and $\psi$ be formulas of $\mathcal{L}_{\mathsf{DPL}}$, and $\Gamma$ be a set of formula.
	\begin{itemize}
		\item[1.] $\vdash\!\bigcirc^n (\varphi \to\psi) \leftrightarrow (\bigcirc^n\varphi \to \bigcirc^n\psi)$, for all $n\in \mathbb{N}.$
	    \item[2.] If $\vdash\varphi \to\psi$ then $ \vdash L_r\varphi \to L_r\psi$, for all $r\in \mathbb{Q} \cap [0,1]$.
	\end{itemize}
\end{lemma}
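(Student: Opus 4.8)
The plan is to treat the two parts separately, since Part 2 follows in one step from the monotonicity axiom while Part 1 requires an induction built on the functionality and conjunction axioms for $\bigcirc$.

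For Part 2, I would first apply the rule Nec$_{L_1}$ to the hypothesis $\vdash \varphi \to \psi$ to obtain $\vdash L_1(\varphi \to \psi)$. The axiom scheme Mono gives $\vdash L_1(\varphi \to \psi) \to (L_r\varphi \to L_r\psi)$, so a single application of MP yields the desired conclusion $\vdash L_r\varphi \to L_r\psi$. This argument works uniformly for every $r \in \mathbb{Q}\cap[0,1]$, and no further effort is needed.

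For Part 1 the crux is the base case $n=1$, namely $\vdash \bigcirc(\varphi \to \psi) \leftrightarrow (\bigcirc\varphi \to \bigcirc\psi)$. Reading $\varphi \to \psi$ as $\neg(\varphi \wedge \neg\psi)$, I would chain the two $\bigcirc$-axioms: Func$_\bigcirc$ pushes the outer negation inside to give $\bigcirc\neg(\varphi\wedge\neg\psi)\leftrightarrow\neg\bigcirc(\varphi\wedge\neg\psi)$, Conj$_\bigcirc$ distributes $\bigcirc$ over the conjunction, and a second use of Func$_\bigcirc$ rewrites $\bigcirc\neg\psi$ as $\neg\bigcirc\psi$; collecting these equivalences with propositional reasoning yields the base case. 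From it I would then extract a \emph{replacement} (congruence) principle for $\bigcirc$: given $\vdash\alpha\leftrightarrow\beta$, propositional logic gives $\vdash\alpha\to\beta$ and $\vdash\beta\to\alpha$, the rule Nec$_\bigcirc$ gives $\vdash\bigcirc(\alpha\to\beta)$ and $\vdash\bigcirc(\beta\to\alpha)$, and the base case converts these into $\vdash\bigcirc\alpha\to\bigcirc\beta$ and $\vdash\bigcirc\beta\to\bigcirc\alpha$, hence $\vdash\bigcirc\alpha\leftrightarrow\bigcirc\beta$.

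With the base case and the replacement principle in hand, the general statement follows by induction on $n$. The case $n=0$ is the propositional tautology $(\varphi \to \psi) \leftrightarrow (\varphi \to \psi)$. For the step, I would apply the replacement principle to the induction hypothesis $\vdash \bigcirc^n(\varphi \to \psi) \leftrightarrow (\bigcirc^n\varphi \to \bigcirc^n\psi)$ to get $\vdash \bigcirc^{n+1}(\varphi \to \psi) \leftrightarrow \bigcirc(\bigcirc^n\varphi \to \bigcirc^n\psi)$, and then apply the base case with $\bigcirc^n\varphi$ and $\bigcirc^n\psi$ in place of $\varphi$ and $\psi$ to rewrite the right-hand side as $\bigcirc^{n+1}\varphi \to \bigcirc^{n+1}\psi$. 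I expect the main technical point to be the base-case manipulation through Func$_\bigcirc$ and Conj$_\bigcirc$; once that is carried out cleanly, both the replacement principle and the induction reduce to routine propositional bookkeeping.
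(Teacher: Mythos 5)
Your proposal is correct and follows essentially the same route as the paper, which proves Part 1 by induction on $n$ with the base case $n=1$ derived from Func$_\bigcirc$ and Conj$_\bigcirc$, and Part 2 from Nec$_{L_1}$ together with Mono. You simply supply the details (the $\neg(\varphi\wedge\neg\psi)$ rewriting and the replacement principle for the inductive step) that the paper leaves implicit.
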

\begin{proof}
Item 1, can be proved by induction on $n$ where the induction base for $n=1$ can be shown by the axioms
Func$_\bigcirc$ and Conj$_\bigcirc$. The second item can be proved using the rule Nec$_{L_1}$ and the axiom Mono.
\end{proof}

Subsequently, the following theorems can be easily shown and therefore the proof is omitted.

\begin{theorem}  [Deduction theorem for $\mathsf{DPL}$] \label{De}
Let $\varphi$ be formula of $\mathcal{L}_{\mathsf{DPL}}$, and $\Gamma$ be a set of formula. Then, $\Gamma, \varphi\vdash \psi$ iff  $\Gamma\vdash\varphi\to \psi$.
\end{theorem}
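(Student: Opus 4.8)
The plan is to prove the two directions of the biconditional separately, with the forward implication being the substantive one and carried out by transfinite induction on the length of a derivation; the converse is an immediate consequence of MP.

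For the converse ($\Leftarrow$), assuming $\Gamma \vdash \varphi \to \psi$, I would observe that any derivation from $\Gamma$ is also a derivation from $\Gamma \cup \{\varphi\}$, so $\Gamma, \varphi \vdash \varphi \to \psi$, while $\Gamma, \varphi \vdash \varphi$ holds trivially since $\varphi \in \Gamma \cup \{\varphi\}$. A single application of MP then yields $\Gamma, \varphi \vdash \psi$. Note that MP is among the rules permitted in derivations from a set of assumptions by Part 2 of Definition \ref{Theorem}.

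For the forward direction ($\Rightarrow$), let $\psi_0, \psi_1, \dots, \psi_{\alpha+1}$ be a derivation of $\psi = \psi_{\alpha+1}$ from $\Gamma \cup \{\varphi\}$. I would prove by transfinite induction on $\beta \le \alpha+1$ that $\Gamma \vdash \varphi \to \psi_\beta$; instantiating at $\beta = \alpha+1$ gives the claim. The base and most inductive cases are routine propositional manipulations. If $\psi_\beta$ is an axiom, a theorem of $\mathcal{H}^{-}_{\mathsf{DPL}}$, or a member of $\Gamma$, then $\Gamma \vdash \psi_\beta$, and the tautology $\psi_\beta \to (\varphi \to \psi_\beta)$ together with MP gives $\Gamma \vdash \varphi \to \psi_\beta$. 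If $\psi_\beta = \varphi$, then $\varphi \to \psi_\beta$ is itself a tautology. If $\psi_\beta$ is obtained by MP from earlier $\psi_\gamma$ and $\psi_\gamma \to \psi_\beta$, the induction hypothesis yields $\Gamma \vdash \varphi \to \psi_\gamma$ and $\Gamma \vdash \varphi \to (\psi_\gamma \to \psi_\beta)$, and the tautology $(\varphi \to (\psi_\gamma \to \psi_\beta)) \to ((\varphi \to \psi_\gamma) \to (\varphi \to \psi_\beta))$ combined with two applications of MP delivers $\Gamma \vdash \varphi \to \psi_\beta$.

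The delicate case, and the one I expect to be the main obstacle, is when $\psi_\beta$ is obtained by the infinitary rule Arch$_{\bigcirc^n, r}$. Here $\psi_\beta$ has the form $\chi \to \bigcirc^n L_r \theta$ and is inferred from the countable family $\{\chi \to \bigcirc^n L_s \theta : s < r\}$ of formulas occurring earlier in the derivation. The induction hypothesis gives $\Gamma \vdash \varphi \to (\chi \to \bigcirc^n L_s \theta)$ for every $s < r$. The key maneuver is to \emph{uncurry}: using the propositional equivalence $(\varphi \to (\chi \to \xi)) \leftrightarrow ((\varphi \land \chi) \to \xi)$, I rewrite each premise as $\Gamma \vdash (\varphi \land \chi) \to \bigcirc^n L_s \theta$. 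This is precisely the premise family of Arch$_{\bigcirc^n, r}$ with $\varphi \land \chi$ in the role of the antecedent, so applying that rule — which is admissible in derivations from $\Gamma$, since only Nec$_{L_1}$ and Nec$_\bigcirc$ are excluded — yields $\Gamma \vdash (\varphi \land \chi) \to \bigcirc^n L_r \theta$; re-currying gives $\Gamma \vdash \varphi \to (\chi \to \bigcirc^n L_r \theta)$, as needed. Two points require care: first, that the countably many sub-derivations furnished by the induction hypothesis can be concatenated into a single well-formed derivation of countable ordinal length before the Archimedean rule is applied; and second, that the exclusion of the necessitation rules in Definition \ref{Theorem} is exactly what makes the induction close, since otherwise neither $\varphi \to L_1 \varphi$ nor $\varphi \to \bigcirc \varphi$ is derivable and the theorem would fail.
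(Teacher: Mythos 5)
Your proof is correct and is exactly the standard argument the paper has in mind (the paper itself omits the proof as routine): transfinite induction on the derivation from $\Gamma\cup\{\varphi\}$, with the only nontrivial case being the infinitary rule Arch$_{\bigcirc^n,r}$, which you handle correctly by uncurrying to absorb $\varphi$ into the antecedent and reapplying the rule. Your two closing observations --- that the countably many sub-derivations concatenate into a single derivation of countable ordinal length, and that the exclusion of Nec$_{L_1}$ and Nec$_\bigcirc$ from derivations under assumptions is precisely what makes the induction close --- are the right points to flag and are handled appropriately.
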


\begin{theorem}  [Soundness of $\mathcal{H}^{-}_{\mathsf{DPL}}$] \label{sound}
The proof system $\mathcal{H}^{-}_{\mathsf{DPL}}$ is sound with respect to the class $\mathcal{DMP}$ of all dynamic Markov processes, i.e.   if $\vdash\varphi$  then $\vDash \varphi$.
\end{theorem}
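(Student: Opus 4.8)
The plan is the standard one for soundness, adapted to the infinitary setting: verify that every axiom instance is valid in the class $\mathcal{DMP}$, verify that every inference rule preserves validity, and then conclude that every theorem is valid by induction on the length of a derivation. Because derivations may be transfinite sequences in which a formula is justified by infinitely many earlier formulas (Definition \ref{Theorem}), this concluding induction must be a transfinite induction on the position $\beta\le\alpha+1$ in the derivation rather than an ordinary finite induction. Throughout I use that each truth set $[\![\varphi]\!]_{\mathfrak{M}}$ is measurable, so that $T(w,\cdot)$ may be applied to it; this is a routine induction on the structure of $\varphi$ and is available from the setup in \cite{CP2024}.

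First I would dispatch the axioms by fixing a model $\mathfrak{M}=\langle\Omega,\mathcal{A},T,f,v\rangle$ and a world $w$. Taut holds because the clauses for $\neg$ and $\land$ are classical, so propositional tautologies are true at every world. The probabilistic axioms are verified by reasoning with the probability measure $T(w,\cdot)$: FA$_1$ is immediate from $T(w,\emptyset)=0\ge 0$; FA$_2$ uses $T(w,[\![\varphi]\!])+T(w,[\![\neg\varphi]\!])=1$ together with the hypothesis $r+s>1$; FA$_3$ and FA$_4$ follow from finite additivity applied to the disjoint decomposition $[\![\varphi]\!]=[\![\varphi\land\psi]\!]\sqcup[\![\varphi\land\neg\psi]\!]$, the side condition $r+s\le 1$ guaranteeing that the threshold $r+s$ is admissible; and Mono follows because $T(w,[\![\varphi\to\psi]\!])=1$ forces $T(w,[\![\varphi\land\neg\psi]\!])=0$, whence $T(w,[\![\varphi]\!])=T(w,[\![\varphi\land\psi]\!])\le T(w,[\![\psi]\!])$ by monotonicity of the measure. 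The temporal axioms Func$_\bigcirc$ and Conj$_\bigcirc$ are validated directly from the clause that $\mathfrak{M},w\vDash\bigcirc\chi$ iff $\mathfrak{M},f(w)\vDash\chi$, using that $f$ is a genuine total single-valued function, so that negation and conjunction commute with the ``next'' step.

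Next I would check the rules. MP and the necessitation rules Nec$_{L_1}$, Nec$_\bigcirc$ preserve validity at once: if $\varphi$ is valid then $\mathfrak{M},f(w)\vDash\varphi$ for every $w$, giving $\bigcirc\varphi$, and $T(w,[\![\varphi]\!])=T(w,\Omega)=1$, giving $L_1\varphi$. The substantive case is Arch$_{\bigcirc^n,r}$. Here I would assume every premise $\psi\to\bigcirc^n L_s\varphi$ (for rational $s<r$) is valid, fix $\mathfrak{M},w$ with $\mathfrak{M},w\vDash\psi$, and unwind, via the iterated clause $\mathfrak{M},w\vDash\bigcirc^n\chi$ iff $\mathfrak{M},f^{\,n}(w)\vDash\chi$, so that each premise yields $T(f^{\,n}(w),[\![\varphi]\!])\ge s$. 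Since this holds for every rational $s<r$ and $\mathbb{Q}$ is dense in $\mathbb{R}$, passing to the supremum gives $T(f^{\,n}(w),[\![\varphi]\!])\ge r$, i.e.\ $\mathfrak{M},w\vDash\bigcirc^n L_r\varphi$; hence $\psi\to\bigcirc^n L_r\varphi$ is valid.

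The only genuine obstacle is the bookkeeping forced by the infinitary Archimedean rule. Because a derivation is a transfinite sequence and a step may depend on infinitely many predecessors, the final argument must be organized as a transfinite induction on the derivation index $\beta$: at axiom steps validity comes from the paragraph above, at MP/Nec steps from validity preservation, and at Arch steps from the density (supremum) argument applied to the infinitely many already-valid premises. Once validity is established at every ordinal stage, instantiating $\beta=\alpha+1$ yields $\vDash\varphi$ for the derived theorem $\varphi$, which completes the proof.
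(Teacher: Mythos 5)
Your proof is correct and is exactly the standard argument the paper has in mind: the paper omits the proof of Theorem \ref{sound} as routine, and your verification of each axiom via the probability measure $T(w,\cdot)$ and the functionality of $f$, the density-of-rationals argument for Arch$_{\bigcirc^n,r}$, and the transfinite induction on the derivation index (needed because of the infinitary rule) fill it in faithfully. The explicit attention to measurability of the truth sets $[\![\varphi]\!]_{\mathfrak{M}}$ is a welcome detail that the paper also leaves implicit.
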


As usual, we say that a formula $\varphi$ is {\em consistent} in $\mathcal{H}^{-}_{\mathsf{DPL}}$ if $\nvdash \neg\varphi$. Likewise, a set of formulas $\Gamma$ is  {\em consistent} if $\Gamma\nvdash\perp$.
By deduction theorem the formula $\varphi$ is consistent if and only if  the set $\{\varphi\}$ is so.
Further, $\Gamma$ is {\em finitely consistent} whenever each finite subset of $\Gamma$ is consistent.  Finally,   $\Gamma$ is {\em maximally (finitely) consistent} if it is (finitely) consistent and is not properly included in any (finitely) consistent set of formulas.
\subsection{Weak completeness and decidability} \label{Weakcom}
In this subsection, we verify the weak completeness and decidability of $\mathcal{H}^{-}_{\mathsf{DPL}}$. Some similar results are shown for $\mathsf{PL}$ \cite{Hei, Zhou2009}. Here we use the idea of these results and extended them to  $\mathsf{DPL}$. Indeed, we verify that the methods given by Zhou in \cite{Zhou2009}, also holds for this logic. This is not surprising since the dynamic operator in general
does not have any tight interaction with probability operators that could arise an unpleasant outcome. The following machineries are essential for extending Zhou's methods to the present context.

\begin{definition} \label{depth}
\begin{itemize}
\item[1.]
The {\em probability depth} of a formula $\varphi$, denoted by $p(\varphi)$, is defined inductively as follows:
\begin{align*}
& p(p) := 0 \text{ for any propositional variable } p \in \mathbb{P}; \;\;\;\;\;\;\;\;\;\;\;\;\;\;\;\;\;\;\;\;\;\;\;\;\;\;\;\;\;\;\;\;\;\;\;\;\;\;\;\;\;\;\;\;\;\;\;\;\;\;\;\;\;\;\;\;\;\;\;\;\;\;\;\;\\
& p(\star \varphi)  := p(\varphi) \text{ for } \star\in\{\neg, \bigcirc\};\\
& p(\varphi\land \psi)  := \max{\{pd(\varphi), p(\psi)\}};\\
& p(L_r \varphi)  := p(\varphi) +1.
\end{align*}
\item[2.] Similarly, define the {\em dynamic depth}, $d(\varphi)$ of a formula $\varphi$ as follows:
\begin{align*}
& d(p) := 0 \text{ for any propositional variable } p \in \mathbb{P}; \;\;\;\;\;\;\;\;\;\;\;\;\;\;\;\;\;\;\;\;\;\;\;\;\;\;\;\;\;\;\;\;\;\;\;\;\;\;\;\;\;\;\;\;\;\;\;\;\;\;\;\;\;\;\;\;\;\;\;\;\;\;\;\;\\
& d(\star \varphi)  := d(\varphi) \text{ for } \star\in\{\neg, L_r\};\\
& d(\varphi\land \psi)  := \max{\{d(\varphi),d(\psi)\}};\\
& d(\bigcirc \varphi)  := d(\varphi) +1.
\end{align*}
\end{itemize}
\end{definition}

For a formula $\varphi$ of the form $L_r \theta$, we denote the rational number $r$ as the {\em indice} of formula $\varphi$. For a formula $\psi$ of $\mathcal{L}_\mathsf{DPL}$ and natural numbers $q$, $p$ and $d$, define the {\em local language} $\mathcal{L}(P, q, p, d)$, denoted by $\mathcal{L}[\psi]$, as the largest set of formulas of $\mathcal{L}_\mathsf{DPL}$ that contains $\psi$ and it is closed under the Boolean connectives  and logical equivalence satisfying the following conditions:
\begin{itemize}
	\item[1.] $\mathcal{L}[\psi]$ contains the set $P$ of all propositional variables occurring in $\psi$;
	\item[2.] $q$ is the least common multiple of all denominators of the indices
	of formulas in $\mathcal{L}[\psi]$ and so the indices of all formulas are multiples of $1/q$;
	\item[3.] The formulas in $\mathcal{L}[\psi]$ have the probability and dynamic depth at most $p$ and $d$, respectively.
\end{itemize}

In the above definition, $q$ is called the {\em accuracy} of the language $\mathcal{L}[\psi]$. Put $I[\psi]$ to be the set of all rational numbers of the form of $m/q \in [0,1]$. Call this set as the {\em index set} of $\mathcal{L}[\psi]$. By induction on the probability and dynamic depth of $\psi$, it can be shown that  up to logical equivalence, $\mathcal{L}[\psi]$ is finite.

In the following, a language $\mathcal{L}$ is  meant to be either the whole language $\mathcal{L}_\mathsf{DPL}$ or a finite language of the form $\mathcal{L}(P, q, p, d)$ for some finite set $P$ of propositional variables, and natural numbers $q$, $p$ and $d$. We say that a finite language $\mathcal{L}_1 =\mathcal{L}_1(P_1, q_1, p_1, d_1)$ is {\em more accurate} than the finite language $\mathcal{L}_2 = \mathcal{L}_2(P_2, q_2, p_2, d_2)$ if the accuracy $q_1$ of $\mathcal{L}_1$ is a multiple of the accuracy $q_2$ of $\mathcal{L}_2$.  Moreover, we say that the language $\mathcal{L}_1$ is {\em strictly more accurate} than the language $\mathcal{L}_2$ if $q_1=m\cdot q_2$ for some natural number $m \geq 2$.

We let $\mathcal{L}^+[\psi]$ be the language $\mathcal{L}(P, q, p+1, d)$.
Recall that $w \subseteq \mathcal{L}[\psi]$ is {\em maximally consistent in the language $\mathcal{L}[\psi]$} if it is consistent and no subset of $\mathcal{L}[\psi]$ properly containing $w$ is consistent.  Clearly, any consistent subset $w$ of $\mathcal{L}[\psi]$, is also consistent in $\mathcal{L}^+[\psi]$.
But since $\mathcal{L}^+[\psi]$ is finite, $w$ can be extended to a maximally consistent  subset $w^+$ of $\mathcal{L}^+[\psi]$. Moreover, by a Lindenbaum-like argument,
$w$ can be extended to maximally finitely consistent $w^\infty$ in $\mathcal{L}_\mathsf{DPL}$ extending $w^+$ (see also \cite[Lemma 3.9]{Zhou2009}).

\begin{definition} [Canonical model with respect to $\psi$] A {\em canonical model  with respect to a formula $\psi$} is a tuple $\mathfrak{M}_\psi = \langle{\Omega_\psi, \mathcal{A}_\psi, T_\psi, f_\psi, v_\psi}\rangle$ where
	\begin{itemize}
		\item[1.] $\Omega_\psi$ is  the set  of  all  maximally consistent subsets of $\mathcal{L}[\psi]$;
		\item[2.] $\mathcal{A}_\psi =\{[\varphi]\;|\;\varphi\in \mathcal{L}[\psi]\}$ where $[\varphi]=\{w\in\Omega_\psi\;|\;\varphi\in w \}$;
		\item[3.]  $T_\psi$ is a  function defined for any $w\in \Omega_\psi$  and $\varphi\in \mathcal{L}[\psi]$ by
		\begin{align*}
			{T_\psi}(w)([\varphi])  & =  \sup{\{r\in \mathbb{Q} \cap [0,1] \;|\; L_r\varphi\in w^\infty\}};			
		\end{align*}
		\item[4.] $f_\psi$ is a function defined for any $w\in \Omega_\psi$ by $f_\psi(w) = \{ \varphi\in \mathcal{L}[\psi] \;|\;\bigcirc\varphi\in w\}$;
		\item[5.] $v_\psi$ is a valuation defined by $v_\psi(p) = \{ w\in \Omega_\psi \;|\;p\in w\}$, for each propositional variable $p \in \mathcal{L}[\psi]$.
	\end{itemize}
\end{definition}

Notice that the item 3 above depends on the choice of $w^\infty$, and therefore distinct canonical models might exist. Furthermore, it is not hard to show that
\[{T_\psi}(w)([\varphi]) = \inf{\{r\in \mathbb{Q} \cap [0,1] \;|\; \neg L_r\varphi\in w^\infty\}}.\]

Note that  the finiteness of the set $\mathcal{L}[\psi]$ implies the finiteness of any $w\in \Omega_\psi$.
Moreover,  for every formula $\varphi \in \mathcal{L}[\psi]$, either  $\varphi \in w$ or $\neg\varphi\in w$. Hence, if we take $\varphi_w$ as the conjunction of all formula of $w$, then $\varphi_w\in  \mathcal{L}[\psi]$ and
$[\varphi_w] = \{w\}$. Thus, if $A\subseteq \Omega_\psi$, then $A= [\varphi_A]$ for $\varphi_A = \bigvee_{w\in A} \varphi_w$. So we have  $2^{\Omega_\psi} =\{[\varphi]\;|\;\varphi\in \mathcal{L}[\psi]\}$.

\begin{lemma}[Truth lemma]
 $\mathfrak{M}_\psi$ is a dynamic Markov model. Furthermore, for any formula $\varphi \in \mathcal{L}[\psi]$ and $w \in \Omega_\psi$,
\[\mathfrak{M}_\psi, w\vDash \varphi \text{ iff } \varphi\in w.\]
\end{lemma}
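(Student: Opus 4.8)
The plan is to establish the two claims in turn: that $\mathfrak{M}_\psi$ is a bona fide dynamic Markov model, and then the biconditional $\mathfrak{M}_\psi, w \vDash \varphi \Leftrightarrow \varphi \in w$ by induction on the structure of $\varphi$. Throughout I would exploit that $\Omega_\psi$ is finite and, as already noted, $\mathcal{A}_\psi = 2^{\Omega_\psi}$; on such a discrete space every function is measurable, so the measurability of $f_\psi$ and of each map $w \mapsto T_\psi(w)(A)$ is automatic, and finite additivity of $T_\psi(w)$ coincides with countable additivity.

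For the first claim I would first check that $T_\psi(w)$ is well defined on $\mathcal{A}_\psi$, i.e. that its value depends only on the set $[\varphi]$ and not on the representative $\varphi$; this uses the closure of $\mathcal{L}[\psi]$ under logical equivalence together with the second part of Lemma~\ref{prop}. I would then verify that $T_\psi(w)(\cdot)$ is a probability measure: $T_\psi(w)([\top]) = 1$ from $\vdash L_1\top$ (via Nec$_{L_1}$), $T_\psi(w)([\bot]) = 0$ from FA$_1$ together with $\vdash \neg L_b\bot$ for $b>0$ (the latter a consequence of FA$_2$ and $\vdash L_1\top$), and additivity on disjoint classes from FA$_3$ and FA$_4$. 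Finally, to see $f_\psi(w) \in \Omega_\psi$ I would show it is consistent --- if $\bigcirc\chi_1, \dots, \bigcirc\chi_k \in w$ then Conj$_\bigcirc$ yields $\bigcirc(\chi_1 \wedge \cdots \wedge \chi_k) \in w$, and an inconsistency among the $\chi_i$ would, through Nec$_\bigcirc$ and Func$_\bigcirc$, contradict consistency of $w$ --- and that it decides every formula whose $\bigcirc$-image lies in $\mathcal{L}[\psi]$, again by Func$_\bigcirc$.

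The induction itself is routine except at the probability modality. The atomic case is immediate from the definition of $v_\psi$, and the cases of $\neg$ and $\wedge$ are the usual appeals to maximal consistency of $w$ in $\mathcal{L}[\psi]$. For $\bigcirc\varphi$ I would use that $\mathfrak{M}_\psi, w \vDash \bigcirc\varphi$ iff $\mathfrak{M}_\psi, f_\psi(w) \vDash \varphi$, apply the induction hypothesis to $\varphi$ at the state $f_\psi(w)$, and close the case via $\varphi \in f_\psi(w) \Leftrightarrow \bigcirc\varphi \in w$ from the definition of $f_\psi$; here $\varphi$ has strictly smaller dynamic depth than $\bigcirc\varphi$, so the induction hypothesis applies and the behaviour of $f_\psi(w)$ at the top dynamic depth $d$ is irrelevant.

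The case $\varphi = L_r\theta$ is the crux. Applying the induction hypothesis to $\theta$ gives $[\![\theta]\!]_{\mathfrak{M}_\psi} = [\theta]$, so $\mathfrak{M}_\psi, w \vDash L_r\theta$ iff $T_\psi(w)([\theta]) = \sup\{s \in \mathbb{Q}\cap[0,1] : L_s\theta \in w^\infty\} \geq r$. Since $L_r\theta \in \mathcal{L}[\psi]$ and $w$ is maximal there, $L_r\theta \in w$ iff $L_r\theta \in w^\infty$, so it suffices to match the supremum with membership in $w^\infty$. Writing $S = \{s \in \mathbb{Q}\cap[0,1] : L_s\theta \in w^\infty\}$ and $\rho = \sup S$, the set $S$ is downward closed by the derivable index-monotonicity $\vdash L_s\theta \to L_{s'}\theta$ for $s' \le s$ (obtained from FA$_4$), hence an initial segment of the rationals. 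For $r < \rho$ downward closure immediately gives $r \in S$; the delicate point is the boundary $r = \rho$, where every $s < r$ lies in $S$ but one still needs $r \in S$. Here the infinitary Archimedean rule Arch$_{\bigcirc^n, r}$ (with $n=0$ and $\psi = \top$) derives $L_r\theta$ from $\{L_s\theta : s < r\}$, and the defining closure property of the maximally finitely consistent set $w^\infty$ (the Lindenbaum-style construction, cf. \cite[Lemma 3.9]{Zhou2009}) forces $L_r\theta \in w^\infty$. Thus $r \in S \Leftrightarrow r \le \rho \Leftrightarrow T_\psi(w)([\theta]) \ge r$, which finishes the case. The main obstacle is precisely this boundary step: isolating where the Archimedean rule is needed and confirming that $w^\infty$ is closed under it; the well-definedness of $f_\psi$ at dynamic depth $d$ is a second, milder subtlety, handled as above.
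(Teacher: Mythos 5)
Your route --- check the model conditions via the finiteness of $\Omega_\psi$ and the axioms FA$_1$--FA$_4$, Mono, Func$_\bigcirc$, Conj$_\bigcirc$, then a structural induction whose crux is the $L_r\theta$ case at the boundary $r=\sup S$ --- is exactly the argument the paper intends (its own proof is a one-line deferral to Zhou's Lemma 3.18), and your treatment of the extra $\bigcirc$ case is the right supplement. Two steps, however, do not close as written. First, your verification that $f_\psi(w)\in\Omega_\psi$ only shows that $f_\psi(w)$ decides those $\varphi$ whose $\bigcirc$-image lies in $\mathcal{L}[\psi]$. If $\varphi\in\mathcal{L}[\psi]$ has dynamic depth exactly $d$, then neither $\bigcirc\varphi$ nor $\bigcirc\neg\varphi$ belongs to $\mathcal{L}[\psi]$, hence neither $\varphi$ nor $\neg\varphi$ belongs to $f_\psi(w)$, so $f_\psi(w)$ is not maximally consistent in $\mathcal{L}[\psi]$ and is not literally an element of $\Omega_\psi$. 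To assert that $\mathfrak{M}_\psi$ is a dynamic Markov model you must repair this, e.g.\ by redefining $f_\psi(w)$ as a maximal consistent extension in $\mathcal{L}[\psi]$ of $\{\varphi\in\mathcal{L}[\psi]\;|\;\bigcirc\varphi\in w\}$ (which exists since the language is finite and that set is consistent, as you showed). The $\bigcirc$ step of the induction survives this change: for $\bigcirc\varphi\in\mathcal{L}[\psi]$ one still has $\varphi\in f_\psi(w)$ iff $\bigcirc\varphi\in w$, since if $\varphi$ were added to the extension while $\neg\bigcirc\varphi\in w$, Func$_\bigcirc$ would put $\bigcirc\neg\varphi$ in $w$ and hence $\neg\varphi$ in the set being extended, contradicting consistency.

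Second, closure of $w^\infty$ under the Archimedean rule is \emph{not} a consequence of maximal finite consistency: the set $\{L_s\theta \;|\; s<r\}\cup\{\neg L_r\theta\}$ is finitely consistent, so a generic maximally finitely consistent extension of $w^+$ may contain every $L_s\theta$ with $s<r$ together with $\neg L_r\theta$, which would break your boundary step. The needed property must be arranged in the Lindenbaum-style construction of $w^\infty$ (this is precisely the content of Zhou's Lemma 3.9). Your parenthetical points to that construction, but the phrase ``the defining closure property of the maximally finitely consistent set'' suggests the closure comes for free, which it does not; make explicit that $w^\infty$ is chosen to have the Archimedean property. With these two repairs the proof is complete and coincides with the paper's intended argument.
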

\begin{proof}
The proof is similar to the proof of Lemma 3.18 in \cite{Zhou2009}.
\end{proof}

\begin{theorem}[Weak completeness of $\mathcal{H}^{-}_{\mathsf{DPL}}$] \label{com}
The proof system $\mathcal{H}^{-}_{\mathsf{DPL}}$ is weakly complete with respect to the class of all dynamic Markov processes, i.e. if $\vDash \varphi \text{ then } \vdash \varphi$.
\end{theorem}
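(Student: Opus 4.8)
The plan is to prove weak completeness by contraposition: assuming $\nvdash\varphi$, I will construct a dynamic Markov model and a world satisfying $\neg\varphi$, thereby refuting $\vDash\varphi$. The key enabling observation is that the finite local language machinery (Definition \ref{depth} onward) lets me work entirely within a finite fragment tailored to $\varphi$ itself, so that no infinitary application of the Archimedean rule is actually needed to witness inconsistency.

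Concretely, suppose $\nvdash\varphi$, so $\neg\varphi$ is consistent. Let $\psi := \neg\varphi$ and consider the finite local language $\mathcal{L}[\psi]$ whose parameters $(P,q,p,d)$ are read off from $\psi$: $P$ is the set of propositional variables in $\psi$, $q$ is the least common multiple of the denominators of the indices appearing, and $p$, $d$ are the probability and dynamic depths of $\psi$ respectively. Since $\neg\varphi$ is consistent and $\neg\varphi\in\mathcal{L}[\psi]$, it extends to some maximally consistent subset $w$ of $\mathcal{L}[\psi]$; thus $w\in\Omega_\psi$ and $\neg\varphi\in w$, equivalently $\varphi\notin w$. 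I then build the canonical model $\mathfrak{M}_\psi=\langle\Omega_\psi,\mathcal{A}_\psi,T_\psi,f_\psi,v_\psi\rangle$ with respect to $\psi$. By the Truth lemma, $\mathfrak{M}_\psi$ is a genuine dynamic Markov model and, since $\varphi\in\mathcal{L}[\psi]$ and $\varphi\notin w$, we obtain $\mathfrak{M}_\psi,w\nvDash\varphi$. Because $\Omega_\psi$ is finite (up to logical equivalence $\mathcal{L}[\psi]$ is finite), this directly furnishes a finite countermodel, yielding both weak completeness and, en route, the finite model property.

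The main work — and the step I expect to be the principal obstacle — is ensuring the canonical objects are well-defined and that $\mathfrak{M}_\psi$ really is a dynamic Markov model, which is exactly the content folded into the Truth lemma. Two points deserve care. First, $T_\psi(w)(\cdot)$ must be a genuine finitely additive probability measure on $\mathcal{A}_\psi$; this is where axioms FA$_1$ through FA$_4$ and Mono are invoked, using the $w^\infty$ extension so that the supremum defining $T_\psi$ behaves coherently on the finite Boolean algebra $2^{\Omega_\psi}$. Since the state space is finite, finite additivity suffices for a bona fide measure, so no $\sigma$-additivity subtleties arise. Second, $f_\psi$ must be well-defined as a map $\Omega_\psi\to\Omega_\psi$, which relies on the functionality and conjunction axioms Func$_\bigcirc$ and Conj$_\bigcirc$ (together with Lemma \ref{prop}(1)) to guarantee that $f_\psi(w)=\{\varphi\in\mathcal{L}[\psi]\mid\bigcirc\varphi\in w\}$ is itself a maximally consistent subset of $\mathcal{L}[\psi]$ and that $\bigcirc$ is interpreted as composition with $f_\psi$.

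Since the excerpt supplies the Truth lemma with the remark that its proof parallels \cite[Lemma 3.18]{Zhou2009}, my proof of the theorem reduces to the contrapositive argument assembled above: extend $\neg\varphi$ to a maximally consistent $w\in\Omega_\psi$, apply the Truth lemma to conclude $\mathfrak{M}_\psi,w\nvDash\varphi$, and invoke soundness-style reasoning only implicitly — the existence of the model is what matters. The delicate interplay is confined to verifying the measure and the dynamic map, precisely the verifications that the local-language finiteness tames; the dynamic operator $\bigcirc$, as the authors remark, does not interact tightly with the probability modalities, so the probability and temporal parts of the construction can be handled essentially independently.
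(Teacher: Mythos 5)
Your argument is exactly the paper's proof: contrapose, set $\psi=\neg\varphi$, extend $\psi$ to a maximally consistent $w\subseteq\mathcal{L}[\psi]$, and apply the Truth lemma to the canonical model $\mathfrak{M}_\psi$ to conclude $\mathfrak{M}_\psi,w\vDash\psi$ and hence $\nvDash\varphi$. The additional verifications you flag (that $T_\psi$ is a finitely additive measure and $f_\psi$ maps $\Omega_\psi$ to $\Omega_\psi$) are, as you note, exactly the content the paper folds into the Truth lemma, so the two proofs coincide.
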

\begin{proof}
Suppose that $\nvdash \varphi$. Then $\psi = \neg \varphi$ is consistent in $\mathcal{H}^{-}_{\mathsf{DPL}}$ and so it is contained in some maximally
consistent set $w\subseteq\mathcal{L}[\psi]$. So by truth lemma, $\mathfrak{M}_\psi, w \vDash \psi$. Hence, $\nvDash \varphi$.
\end{proof}

\begin{corollary}[Finite model property] \label{fmp}
If $\varphi\in \mathcal{L}_\mathsf{DPL}$ is satisfiable, then it is satisfiable in a finite model.
\end{corollary}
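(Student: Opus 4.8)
The plan is to extract the required finite model directly from the canonical model $\mathfrak{M}_\psi$ of Theorem~\ref{com}, with the choice $\psi := \varphi$. The whole point of that construction is that it takes place inside the \emph{finite} local language $\mathcal{L}[\varphi]$, so its state space $\Omega_\varphi$ is automatically finite; it therefore suffices to place $\varphi$ into some world of $\mathfrak{M}_\varphi$ and invoke the Truth Lemma.

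First I would move from satisfiability to consistency. If $\varphi$ is satisfiable, say $\mathfrak{M}, w \vDash \varphi$ for some model, then $\neg\varphi$ cannot be derivable: otherwise soundness (Theorem~\ref{sound}) would give $\vDash \neg\varphi$, contradicting $\mathfrak{M}, w \vDash \varphi$. Hence $\nvdash \neg\varphi$, so $\varphi$ is consistent and $\{\varphi\}$ is a consistent subset of $\mathcal{L}[\varphi]$.

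Next I would extend $\{\varphi\}$ to a maximally consistent subset $w_0 \subseteq \mathcal{L}[\varphi]$; this is immediate because $\mathcal{L}[\varphi]$ is finite up to logical equivalence, so one simply adds formulas one at a time. Then $w_0 \in \Omega_\varphi$ with $\varphi \in w_0$, and the Truth Lemma yields $\mathfrak{M}_\varphi, w_0 \vDash \varphi$; the same lemma guarantees that $\mathfrak{M}_\varphi$ is a genuine dynamic Markov model. Finally, the finiteness of $\mathcal{L}[\varphi]$ up to logical equivalence forces $\Omega_\varphi$ to be finite, and on a finite state space the $\sigma$-algebra $\mathcal{A}_\varphi = 2^{\Omega_\varphi}$, the kernel $T_\varphi$, and the function $f_\varphi$ raise no measurability issues. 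Thus $\mathfrak{M}_\varphi$ is a finite model in which $\varphi$ holds at $w_0$.

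I do not anticipate a genuine obstacle: this corollary is essentially a repackaging of the weak-completeness proof, the only real content being the finiteness of $\Omega_\varphi$, which has already been secured by the bound on the size of $\mathcal{L}[\varphi]$ up to logical equivalence. The single step worth stating with care is the passage from ``satisfiable'' to ``consistent'' via soundness, after which the Truth Lemma closes the argument.
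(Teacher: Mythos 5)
Your proposal is correct and follows exactly the route the paper intends: the corollary is read off from the weak-completeness construction by passing from satisfiability to consistency via soundness (Theorem~\ref{sound}), extending $\{\varphi\}$ to a maximally consistent subset of the finite local language $\mathcal{L}[\varphi]$, and applying the Truth Lemma in the canonical model $\mathfrak{M}_\varphi$, whose state space is finite because $\mathcal{L}[\varphi]$ is finite up to logical equivalence. Nothing essential differs from the paper's argument.
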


Next, we verify that $\mathcal{H}^{-}_{\mathsf{DPL}}$ is decidable. In the case of probability logic, this is accomplished by translating probability formulas into linear inequalities and using the Fourier-Motzkin elimination method, see \cite[Section 5]{Zhou2009}. In the current situation when the next operator is also present,  using this method is more elaborate. Therefore one should be more careful in defining the linear inequalities corresponding to formulas.

Fix a formula $\psi$ of $\mathcal{L}_\mathsf{DPL}$ and consider $\mathcal{L}[\psi]= \mathcal{L}(P, q, p, d)$ ,    $\mathcal{L}^+[\psi]= \mathcal{L}(P, q, p+1, d)$ and the set of all indices $I[\psi]$  defined as above. One can easily see that any formula in  $ \mathcal{L}^+[\psi]$  is logically equivalent to a disjunction of formulas of the  following form:
\[\bigwedge_k {\bigcirc}^{n_k} p_{i_k} \wedge \bigwedge_{k'} \neg {\bigcirc}^{n_{k'}}  p_{i_{k'}}   \wedge \bigwedge_{l}  {\bigcirc}^{n_{l}} L_{r_{l}} \theta_{j_{l}} \wedge \bigwedge_{l'}  \neg {\bigcirc}^{n_{l'}} L_{r_{l'}} \theta_{j_{l'}}\;\;\;\;(\star) \]
where $ p_{i_k},  p_{i_{k'}}\in P$, $\theta_{j_{l}},  \theta_{j_{l'}} \in \mathcal{L}[\psi]$ and $n_k, n_{k'}, n_{l}, n_{l'}\in \mathbb{N}$, with the convention that $\bigcirc^0 \gamma =\gamma$,  for each formula $\gamma$. Hence, formulas of  the form $(\star)$ can be written as
\[\bigwedge_k {\bigcirc}^{n_k} p_{i_k} \wedge \bigwedge_{k'} \neg {\bigcirc}^{n_{k'}}  p_{i_{k'}}
\wedge \bigwedge_{m}   L_{r_{m}} \theta_{j_{m}} \wedge \bigwedge_{m'}  \neg  L_{r_{m'}} \theta_{j_{m'}}
 \wedge \bigwedge_{l}  {\bigcirc}^{n_{l}} L_{r_{l}} \theta_{j_{l}} \wedge \bigwedge_{l'}  \neg {\bigcirc}^{n_{l'}} L_{r_{l'}} \theta_{j_{l'}} \;\;\;\;(\star\star)\]
where $n_{l}$ and $ n_{l'}$ are positive natural numbers. Now in order to implement the analysis mentioned above, we should eliminate the last two conjuncts from the   formulas of the form $(\star\star)$. To achieve this, we will introduce some new propositional variables.

\begin{definition}
We say that a formula in $\varphi \in \mathcal{L}^+[\psi]$ is in {\em normal form} if it can be written as
\[\bigwedge_k {\bigcirc}^{n_k} p_{i_k} \wedge \bigwedge_{k'} \neg {\bigcirc}^{n_{k'}}  p_{i_{k'}}
\wedge \bigwedge_{m}   L_{r_{m}} \theta_{j_{m}} \wedge \bigwedge_{m'}  \neg  L_{r_{m'}} \theta_{j_{m'}}\]
where $ p_{i_k},  p_{i_{k'}}\in P$, $\theta_{j_{m}},  \theta_{j_{m'}} \in \mathcal{L}[\psi]$ and $n_k, n_{k'}\in \mathbb{N}$.
\end{definition}

\begin{lemma}\label{normal}
For a given formula  $\psi \in \mathcal{L}_{\mathsf{DPL}}$, there exists a finite set $\mathcal{Q}$ of  new propositional variables, depending on the local language $\mathcal{L}[\psi]$, such that each formula $\varphi \in  \mathcal{L}^+[\psi]$ is   logically equivalent to a finite disjunctions of normal  form formulas in
$\mathcal{L}_\mathcal{Q}^+[\psi]= \mathcal{L}^+(P\cup\mathcal{Q}, q, p+1, d)$.
\end{lemma}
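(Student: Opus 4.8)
The plan is to eliminate the offending conjuncts $\bigcirc^{n_l}L_{r_l}\theta_{j_l}$ and $\neg\bigcirc^{n_{l'}}L_{r_{l'}}\theta_{j_{l'}}$ (with $n_l,n_{l'}\ge 1$) from formulas of the form $(\star\star)$ by abbreviating each probability subformula occurring under a next operator by a fresh propositional variable. Concretely, I would set $\mathcal{Q}=\{q_{L_r\theta} : \theta\in\mathcal{L}[\psi],\ r\in I[\psi]\}$, choosing one representative per logical-equivalence class; since $\mathcal{L}[\psi]$ is finite up to logical equivalence and $I[\psi]$ is finite, $\mathcal{Q}$ is finite and depends only on $\mathcal{L}[\psi]$, as required. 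Each $q_{L_r\theta}$ is to be interpreted as the measurable set $[\![L_r\theta]\!]$; equivalently, we restrict attention to the models validating every biconditional $q_{L_r\theta}\leftrightarrow L_r\theta$. Every model over $\mathcal{L}^+[\psi]$ extends uniquely to such a model by reading off the valuation of the new variables, so this restriction preserves satisfiability and fixes the (conservative, definitional) extension in which the asserted logical equivalence is to be understood.

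With $\mathcal{Q}$ fixed, I would first invoke the reduction established just before the lemma, namely that every $\varphi\in\mathcal{L}^+[\psi]$ is logically equivalent to a finite disjunction of formulas of the form $(\star\star)$. Then, in each such disjunct, I replace every conjunct $\bigcirc^{n_l}L_{r_l}\theta_{j_l}$ by $\bigcirc^{n_l}q_{L_{r_l}\theta_{j_l}}$ and every $\neg\bigcirc^{n_{l'}}L_{r_{l'}}\theta_{j_{l'}}$ by $\neg\bigcirc^{n_{l'}}q_{L_{r_{l'}}\theta_{j_{l'}}}$. After this substitution the $l$- and $l'$-conjuncts have the shape $\bigcirc^{n}q$ with $q$ a propositional variable, so they merge into the first two groups of the normal form and each disjunct becomes a single normal-form formula. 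I would also carry out the bookkeeping on depths: the new conjuncts have probability depth $0$ and dynamic depth $n_l\le d$, while the surviving conjuncts $L_{r_m}\theta_{j_m}$ keep probability depth $\le p+1$ and dynamic depth $\le d$; hence the whole disjunction lives in $\mathcal{L}^+(P\cup\mathcal{Q}, q, p+1, d)=\mathcal{L}_\mathcal{Q}^+[\psi]$, and finiteness of the disjunction is inherited from the finiteness of the $(\star\star)$-decomposition.

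It remains to justify that the substitution is truth-preserving, and this is where the dynamic operator demands care. The point is to show $\bigcirc^{n}L_r\theta\leftrightarrow\bigcirc^{n}q_{L_r\theta}$ over the class of admissible models. Fixing such a model $\mathfrak{M}'$ and a world $w$, and noting that $\theta$ and $L_r\theta$ do not mention the new variables (so their extensions are unchanged), the biconditional $q_{L_r\theta}\leftrightarrow L_r\theta$ holds not merely at $w$ but at every world, in particular at the future world $f^{n}(w)$; thus $\mathfrak{M}',w\vDash\bigcirc^{n}(q_{L_r\theta}\leftrightarrow L_r\theta)$, and Lemma \ref{prop}(1) (distribution of $\bigcirc^{n}$ over the biconditional, which rests on Func$_\bigcirc$ and Conj$_\bigcirc$) yields $\mathfrak{M}',w\vDash\bigcirc^{n}q_{L_r\theta}\leftrightarrow\bigcirc^{n}L_r\theta$. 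Applying this to each $l$- and $l'$-conjunct shows that every $(\star\star)$-disjunct is equivalent to its substituted normal form on admissible models, and hence so is $\varphi$.

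The main obstacle I anticipate is precisely this last step: the naive replacement of $L_r\theta$ by $q_{L_r\theta}$ is sound only because the governing biconditional is forced to hold globally — at all worlds, including the temporally shifted ones reached by $f^{n}$ — rather than merely at the evaluation world. Making this explicit, and thereby pinning down the exact sense in which ``logically equivalent'' is meant (over the definitional extension in which each $q_{L_r\theta}$ denotes $[\![L_r\theta]\!]$, a passage that preserves satisfiability), is the crux; once it is in place, the depth bookkeeping and the merging of conjuncts into normal form are routine.
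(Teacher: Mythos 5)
Your construction uses the same fresh-variable device as the paper --- one new propositional variable for each pair $(r,\theta)$ with $r\in I[\psi]$ and $\theta\in\mathcal{L}[\psi]$, used to abbreviate the probability subformulas occurring under $\bigcirc$ --- but you enforce the meaning of the new variables differently, and the difference is substantive. The paper conjoins the biconditionals $L_{r}\theta\leftrightarrow Q_{r,\theta}$ to the formula itself, so they are required to hold only at the evaluation world; you instead pass to the definitional extension in which $q_{L_r\theta}$ denotes $[\![L_r\theta]\!]$, so the biconditional holds at \emph{every} world, in particular at $f^{n}(w)$. Your version is the one that actually licenses replacing $\bigcirc^{n}L_r\theta$ by $\bigcirc^{n}q_{L_r\theta}$ for $n\geq 1$: with a biconditional imposed only at $w$, nothing constrains $q_{L_r\theta}$ at the shifted world, and for instance $\bigcirc L_1 p\wedge\neg\bigcirc L_{1/2}p$ (unsatisfiable, since $L_1p\to L_{1/2}p$ is valid) is sent by the purely local recipe to a satisfiable formula, so the local biconditionals do not even preserve satisfiability. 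In that sense your proposal identifies and repairs the weak point of the printed argument, and your closing remark about the crux is exactly right. The price is that ``logically equivalent'' must be read relative to the restricted class of admissible models, and the resulting global constraints tying the new variables to one another (e.g.\ $q_{L_1p}\to q_{L_{1/2}p}$ valid) are not recorded inside the normal-form formulas themselves; this is harmless for the lemma as you state it, but it matters downstream in Theorem \ref{decid}, where the temporal part $\gamma$ is handed to a pure temporal satisfiability checker that treats the $q$'s as free propositional variables. A sentence making that dependency explicit (either by adding the induced constraints to $\gamma$ or by arguing they can be discharged) would complete the picture.
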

\begin{proof}
For $ r\in I[\psi]$ and $\theta\in \mathcal{L}[\psi]$, introduce a new propositional variable $Q_{r, \theta}$ and let
$\mathcal{Q}= \{Q_{r, \theta} |\; r\in I[\psi],\;\theta\in \mathcal{L}[\psi]\}$. We will show our result for this set $\mathcal{Q}$.
To prove the statement, it is sufficient to show the claim for the formulas of the form $(\star\star)$.  It is clear that
any formula of the form  $(\star\star)$ is logically equivalent  to the formula
\begin{align*}
& \bigwedge_k {\bigcirc}^{n_k} p_{i_k} \wedge \bigwedge_{k'} \neg {\bigcirc}^{n_{k'}}  p_{i_{k'}}
\wedge \bigwedge_{m}   L_{r_{m}} \theta_{j_{m}}  \wedge \bigwedge_{l'}  \neg L_{r_{m'}} \theta_{j_{m'}} \\
\wedge &  \bigwedge_l {\bigcirc}^{n_l} Q_{{r_{l}}, \theta_{j_{l}}} \wedge \bigwedge_{l'} \neg {\bigcirc}^{n_{l'}}  Q_{{r_{l'}}, \theta_{j_{l'}}}\\
\wedge & \bigwedge_l (L_{{r_{l}}} \theta_{j_{l}} \leftrightarrow Q_{{r_{l}}, \theta_{j_{l}}}) \wedge \bigwedge_{l'}  (L_{{r_{l'}}} \theta_{j_{l'}} \leftrightarrow Q_{{r_{l'}}, \theta_{j_{l'}}})
\end{align*}
But it is clear that this formula can be written as disjunction of  normal form formulas in $\mathcal{L}_\mathcal{Q}^+[\psi]$, as desired.
\end{proof}

Next, we show the decidability of $\mathcal{H}^{-}_{\mathsf{DPL}}$.
	
\begin{theorem}  \label{decid}
$\mathcal{H}^{-}_{\mathsf{DPL}}$ is decidable.
\end{theorem}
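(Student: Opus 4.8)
The plan is to reduce the decidability of $\mathcal{H}^{-}_{\mathsf{DPL}}$ to the satisfiability problem for normal form formulas, which by Lemma~\ref{normal} contain no nested occurrences of $\bigcirc^n L_r$ as outermost-blocked modalities, and then to encode satisfiability of such formulas as the feasibility of a finite system of linear inequalities over the rationals. First I would observe that, by weak completeness (Theorem~\ref{com}), a formula $\varphi$ is a theorem iff $\neg\varphi$ is unsatisfiable; since $\vdash\varphi \Leftrightarrow \nvdash \neg(\neg\varphi)$ and the logic is closed under the constructions above, it suffices to give an algorithm deciding, for an arbitrary $\varphi \in \mathcal{L}_\mathsf{DPL}$, whether $\varphi$ is satisfiable. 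Given $\varphi$, I would compute the local language $\mathcal{L}[\psi]$ and its finite parameters $(P,q,p,d)$, which are all effectively extractable from the syntax of $\varphi$, and pass to $\mathcal{L}^+[\psi]$ and the new variables $\mathcal{Q}$ of Lemma~\ref{normal}.

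Next, using Lemma~\ref{normal} I would effectively rewrite $\varphi$ as a finite disjunction of normal form formulas; since satisfiability distributes over disjunction, it is enough to decide satisfiability of a single normal form conjunction
\[
\bigwedge_k {\bigcirc}^{n_k} p_{i_k} \wedge \bigwedge_{k'} \neg {\bigcirc}^{n_{k'}}  p_{i_{k'}}
\wedge \bigwedge_{m}   L_{r_{m}} \theta_{j_{m}} \wedge \bigwedge_{m'}  \neg  L_{r_{m'}} \theta_{j_{m'}}.
\]
The propositional and temporal conjuncts $\bigcirc^{n_k} p_{i_k}$ and $\neg\bigcirc^{n_{k'}} p_{i_{k'}}$ only constrain the time-stages of the dynamic process, so I would handle them by bookkeeping over the finitely many relevant stages $0,1,\dots,d$: functionality (Func$_\bigcirc$) guarantees each stage is a single world, so these conjuncts are simultaneously satisfiable iff no variable is required both true and false at the same stage, a trivially decidable check. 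The probabilistic conjuncts $L_{r_m}\theta_{j_m}$ and $\neg L_{r_{m'}}\theta_{j_{m'}}$ are the essential content. Following Zhou's translation in \cite[Section~5]{Zhou2009}, I would introduce a real variable $x_w$ for each maximally consistent $w\subseteq\mathcal{L}[\psi]$ (finitely many, by finiteness of $\mathcal{L}[\psi]$) representing the probability mass $T(\cdot)(\{w\})$; each $\theta \in \mathcal{L}[\psi]$ corresponds to the set $[\theta]$ of such $w$, so $T([\theta]) = \sum_{w\in[\theta]} x_w$ is a linear form. Then $L_{r_m}\theta_{j_m}$ becomes $\sum_{w\in[\theta_{j_m}]} x_w \ge r_m$, its negation becomes a strict inequality, together with $x_w\ge 0$ and $\sum_w x_w = 1$.

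The resulting system is a finite set of (weak and strict) linear inequalities with rational coefficients, and its feasibility over $\mathbb{R}$ is decidable by the Fourier--Motzkin elimination method; moreover rational feasibility follows from real feasibility since the feasible region, if nonempty, is a rational polyhedron. I would then invoke the Truth Lemma together with the axioms FA$_1$--FA$_4$ and Mono to certify that feasibility of the system is equivalent to the existence of a dynamic Markov model satisfying the normal form formula: the finite-additivity axioms are precisely what force the sup-definition of $T_\psi$ to be realizable by a genuine finitely-additive probability assignment on the atoms $\{w\}$, matching the linear constraints. The main obstacle I anticipate is the interaction between the temporal layering and the probabilistic constraints: because $\theta_{j_m}$ may itself contain nested $\bigcirc$ and $L_r$ subformulas spanning several time-stages, one cannot treat a single probability measure in isolation. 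The correct handling is that Lemma~\ref{normal} has already stripped the \emph{outer} $\bigcirc^{n_l}L_{r_l}$ blocks via the surrogate variables $Q_{r,\theta}$, so within a single stage the probabilistic conjuncts reference only sets $[\theta]$ of worlds of $\mathcal{L}[\psi]$, and the inter-stage coherence is exactly the propositional/functionality bookkeeping above; verifying that this decoupling is faithful—that the surrogate equivalences $L_r\theta \leftrightarrow Q_{r,\theta}$ can always be consistently realized across stages—is the delicate point, and I would argue it using finiteness of $\mathcal{L}[\psi]$ and the soundness of the encoding against the canonical model $\mathfrak{M}_\psi$.
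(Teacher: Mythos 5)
Your proposal is correct and follows essentially the same route as the paper: reduce theoremhood to satisfiability via weak completeness, use Lemma~\ref{normal} to pass to a disjunction of normal form formulas, split each conjunction into a pure temporal part and a probabilistic part, and decide the latter by Zhou's translation into a finite system of rational linear inequalities. The only (harmless) divergence is in the temporal part: the paper invokes the general decidability of temporal-logic satisfiability from Kr\"oger--Merz, whereas you give a direct stage-by-stage clash check, which is valid here precisely because the temporal conjuncts are just literals under $\bigcirc^{n}$ and Func$_\bigcirc$ makes each stage a single world.
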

\begin{proof}
	To show the statement, it suffices to prove that the satisfiability problem in $\mathsf{DPL}$ is decidable. But by lemma \ref{normal}, the decidability of whether a formula is satisfiable is reduced to decide the satisfiability of formulas of the normal form. Let $\varphi \in \mathcal{L}^+[\psi]$ is in normal form, i.e.,
\[\varphi = \bigwedge_k {\bigcirc}^{n_k} p_{i_k} \wedge \bigwedge_{k'} \neg {\bigcirc}^{n_{k'}} p_{i_{k'}}
\wedge \bigwedge_{m} L_{r_{m}} \theta_{j_{m}} \wedge \bigwedge_{m'} \neg L_{r_{m'}} \theta_{j_{m'}}\]
where $ p_{i_k}, p_{i_{k'}}\in P$, $\theta_{j_{m}}, \theta_{j_{m'}} \in \mathcal{L}[\psi]$ and $n_k, n_{k'}\in \mathbb{N}$. Put
$\gamma = \bigwedge_k {\bigcirc}^{n_k} p_{i_k} \wedge \bigwedge_{k'} \neg {\bigcirc}^{n_{k'}} p_{i_{k'}}$ and $\gamma' = \bigwedge_{m} L_{r_{m}} \theta_{j_{m}} \wedge \bigwedge_{m'} \neg L_{r_{m'}} \theta_{j_{m'}}$. Since by \cite[Theorem 2.5.6]{Kroger}, it is decidable to check the satisfiability of $\gamma$ as a pure temporal formula, to verify whether $\varphi$ is satisfiable or not, clearly may assume that $\gamma$ is satisfiable. Moreover, $\gamma'$ is satisfiable if and only if $\varphi$ is satisfiable. Now, by the method used in \cite[Section 5]{Zhou2009}, one can effectively associate a finite set $S_{\gamma'}$ of linear system of inequalities with rational coefficients to the formula $\gamma'$ such that $S_{\gamma'}$ is solvable if and only if $\gamma'$ is satisfiable. But, by the well-known result of the decidability of theory of real-closed fields, the solvability of $S_{\gamma'}$ is decidable. Hence, it is decidable whether $\gamma'$ is satisfiable or not.
\end{proof}
\section{Effective strong completeness} \label{efsc}
In Subsection \ref{SCom}, we first present an axiomatization of $\mathcal{H}_{\mathsf{DPL}}$ given previously in \cite{CP2024}, in order to maintain the strong completeness of $\mathsf{DPL}$. Secondly in Subsection \ref{ETOP}, we review the basic definitions and facts about effective topological and metric spaces. Finally, in Subsection \ref{comcan}, we show our main result proving that the canonical model introduced for strong completeness forms a computable structure (Theorem \ref{comstr}).
\subsection{Strong completeness} \label{SCom}
In \cite[Section 2]{CP2024}, an axiomatization $\mathcal{H}_{\mathsf{DPL}}$ is given for $\mathsf{DPL}$ and shown that it is strongly complete with respect to the class of all dynamic Markov processes. This axiomatization extends $\mathcal{H}^{-}_{\mathsf{DPL}}$ through strengthening the Archimedean rule (Arch$_{\bigcirc^n, r}$).

\begin{definition} The proof system $\mathcal{H}_{\mathsf{DPL}}$ consists of all axioms and rules of $\mathcal{H}^{-}_{\mathsf{DPL}}$ (probably) except Arch$_{\bigcirc^n, r}$, plus the following inference rule:
\begin{align*}
\textnormal{GArch$_{\bigcirc^n, r}$} && \frac{\displaystyle{\;\{\psi\to{\bigcirc}^n L_{r_1\dots r_k s}\varphi \;|\; s< r\}\;}}{\displaystyle{\psi \to {\bigcirc}^n L_{r_1\dots r_k r} \varphi}}, \text{ for } \; n\in \mathbb{N} && \;\;\;\;\;\;\;\;\;\;\;\text{(generalized Archimedean)}
\end{align*}
\end{definition}

The notion of the derivation in $\mathcal{H}_{\mathsf{DPL}}$ is defined the same as Definition \ref{Theorem}. Respectively, the other proof theoretic notions such as (finite and maximal) consistency remain the same.
The following results are shown in \cite{CP2024}.

\begin{theorem} \label{soundcom}
The proof system $\mathcal{H}_{\mathsf{DPL}}$ is strongly sound and complete for the class of all dynamic Markov processes, i.e. for each set of formulas $\Gamma$ and formula $\varphi$, we have $\Gamma \vdash\varphi$ iff $\Gamma\vDash\varphi$.
\end{theorem}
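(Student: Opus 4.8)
The plan is to establish the two directions separately, since the content of Theorem \ref{soundcom} is a biconditional $\Gamma\vdash\varphi \Leftrightarrow \Gamma\vDash\varphi$. Strong soundness (the direction $\Gamma\vdash\varphi \Rightarrow \Gamma\vDash\varphi$) follows the usual route: I would check that every axiom of $\mathcal{H}_{\mathsf{DPL}}$ is valid in every dynamic Markov process and that each inference rule preserves semantic consequence from $\Gamma$, then induct on the length of a derivation. The propositional axioms together with MP are routine; FA$_1$--FA$_4$ and Mono simply encode finite additivity and monotonicity of the measures $T(w,\cdot)$; Func$_\bigcirc$ and Conj$_\bigcirc$ reflect that $f$ is a total function commuting with the Boolean structure (so that $[\![\bigcirc\varphi]\!]_{\mathfrak{M}} = f^{-1}([\![\varphi]\!]_{\mathfrak{M}})$). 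The only delicate case is the infinitary rule GArch$_{\bigcirc^n,r}$, where I would use upward continuity of probability measures: if the premise holds for every rational $s<r$, then at each world the relevant measure is $\ge s$ for all such $s$, hence $\ge r$ by taking the supremum, which yields the conclusion.

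For strong completeness it suffices, by Theorem \ref{De} and the definition of consistency, to show that every consistent set $\Gamma$ is satisfiable, and I would do this by constructing a canonical model $\mathfrak{M}_c = \langle \Omega_c, \mathcal{A}_c, T_c, f_c, v_c\rangle$ and realizing $\Gamma$ at one of its worlds. First I would prove a Lindenbaum-type lemma adapted to the infinitary calculus: every consistent set extends to a maximally consistent theory that is moreover \emph{saturated} with respect to GArch, meaning that whenever a formula $\psi\to\bigcirc^n L_{r_1\dots r_k r}\varphi$ fails to lie in the theory, some witness $\psi\to\bigcirc^n L_{r_1\dots r_k s}\varphi$ with $s<r$ fails as well. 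I then take $\Omega_c$ to be the set of such theories, let $\mathcal{A}_c$ be the $\sigma$-algebra generated by the sets $\hat\varphi = \{w\in\Omega_c : \varphi\in w\}$, set $v_c(p) = \hat p$, and define $f_c(w) = \{\varphi : \bigcirc\varphi\in w\}$; using Func$_\bigcirc$ and Conj$_\bigcirc$ one verifies that $f_c(w)\in\Omega_c$ and that $f_c$ is measurable.

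The heart of the argument is the transition probability. On the generating algebra I would set $T_c(w)(\hat\varphi) = \sup\{r\in\mathbb{Q}\cap[0,1] : L_r\varphi\in w\}$ and check, via FA$_1$--FA$_4$, that this is a well-defined finitely additive set function of total mass $1$. The crucial step is upgrading finite additivity to countable additivity, and this is exactly where the strengthened (generalized) Archimedean rule is indispensable: saturation forces the defining supremum to behave correctly even when the index $L_r$ sits beneath a prefix $r_1\dots r_k$ of probability operators and under iterations of $\bigcirc$, so that for any decreasing sequence $\hat{\varphi_n}\downarrow\emptyset$ one obtains $T_c(w)(\hat{\varphi_n})\to 0$. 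With countable additivity on the algebra in hand, I would first argue that $\langle\Omega_c,\mathcal{A}_c\rangle$ is a standard Borel space, through the natural embedding of $\Omega_c$ into $2^{\mathcal{L}_{\mathsf{DPL}}}$, and then invoke the Carathéodory extension theorem to obtain a genuine probability measure $T_c(w,\cdot)$ on $\mathcal{A}_c$, finally verifying measurability of $w\mapsto T_c(w,A)$.

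Closing the argument, I would prove the truth lemma $\mathfrak{M}_c, w \vDash \varphi \Leftrightarrow \varphi\in w$ by induction on $\varphi$: the Boolean cases are immediate, the $\bigcirc$ case uses the definition of $f_c$, and the $L_r$ case uses that $[\![\varphi]\!]_{\mathfrak{M}_c} = \hat\varphi$ together with the definition of $T_c$. Applying this to a saturated maximal consistent extension of $\Gamma$ produces a world satisfying all of $\Gamma$, which is strong completeness. I expect the main obstacle to be precisely the countable additivity of $T_c(w)$: passing from the syntactically given finitely additive set function to an honest measure is the difficulty that forced the move from $\mathcal{H}^{-}_{\mathsf{DPL}}$ to $\mathcal{H}_{\mathsf{DPL}}$ in the first place, and it rests jointly on the Archimedean saturation of the worlds and a careful measure-extension argument on a standard Borel space.
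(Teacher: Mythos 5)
Your plan follows essentially the same route as the paper: the paper's own proof of Theorem \ref{soundcom} is just a citation to \cite[Theorems 2.12 and 2.27]{CP2024}, and the construction that proof rests on --- Lindenbaum extension of a consistent set to a saturated theory, the canonical standard Borel space $\langle\Omega_c,\mathcal{A}_c\rangle$, the finitely additive set function $\mu_c(w)([\varphi])=\sup\{r\in\mathbb{Q}\cap[0,1]\;|\;L_r\varphi\in w\}$ upgraded to a Markov kernel $T_c$, the map $f_c(w)=\{\varphi\;|\;\bigcirc\varphi\in w\}$, and the truth lemma --- is exactly what the paper redevelops in Subsection \ref{SCom}. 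The one caution is that the two steps you rightly single out as delicate are only asserted in your sketch, just as the paper outsources them to \cite{Kozen2} in Fact \ref{T}: soundness of GArch$_{\bigcirc^n, r}$ for a nonempty prefix $r_1\dots r_k$ needs continuity from above of $T(u,\cdot)$ along the decreasing sets $[\![L_{r_2\dots r_k s}\varphi]\!]$ as $s\uparrow r$ (taking a supremum over the innermost index, as you describe, only covers the case $k=0$), and the passage from finite to countable additivity requires an actual proof that $\mu_c(w)$ is continuous from above at $\emptyset$ on the algebra $\mathcal{B}_c$, which is where the Archimedean saturation of the worlds is really used.
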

\begin{proof}
See \cite[Theorems 2.12 and 2.27]{CP2024}.
\end{proof}

By Theorems \ref{sound}, \ref{com} and \ref{soundcom}, the following corollary is clear.
\begin{corollary}\label{proof}
For any formula $\varphi$ of $\mathcal{L}_\mathsf{DPL}$, $\varphi$ is derivable in $\mathcal{H}^{-}_{\mathsf{DPL}}$ if and only if it is derivable in $\mathcal{H}_{\mathsf{DPL}}$.
\end{corollary}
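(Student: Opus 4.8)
The plan is to derive Corollary \ref{proof} directly from the three results cited in the sentence preceding it, namely the soundness of $\mathcal{H}^{-}_{\mathsf{DPL}}$ (Theorem \ref{sound}), the weak completeness of $\mathcal{H}^{-}_{\mathsf{DPL}}$ (Theorem \ref{com}), and the strong soundness and completeness of $\mathcal{H}_{\mathsf{DPL}}$ (Theorem \ref{soundcom}). Since the claim concerns derivability of a single formula $\varphi$ from the empty set of assumptions, the strategy is to route the equivalence through semantic validity, which both systems characterize.

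First I would prove the easy direction, that derivability in $\mathcal{H}^{-}_{\mathsf{DPL}}$ implies derivability in $\mathcal{H}_{\mathsf{DPL}}$. This should be immediate from the shape of the two axiomatizations: by definition $\mathcal{H}_{\mathsf{DPL}}$ contains all axioms and rules of $\mathcal{H}^{-}_{\mathsf{DPL}}$ except possibly Arch$_{\bigcirc^n,r}$, but the generalized Archimedean rule GArch$_{\bigcirc^n,r}$ specializes to Arch$_{\bigcirc^n,r}$ by taking the empty prefix $r_1\dots r_k$ (equivalently $k=0$). Hence every inference step licensed in a proof in $\mathcal{H}^{-}_{\mathsf{DPL}}$ is also licensed in $\mathcal{H}_{\mathsf{DPL}}$, so any $\mathcal{H}^{-}_{\mathsf{DPL}}$-derivation of $\varphi$ is already an $\mathcal{H}_{\mathsf{DPL}}$-derivation of $\varphi$.

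For the converse, suppose $\varphi$ is derivable in $\mathcal{H}_{\mathsf{DPL}}$, i.e. $\vdash_{\mathcal{H}_{\mathsf{DPL}}} \varphi$. By the strong soundness half of Theorem \ref{soundcom} (applied with $\Gamma = \emptyset$), we obtain $\vDash \varphi$, that is, $\varphi$ is valid in the class of all dynamic Markov processes. Now the weak completeness of $\mathcal{H}^{-}_{\mathsf{DPL}}$ (Theorem \ref{com}) states precisely that $\vDash \varphi$ implies $\vdash_{\mathcal{H}^{-}_{\mathsf{DPL}}} \varphi$, so we conclude $\varphi$ is derivable in $\mathcal{H}^{-}_{\mathsf{DPL}}$. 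Chaining these two implications yields the converse, and together with the easy direction this establishes the biconditional.

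I do not anticipate a serious obstacle here, since the statement is a corollary assembled from previously established facts; the work lies entirely in the earlier weak completeness theorem and the cited strong completeness result. The only point requiring a moment of care is the inclusion of proof systems in the forward direction: one must confirm that GArch$_{\bigcirc^n,r}$ genuinely subsumes Arch$_{\bigcirc^n,r}$ as a special case, so that dropping the latter in favor of the former (as the definition of $\mathcal{H}_{\mathsf{DPL}}$ does via the parenthetical ``probably'') does not weaken the system relative to $\mathcal{H}^{-}_{\mathsf{DPL}}$. Granting that observation, the argument is purely a diagram-chase through validity and needs no further machinery.
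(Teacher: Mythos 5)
Your proposal is correct and matches the paper's intended argument, which simply cites Theorems \ref{sound}, \ref{com} and \ref{soundcom} and routes both directions through semantic validity. The only (harmless) variation is that you handle the easy direction syntactically, by observing that GArch$_{\bigcirc^n,r}$ with $k=0$ subsumes Arch$_{\bigcirc^n,r}$ so that $\mathcal{H}_{\mathsf{DPL}}$ extends $\mathcal{H}^{-}_{\mathsf{DPL}}$, whereas the paper's citation of Theorem \ref{sound} indicates it takes that direction semantically as well (soundness of $\mathcal{H}^{-}_{\mathsf{DPL}}$ followed by strong completeness of $\mathcal{H}_{\mathsf{DPL}}$); both are valid.
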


Notice that since both systems of $\mathcal{H}^{-}_{\mathsf{DPL}}$ and $\mathcal{H}_{\mathsf{DPL}}$ satisfy the deduction theorem, the following corollary is also established.

\begin{corollary}
For any finite set of formulas $\Gamma$ and formula $\varphi$, we have $\varphi$ is derivable from $\Gamma$ in $\mathcal{H}^{-}_{\mathsf{DPL}}$ if and only if it is derivable from $\Gamma$ in $\mathcal{H}_{\mathsf{DPL}}$. So $\Gamma$ is consistent in $\mathcal{H}^{-}_{\mathsf{DPL}}$ if and only if it is consistent in $\mathcal{H}_{\mathsf{DPL}}$.
\end{corollary}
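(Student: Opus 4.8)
The plan is to reduce the statement about derivability from a finite set of assumptions to the statement about plain theoremhood---which is exactly the content of Corollary \ref{proof}---by exploiting the deduction theorem (Theorem \ref{De}), which holds for both systems.

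First I would write $\Gamma = \{\gamma_1, \dots, \gamma_n\}$ and apply the deduction theorem $n$ times in $\mathcal{H}^{-}_{\mathsf{DPL}}$ to obtain that $\Gamma \vdash \varphi$ in $\mathcal{H}^{-}_{\mathsf{DPL}}$ if and only if $\vdash \gamma_1 \to (\gamma_2 \to \cdots \to (\gamma_n \to \varphi)\cdots)$ in $\mathcal{H}^{-}_{\mathsf{DPL}}$; the right-hand side merely asserts that a single formula is a theorem. By Corollary \ref{proof}, this formula is a theorem of $\mathcal{H}^{-}_{\mathsf{DPL}}$ if and only if it is a theorem of $\mathcal{H}_{\mathsf{DPL}}$. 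Applying the deduction theorem $n$ times in the reverse direction, now in $\mathcal{H}_{\mathsf{DPL}}$, yields the equivalence with $\Gamma \vdash \varphi$ in $\mathcal{H}_{\mathsf{DPL}}$. Chaining these equivalences establishes the first assertion.

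For the consistency claim, I would instantiate $\varphi := \perp$: since a set is consistent in a system exactly when it does not derive $\perp$ there, the derivability equivalence just proved (taken with $\varphi = \perp$) immediately gives that $\Gamma$ is consistent in $\mathcal{H}^{-}_{\mathsf{DPL}}$ if and only if it is consistent in $\mathcal{H}_{\mathsf{DPL}}$.

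The argument is essentially bookkeeping, so I expect no genuine obstacle; the one point deserving care is that finiteness of $\Gamma$ is indispensable. The deduction theorem internalizes only finitely many premises into a single implication, and it is precisely for infinite theories that the two systems part ways---$\mathcal{H}^{-}_{\mathsf{DPL}}$ being merely weakly complete while $\mathcal{H}_{\mathsf{DPL}}$ is strongly complete. Thus the restriction to finite $\Gamma$ in the statement is a necessity rather than a convenience.
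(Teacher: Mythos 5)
Your argument is correct and is exactly the route the paper takes: the paper justifies this corollary by the one-line remark that both systems satisfy the deduction theorem, so that derivability from a finite $\Gamma$ reduces to theoremhood of a nested implication, which is then handled by Corollary \ref{proof}. Your write-up simply makes that sketch explicit (including the instantiation $\varphi := \perp$ for the consistency claim), and your closing observation about why finiteness is essential is accurate.
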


Hereafter, we use the notion of consistency with respect to $\mathcal{H}_{\mathsf{DPL}}$, although by the above corollary the notion of consistency for a finite set of formulas remains the same for both $\mathcal{H}^{-}_{\mathsf{DPL}}$ and $\mathcal{H}_\mathsf{DPL}$. So, in particular, a set $\Gamma$ is finitely consistent in $\mathcal{H}^{-}_{\mathsf{DPL}}$ if and only if it is consistent with respect to $\mathcal{H}_{\mathsf{DPL}}$.

In the rest of this section, we will present the key concepts which constitute the building blocks of the canonical model which is implemented for proving the strong completeness. We will see in the next section that this model is, indeed, a computable dynamic Markov model.

\begin{definition} \label{satu}
A set $w \subseteq \mathbb{F}$ of formulas $\mathcal{L}_\mathsf{DPL}$ is called {\em saturated} if
\begin{itemize}
\item[1.] $w$ is finitely consistent,
\item[2.] $w$ is negation complete, i.e. for every formula $\varphi$ of $\mathcal{L}_\mathsf{DPL}$, either $\varphi \in w$ or $\neg\varphi \in w$, and
\item[3.] $w$ has the Archimedean property, that is, for every formula $\varphi$ of $\mathcal{L}_\mathsf{DPL}$, $n\in \mathbb{N}$ and $r_1, \dots, r_k, r\in \mathbb{Q} \cap [0,1]$, if $\{{\bigcirc}^n L_{r_1\dots r_k s} \varphi\;|\; s<r\}\subseteq w$ then ${\bigcirc}^n L_{r_1\dots r_k r} \varphi \in w$.
\end{itemize}
\end{definition}

We recall that a set of formulas is said to be {\em maximally finitely consistent} if it satisfies Conditions 1 and 2 of the above.

\begin{proposition} Let $w$ be a saturated set of formulas of $\mathcal{L}_\mathsf{DPL}$. Then
\begin{itemize}
\item[1.] $w$ is closed under deduction in $\mathcal{H}_{\mathsf{DPL}}$, that is, for every formula $\varphi$ of $\mathcal{L}_\mathsf{DPL}$, if $\varphi$ is derivable from $w$ in $\mathcal{H}_{\mathsf{DPL}}$ then $\varphi \in w$;
\item[2.] $w$ is consistent in $\mathcal{H}_{\mathsf{DPL}}$.
\end{itemize}
\end{proposition}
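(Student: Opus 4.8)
The plan is to establish Part~1 by transfinite induction on the length of a derivation and then to read off Part~2 as an immediate corollary. Throughout I would rely on one elementary principle for a saturated set $w$: if a formula $\chi$ is a propositional consequence of finitely many members of $w$ (possibly together with theorems of $\mathcal{H}_{\mathsf{DPL}}$), then $\chi\in w$. Indeed, were $\chi\notin w$, negation completeness would put $\neg\chi\in w$, and then $\chi$ together with $\neg\chi$ would sit inside a finite inconsistent subset of $w$, contradicting finite consistency.

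For Part~1, fix a derivation $\varphi_0,\dots,\varphi_{\alpha+1}$ of $\varphi=\varphi_{\alpha+1}$ from $w$ in $\mathcal{H}_{\mathsf{DPL}}$ and show by induction on $\beta\leq\alpha+1$ that $\varphi_\beta\in w$, assuming all preceding $\varphi_{\beta'}$ (with $\beta'<\beta$) lie in $w$. If $\varphi_\beta\in w$ or $\varphi_\beta$ is a theorem, the claim is immediate from the principle above (a theorem is such a consequence, taking the empty subset). If $\varphi_\beta$ comes from MP applied to earlier $\psi\to\varphi_\beta$ and $\psi$, both in $w$ by the induction hypothesis, then $\varphi_\beta$ is again a propositional consequence of a finite subset of $w$, so $\varphi_\beta\in w$.

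The crux is the case where $\varphi_\beta$ is obtained by the generalized Archimedean rule GArch$_{\bigcirc^n,r}$, so $\varphi_\beta=\psi\to\bigcirc^n L_{r_1\dots r_k r}\varphi'$ and, by the induction hypothesis, $\psi\to\bigcirc^n L_{r_1\dots r_k s}\varphi'\in w$ for every $s<r$. The difficulty is that Condition~3 of saturation is phrased for the pure premises $\bigcirc^n L_{r_1\dots r_k s}\varphi'$, not for the implications the rule actually manipulates, so I would split on whether $\psi\in w$. If $\psi\notin w$, then $\neg\psi\in w$, and since $\neg\psi$ propositionally entails $\psi\to\bigcirc^n L_{r_1\dots r_k r}\varphi'$, the conclusion lies in $w$ by the principle above. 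If $\psi\in w$, then for each $s<r$ the pair $\psi\to\bigcirc^n L_{r_1\dots r_k s}\varphi'$ and $\psi$ forces $\bigcirc^n L_{r_1\dots r_k s}\varphi'\in w$; hence $\{\bigcirc^n L_{r_1\dots r_k s}\varphi'\mid s<r\}\subseteq w$, and now Condition~3 applies directly to give $\bigcirc^n L_{r_1\dots r_k r}\varphi'\in w$, from which $\psi\to\bigcirc^n L_{r_1\dots r_k r}\varphi'\in w$ follows once more by the principle. This case split, isolating the single configuration in which the Archimedean clause can be invoked, is the main obstacle; one must also check that the transfinite induction genuinely delivers all (infinitely many) premises of GArch as preceding members of the sequence, which it does because they occur at indices strictly below $\beta$.

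For Part~2, I would argue by contradiction. Since $\{\perp\}$ is inconsistent, finite consistency of $w$ forces $\perp\notin w$; the contrapositive of Part~1 then gives $w\nvdash\perp$, so $w$ is consistent in $\mathcal{H}_{\mathsf{DPL}}$.
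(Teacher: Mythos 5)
Your proof is correct and follows essentially the same route as the paper: transfinite induction on the length of the derivation, with the GArch$_{\bigcirc^n,r}$ case settled by the Archimedean clause of saturation, and Part~2 read off from Part~1 together with $\perp\notin w$. Your case split on whether $\psi\in w$ merely fills in a detail that the paper's one-line treatment of the GArch case leaves implicit; it is the intended argument.
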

\begin{proof}
Part 1 can be proved by the transfinite induction on the length of the derivation of $\varphi$ from $w$. The only interesting case where the derivation ends with an application of the rule GArch$_{\bigcirc^n, r}$. In this case, the third condition of Definition \ref{satu} guarantees that if $\varphi\equiv \theta \to{\bigcirc}^n L_{r_1\dots r_k r} \sigma$ and $w\vdash\varphi$, then $ \theta\to{\bigcirc}^n L_{r_1\dots r_k r} \sigma \in w$.\\
Part 2 immediately follows from Part 1 and this fact that $\perp\notin w$, since $w$ is finitely consistent.
\end{proof}

\begin{corollary}
A set of formulas of $\mathcal{L}_\mathsf{DPL}$ is saturated if and only if it is maximally consistent.
\end{corollary}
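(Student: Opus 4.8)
The plan is to establish the two implications separately. For the forward direction I would lean on the preceding Proposition; for the converse the workhorse will be the (standard) fact that a maximally consistent set is deductively closed, together with the deduction theorem (Theorem~\ref{De}) and the admissibility of the infinitary rule inside a $\Gamma$-derivation.

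Suppose first that $w$ is saturated. By the preceding Proposition it is consistent in $\mathcal{H}_{\mathsf{DPL}}$, so only maximality remains to be checked. If $w\subsetneq w'$ with $w'$ consistent, choose $\varphi\in w'\setminus w$; negation completeness of $w$ (Condition~2 of Definition~\ref{satu}) forces $\neg\varphi\in w\subseteq w'$, so $w'$ contains both $\varphi$ and $\neg\varphi$ and hence $w'\vdash\perp$ by propositional reasoning, contradicting its consistency. Thus $w$ admits no proper consistent extension and is maximally consistent.

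For the converse, let $w$ be maximally consistent. I would first verify deductive closure: if $w\vdash\chi$ but $\chi\notin w$, then $w\cup\{\chi\}\supsetneq w$ is inconsistent by maximality, so $w,\chi\vdash\perp$, whence $w\vdash\neg\chi$ by Theorem~\ref{De}; combined with $w\vdash\chi$ this yields $w\vdash\perp$, contradicting consistency. Given this, Condition~1 (finite consistency) is immediate, since a $\perp$-derivation from a finite subset of $w$ is already a derivation from $w$; and Condition~2 (negation completeness) follows because $\varphi\notin w$ makes $w\cup\{\varphi\}$ inconsistent, so $w\vdash\neg\varphi$ by the deduction theorem and thus $\neg\varphi\in w$ by closure.

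The remaining Condition~3, the Archimedean property, is the only place where the infinitary machinery is actually invoked, and I expect it to be the main (though mild) obstacle. Assuming $\{\bigcirc^n L_{r_1\dots r_k s}\varphi \mid s<r\}\subseteq w$, I would apply the rule GArch$_{\bigcirc^n, r}$ with $\psi=\top$: each membership $\bigcirc^n L_{r_1\dots r_k s}\varphi\in w$ gives $w\vdash\top\to\bigcirc^n L_{r_1\dots r_k s}\varphi$, so the whole countable premise set of the rule is derivable from $w$. Since GArch$_{\bigcirc^n, r}$ is permitted in derivations from assumptions (Definition~\ref{Theorem}(2)), one application yields $w\vdash\top\to\bigcirc^n L_{r_1\dots r_k r}\varphi$, hence $w\vdash\bigcirc^n L_{r_1\dots r_k r}\varphi$, and deductive closure returns $\bigcirc^n L_{r_1\dots r_k r}\varphi\in w$. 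The only care needed is to confirm that the infinitary rule is legitimately applicable within a $\Gamma$-derivation and that its conclusion lands back in $w$ via closure; with Condition~3 thereby secured, $w$ is saturated.
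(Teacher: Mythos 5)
Your proof is correct and follows essentially the same route as the paper: the saturated-to-maximally-consistent direction uses the preceding Proposition for consistency plus negation completeness for maximality, and the converse rests on deductive closure of maximally consistent sets, with the Archimedean property obtained by applying GArch$_{\bigcirc^n, r}$ (legitimate in derivations from assumptions) and closing under deduction. The paper states this only as a two-sentence sketch; your write-up fills in the same argument in detail, including the one genuinely non-trivial point (Condition~3), correctly.
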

\begin{proof}
The direction of right to left is simple, since every maximally consistent set is closed under deduction. The other direction, immediately follows from the above proposition and the negation completeness property of the saturated sets.
\end{proof}

The following lemma is the key component of the proof of strong completeness theorem. Later in Subsection \ref{comcan}, we will give a computable version of this lemma (Lemma \ref{comlin2}).

\begin{lemma} [Lindenbaum lemma] \label{lin}
Let $w$ be a consistent set of formulas of $\mathcal{L}_\mathsf{DPL}$ in $\mathcal{H}_{\mathsf{DPL}}$. Then there exists a saturated (or maximally consistent in $\mathcal{H}_{\mathsf{DPL}}$) set $w^*$ such that $w \subseteq w^*$.
\end{lemma}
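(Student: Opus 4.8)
The plan is to build $w^*$ as a union of an increasing chain of consistent sets obtained by a transfinite (here, $\omega$-length suffices since $\mathcal{L}_\mathsf{DPL}$ is countable) enumeration of all formulas, handling the usual Lindenbaum step together with a special provision to secure the Archimedean property (Condition 3 of Definition \ref{satu}). First I would fix an enumeration $\varphi_0, \varphi_1, \varphi_2, \dots$ of all formulas of $\mathcal{L}_\mathsf{DPL}$, set $w_0 = w$, and at stage $n+1$ decide $\varphi_n$: if $w_n \cup \{\varphi_n\}$ is (finitely) consistent, put $w_{n+1} = w_n \cup \{\varphi_n\}$, otherwise put $w_{n+1} = w_n \cup \{\neg\varphi_n\}$. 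A routine argument using the deduction theorem (Theorem \ref{De}) shows that at least one of these two choices preserves finite consistency, so each $w_n$ is finitely consistent, and the final set $w^* = \bigcup_n w_n$ is finitely consistent and negation complete (Conditions 1 and 2).

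The genuinely delicate point is Condition 3, the Archimedean property, and this is where the infinitary rule $\mathrm{GArch}_{\bigcirc^n, r}$ forces extra care. The problem is that naive negation-completeness does not by itself guarantee that whenever $\{\bigcirc^n L_{r_1\dots r_k s}\varphi \mid s < r\} \subseteq w^*$ we also get $\bigcirc^n L_{r_1\dots r_k r}\varphi \in w^*$; a maximally finitely consistent set can in principle contain all the premises of an instance of $\mathrm{GArch}$ while its conclusion has been excluded at some finite stage. The standard remedy, which I would adopt, is to interleave a witnessing step into the construction: whenever the formula being decided has the shape $\neg\bigcirc^n L_{r_1\dots r_k r}\varphi$ and we are about to add it (because adding the positive form is inconsistent), we must ensure that some premise $\bigcirc^n L_{r_1\dots r_k s}\varphi$ with $s < r$ is simultaneously refuted, i.e. that $\neg\bigcirc^n L_{r_1\dots r_k s}\varphi$ is placed into the chain for a suitable $s < r$. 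Concretely, if $w_n \cup \{\neg\bigcirc^n L_{r_1\dots r_k r}\varphi\}$ is consistent, I claim there exists $s < r$ with $w_n \cup \{\neg\bigcirc^n L_{r_1\dots r_k r}\varphi, \neg\bigcirc^n L_{r_1\dots r_k s}\varphi\}$ still consistent, and I would add both formulas at this stage. The justification is the contrapositive of $\mathrm{GArch}$: if no such $s$ existed, then $w_n \cup \{\neg\bigcirc^n L_{r_1\dots r_k r}\varphi\} \vdash \bigcirc^n L_{r_1\dots r_k s}\varphi$ for every $s < r$, so by the deduction theorem the set $w_n$ together with the negated conclusion would prove every premise, and a single application of $\mathrm{GArch}_{\bigcirc^n, r}$ would then yield the conclusion $\bigcirc^n L_{r_1\dots r_k r}\varphi$, contradicting consistency of $w_n \cup \{\neg\bigcirc^n L_{r_1\dots r_k r}\varphi\}$.

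I expect this witnessing lemma to be the main obstacle, and the care required is exactly in formalizing the application of $\mathrm{GArch}$ to a \emph{set} of premises inside a finitely-consistent (rather than genuinely consistent) context; one must check that the single rule application used in the contrapositive argument is legitimate given that $\mathrm{GArch}$ derives its conclusion from the infinite family $\{\psi \to \bigcirc^n L_{r_1\dots r_k s}\varphi \mid s < r\}$ with $\psi$ playing the role of the conjunction of the relevant assumptions. Once the witnessing step is in place, verifying Condition 3 for $w^*$ is immediate: if all premises $\bigcirc^n L_{r_1\dots r_k s}\varphi$ $(s < r)$ lie in $w^*$, then the negated conclusion cannot have been added (else its witness $\neg\bigcirc^n L_{r_1\dots r_k s}\varphi$ would also be in $w^*$, contradicting finite consistency), so by negation-completeness the conclusion itself is in $w^*$. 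Finally, I would note $w \subseteq w_0 \subseteq w^*$, and appeal to the preceding corollary identifying saturated sets with maximally consistent sets to conclude that $w^*$ is maximally consistent in $\mathcal{H}_{\mathsf{DPL}}$, as required.
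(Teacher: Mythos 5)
Your proposal is correct and follows essentially the same route as the paper: a Lindenbaum chain over an enumeration of all formulas, augmented with a Henkin-style witnessing step that, whenever the conclusion $\bigcirc^n L_{r_1\dots r_k r}\varphi$ of an instance of GArch$_{\bigcirc^n,r}$ is rejected, simultaneously adds $\neg\bigcirc^n L_{r_1\dots r_k s}\varphi$ for some $s<r$, with the existence of such an $s$ justified exactly as in the paper by observing that otherwise GArch$_{\bigcirc^n,r}$ would force the conclusion and contradict consistency. The only (inessential) difference is bookkeeping: the paper selects the least index $l$ with $\Gamma_k\nvdash\bigcirc^n L_{r_1\dots r_k s_l}\theta$ and verifies consistency of the extended set afterwards, whereas you fold both into a single claim that some $s$ keeps the extension consistent.
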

\begin{proof}
Let $\varphi_0, \; \varphi_1,\; \varphi_2, \;\dots$ and $s_0, \; s_1,\; s_2, \;\dots$ be two enumerations of all formulas of $\mathcal{L}_\mathsf{DPL}$ and all rational numbers in $[0, 1]$, respectively. We inductively define a sequence
$\Gamma_0 \subseteq \Gamma_1 \subseteq \dots \subseteq\Gamma_k \subseteq \dots$ as follows:
Set $\Gamma_0 := w$ and suppose that $\Gamma_k$ is already defined. Then, we put
\begin{equation*}
\Gamma_{k+1} := \left\{
\begin{array}{rl}
\Gamma_k \cup \{\varphi_k\}\;\;\;\;\; \;\;\;\;\;\;\;\;\;\;\;\;\; \;\;\;\;\;\;\;\;\;\;\;\;\;
& \text{if } \Gamma_k \vdash\varphi_k,\\
\Gamma_k \cup \{\neg\varphi_k\}\;\;\;\;\;\;\;\;\;\;\;\;\;\;\;\; \;\;\;\;\;\;\;\;\;\;\;\;\;\!
& \text{if } \Gamma_k \nvdash\varphi_k \;\text{and $\varphi_k$ is not of}\\
& \text{the form}\;{\bigcirc}^n L_{r_1\dots r_k r} \theta,\\
\Gamma_k \cup \{\neg\varphi_k, \neg {\bigcirc}^n L_{r_1\dots r_k s_l} \theta\}\;\;\;
& \text{if } \Gamma_k \nvdash\varphi_k \;\text{and}\;\varphi_k \text{ is of the form } {\bigcirc}^n L_{r_1\dots r_k r} \theta\\
& \text{and $l$ is the least number such}\\
& \text{that $s_l< r$ and }\; \Gamma_k \nvdash {\bigcirc}^n L_{r_1\dots r_k s_l} \theta.
\end{array} \right.
\end{equation*}
where $n\in \mathbb{N}$ and $r_1, \dots, r_k, r, s_l\in \mathbb{Q} \cap [0,1]$. Notice that such $l$ in the third case always exists. Otherwise if $\Gamma_k\vdash {\bigcirc}^n L_{r_1\dots r_k s_l} \theta$ for all $s_l<r$, then, the rule GArch$_{\bigcirc^n, r}$ implies that $\Gamma_k\vdash {\bigcirc}^n L_{r_1\dots r_k r} \theta$, which is a contradiction.
Now, if we consider
\begin{equation*}
w^* := \bigcup_{k \in \omega}\Gamma_k
\end{equation*}
\\
then it is easy to show that $w^*$ satisfies all the desired properties.
First, by induction on $k$, it can be shown that all $\Gamma_k$ are consistent. All cases are straightforward. The only interesting case is when $\Gamma_{k+1} := \Gamma_k \cup \{\neg\varphi_k, \neg {\bigcirc}^n L_{r_1\dots r_k s_l} \theta\}$ where $\varphi_k \equiv {\bigcirc}^n L_{r_1\dots r_k r} \theta$ and $s_l<r$. Suppose that $\Gamma_{k+1}\vdash\bot$. Then by using the deduction theorem, $\Gamma_{k}\vdash {\bigcirc}^n L_{r_1\dots r_k r} \theta \vee {\bigcirc}^n L_{r_1\dots r_k s_l} \theta$. Now since by Part 2 of Lemma \ref{prop}, we have $\vdash{\bigcirc}^n L_{r_1\dots r_k r} \theta \to {\bigcirc}^n L_{r_1\dots r_k s_l} \theta$,
it follows that $\Gamma_{k}\vdash{\bigcirc}^n L_{r_1\dots r_k s_l} \theta$, which is a contradiction.
Clearly, by the above construction, $w^*$ is negation complete and has the Archimedean property. Therefore, it suffices to show that it is finitely consistent. To see that, let
$w'\subseteq w^*$ be finite and $w'\vdash\perp$. Then, we have $\vdash \neg \bigwedge w'$. Hence, for some $m\in \omega$, $ \neg \bigwedge w'\in \Gamma_m$. On the other hand, $w'\subseteq w^*$ implies that $w' \subseteq\Gamma_{m'}$ for some $m'\in \omega$. So it follows that $\Gamma_k$ is inconsistent for $k= \text{max}\{m, m'\}$, a contradiction.
\end{proof}

Below, we show that the family of saturated sets forms a topological space.

\begin{definition} \label{topo}
Suppose that $\Omega_c$ is the set of all saturated sets of formulas of $\mathcal{L}_\mathsf{DPL}$. Then, let $\tau_c$ be the topology generated by the set $\mathcal{B}_c = \{[\varphi] \;|\;\varphi \in \mathbb{F}\}$ on $\Omega_c$ where for every formula $\varphi\in \mathbb{F}$, $[\varphi] := \{ w\in \Omega_c \;|\; \varphi\in w\}$. In this situation, the pair $\langle{\Omega_c, {\tau}_c}\rangle$ is called the {\em canonical topological space}.
\end{definition}

\begin{remark} \label{Polish}
One can easily check that $\mathcal{B}_c$, introduced in the above definition, forms a basis for $\tau_c$. Let $\Omega^*$ be the set of all maximal finitely consistent sets of formulas of $\mathcal{L}_\mathsf{DPL}$ and $\tau^*$ be the topology generated by the set $\mathcal{B}^* = \{[\varphi]^* \;|\;\varphi \in \mathbb{F}\}$ on $\Omega^*$ where for every formula $\varphi \in \mathbb{F}$, $[\varphi]^* := \{ w\in \Omega^* \;|\; \varphi\in w\}$. By the classical Stone representation theorem, the pair $\langle{\Omega^*, {\tau}^*\!}\rangle$ is a Stone space, i.e., a compact totally disconnected Hausdorff space. Further, it is second countable since $\mathbb{F}$ and hence $\mathcal{B}^*$ are countable. Consequently, $\langle{\Omega^*, {\tau}^*\!}\rangle$ is a Polish space. Note that for every formula $\varphi\in\mathbb{F}$, $[\varphi]= \Omega_c \cap [\varphi]^*$, hence the pair $\langle{\Omega_c, {\tau}_c}\rangle$ is a subspace of $\langle{\Omega^*, {\tau}^*\!}\rangle$.
In fact, it is shown in \cite{Kozen2013} that $\Omega_c$ is a $G_\delta$ subset of $\Omega^*$, and hence by classical theorem of Alexandrov \cite[Theorem 2.2.1]{Sri}, it is a Polish space too.
For more details, see \cite[Section 6]{Kozen2013} and \cite{CP2024}.
\end{remark}

\begin{corollary} \label{A}
Let $\mathcal{A}_c$ be the Borel $\sigma$-algebra generated by $\tau_c$. Then, the pair $\langle{\Omega_c, \mathcal{A}_c}\rangle$ is a standard Borel space.
\end{corollary}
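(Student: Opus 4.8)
The plan is to deduce the statement almost directly from Remark \ref{Polish}, since the substantive topological work has already been carried out there. Recall that a measurable space $\langle X, \mathcal{A}\rangle$ is a \emph{standard Borel space} precisely when there exists a Polish topology $\tau$ on $X$ whose generated Borel $\sigma$-algebra coincides with $\mathcal{A}$; equivalently, when $\langle X, \mathcal{A}\rangle$ is Borel-isomorphic to a Polish space equipped with its Borel $\sigma$-algebra. Thus the only things to verify are that $\tau_c$ is a Polish topology and that $\mathcal{A}_c$ is exactly the Borel $\sigma$-algebra it generates.

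First I would invoke Remark \ref{Polish}, which establishes that $\langle \Omega_c, \tau_c\rangle$ is a Polish space. Concretely, that remark realizes $\langle \Omega_c, \tau_c\rangle$ as a subspace of the Stone space $\langle \Omega^*, \tau^*\rangle$ — a compact, totally disconnected, second-countable Hausdorff space, hence Polish — via the identity $[\varphi] = \Omega_c \cap [\varphi]^*$, and then appeals to the result of \cite{Kozen2013} that $\Omega_c$ is a $G_\delta$ subset of $\Omega^*$. Alexandrov's theorem \cite[Theorem 2.2.1]{Sri} then guarantees that a $G_\delta$ subset of a Polish space is itself Polish in the subspace topology. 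Since $\mathcal{A}_c$ is defined to be the Borel $\sigma$-algebra generated by $\tau_c$, the pair $\langle \Omega_c, \mathcal{A}_c\rangle$ is by construction the Borel measurable space of the Polish space $\langle \Omega_c, \tau_c\rangle$, so the definition of standard Borel space applies verbatim.

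I do not expect any genuine obstacle here: essentially all of the difficulty has been absorbed into Remark \ref{Polish}, whose crux is the $G_\delta$ description of $\Omega_c$ inside $\Omega^*$ (the saturation/Archimedean condition cutting $\Omega_c$ out of the space of maximal finitely consistent sets). Given that remark, the corollary is a purely definitional consequence. The only point one must be mildly careful about is that the subspace topology on $\Omega_c$ inherited from $\tau^*$ agrees with the topology generated by the basis $\mathcal{B}_c$, which is exactly the content of the identity $[\varphi] = \Omega_c \cap [\varphi]^*$ recorded in the remark; once this is noted, the $\sigma$-algebras match and the claim follows.
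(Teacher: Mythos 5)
Your proposal is correct and matches the paper's approach exactly: the paper's proof is simply ``It is immediate by Remark \ref{Polish},'' and you have merely spelled out the same deduction (Polishness of $\langle\Omega_c,\tau_c\rangle$ via the $G_\delta$ embedding into the Stone space, plus the definition of a standard Borel space) in more detail. No issues.
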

\begin{proof}
It is immediate by Remark \ref{Polish}.
\end{proof}

For a proof of the following fact, see Lemmas 8 to 10 and Theorem 11 in \cite{Kozen2}.

\begin{fact}\label{T}
\begin{itemize}
\item[1.] It is easy to check that $\Omega_c \setminus [\varphi]= [\neg\varphi]$ and $[\varphi] \cap [\psi]=[\varphi \land \psi]$, for every $\varphi, \psi \in \mathbb{F}$. Therefore, the set $\mathcal{B}_c$ is an algebra on $\Omega_c$.
\item[2.] For each $w\in \Omega_c$, the set function $\mu_c(w): \mathcal{B}_c\to [0, 1]$, defined as \[\mu_c(w)([\varphi]):= \sup{\{r\in \mathbb{Q} \cap [0,1] \;|\; L_r\varphi\in w\}} = \inf{\{r\in \mathbb{Q} \cap [0,1] \;|\; \neg L_r\varphi \in w\}},\] is finitely additive and continuous from above at the empty set.
\item[3.] For each $w\in \Omega_c$, $\mu_c(w)$ has an unique extension to a countably additive measure $T_c(w): \mathcal{A}_c\to [0, 1]$. Further, it is a probability measure.
\item[4.] $T_c: \Omega_c\times \mathcal{A}_c\to [0, 1]$ is a measurable function.
\end{itemize}
\end{fact}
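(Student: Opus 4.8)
The plan is to prove the four items in the order given, since each leans on the previous ones. For Part 1, everything is a consequence of the fact that each $w\in\Omega_c$ is maximally consistent (saturated), hence negation complete and closed under deduction. The identity $\Omega_c\setminus[\varphi]=[\neg\varphi]$ is exactly negation completeness ($\varphi\notin w$ iff $\neg\varphi\in w$), while $[\varphi]\cap[\psi]=[\varphi\wedge\psi]$ follows from closure under deduction together with the tautologies $\varphi\wedge\psi\to\varphi$, $\varphi\wedge\psi\to\psi$ and $\varphi\to(\psi\to\varphi\wedge\psi)$, which give $\varphi,\psi\in w$ iff $\varphi\wedge\psi\in w$. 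Since $\Omega_c=[\top]\in\mathcal B_c$ and these two identities supply closure under complement and finite intersection, $\mathcal B_c$ is an algebra.

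For Part 2 I would first verify well-definedness, i.e.\ the stated $\sup=\inf$. The index-monotonicity $\vdash L_r\varphi\to L_s\varphi$ for $s\le r$ (a standard consequence of the additivity axioms) shows, via closure under deduction, that $R_\varphi:=\{r:L_r\varphi\in w\}$ is a downward-closed subset of $\mathbb Q\cap[0,1]$, and negation completeness makes $\{r:\neg L_r\varphi\in w\}$ precisely its complement, which is therefore upward closed; density of the rationals then forces $\sup R_\varphi=\inf(\mathbb Q\cap[0,1]\setminus R_\varphi)$, as a Dedekind-style cut. Finite additivity on $\mathcal B_c$ is where FA$_1$--FA$_4$ enter: for disjoint $[\alpha],[\beta]$ (so $\vdash\neg(\alpha\wedge\beta)$) with union $[\alpha\vee\beta]$, instantiate FA$_3$ and FA$_4$ with $\varphi=\alpha\vee\beta$ and $\psi=\alpha$, noting $\varphi\wedge\psi\equiv\alpha$ and $\varphi\wedge\neg\psi\equiv\beta$; the two axioms yield the two inequalities and hence $\mu_c(w)([\alpha\vee\beta])=\mu_c(w)([\alpha])+\mu_c(w)([\beta])$.

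The crux of the whole statement is the last clause of Part 2, continuity from above at $\emptyset$, and this is where the \emph{Archimedean} property of saturated sets is indispensable. Given a decreasing sequence $[\varphi_n]\downarrow\emptyset$ in $\Omega_c$, I would set $c=\inf_n\mu_c(w)([\varphi_n])$ and suppose $c>0$; picking a rational $0<s<c$ forces $L_s\varphi_n\in w$ for every $n$ (using the sup-description and index-monotonicity), so in particular each $\varphi_n$ is consistent and $\{\varphi_n\}_n$ is finitely consistent. Passing to the compact Stone space $\langle\Omega^*,\tau^*\rangle$ of Remark \ref{Polish}, the clopen sets $[\varphi_n]^*$ are nonempty and decreasing, so the finite intersection property hands us a point $u\in\bigcap_n[\varphi_n]^*$. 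The obstacle is exactly that $u$ need not be saturated, so $u\notin\Omega_c$ and no contradiction with $\bigcap_n[\varphi_n]=\emptyset$ is immediate; here I would invoke the Archimedean rule GArch$_{\bigcirc^n,r}$, through the Lindenbaum construction of Lemma \ref{lin}, to upgrade the finitely consistent family $\{\varphi_n\}$ to a genuinely saturated set lying in $\bigcap_n[\varphi_n]$, contradicting emptiness. Equivalently (the form I would ultimately write out) one extends the finitely additive measure to the clopen algebra of the compact space $\Omega^*$, where countable additivity is automatic, and shows that the non-saturated part $\Omega^*\setminus\Omega_c$ is null, each Archimedean-failure set $\bigcap_{t<r}[\bigcirc^m L_{r_1\dots r_k t}\theta]^*\cap[\neg\bigcirc^m L_{r_1\dots r_k r}\theta]^*$ being annihilated by the sup/inf definition together with finite additivity. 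I expect this to be the genuinely technical step (it is Kozen's Lemmas 9--10 in \cite{Kozen2}), and the bookkeeping of the infinitary rule---distinguishing mere finite consistency from consistency---is the main danger point.

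Granting Part 2, Parts 3 and 4 are routine. For Part 3, $\mu_c(w)$ is a finitely additive set function on the algebra $\mathcal B_c$ that is continuous from above at $\emptyset$, hence a premeasure (countably additive on $\mathcal B_c$), and Carath\'eodory's extension theorem produces a countably additive extension $T_c(w)$ to the generated $\sigma$-algebra $\mathcal A_c$; uniqueness holds because the total mass $\mu_c(w)(\Omega_c)=\mu_c(w)([\top])=1$ (using $\vdash L_1\top$ via Nec$_{L_1}$) makes the measure finite, and it is a probability measure for the same reason. For Part 4, I would fix $A$ and show $w\mapsto T_c(w)(A)$ is Borel by a Dynkin ($\pi$--$\lambda$) argument: the class $\mathcal D$ of sets $A\in\mathcal A_c$ for which this map is measurable is a $\lambda$-system, being closed under complements via $T_c(w)(\Omega_c\setminus A)=1-T_c(w)(A)$ and under monotone limits via continuity of measure and measurability of pointwise limits, and it contains the generating $\pi$-system $\mathcal B_c$ because for $A=[\varphi]$ one has $\{w:\mu_c(w)([\varphi])>r\}=\bigcup_{r'>r}[L_{r'}\varphi]$, an open and hence Borel set; Dynkin's theorem then gives $\mathcal D=\mathcal A_c$.
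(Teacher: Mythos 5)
Your proposal is correct and follows essentially the same route as the paper, which gives no proof of its own but defers to Lemmas 8--10 and Theorem 11 of \cite{Kozen2}: finite additivity from FA$_1$--FA$_4$, continuity from above at $\emptyset$ via the compact Stone space $\langle\Omega^*,\tau^*\rangle$ together with the observation that the non-saturated (Archimedean-failure) part is null, then Carath\'eodory and a $\pi$--$\lambda$ argument. One caveat: your first suggested route for continuity from above --- applying the Lindenbaum construction of Lemma \ref{lin} to the family $\{\varphi_n\}$ --- is unsound as stated, since in the infinitary system $\mathcal{H}_{\mathsf{DPL}}$ finite consistency does not imply consistency and Lemma \ref{lin} requires the latter; but you correctly discard this in favour of the null-set argument, which is the one that works.
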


\begin{lemma} \label{f}
Consider $f_c: \Omega_c\to 2^{\mathbb{F}}$ defined as ${f_c}(w): = \{ \varphi\in \mathbb{F} \;|\;\bigcirc\varphi \in w\}$. Then,
\begin{itemize}
\item[1.] For each $w \in \Omega_c$, we have $f_c(w) \in \Omega_c$, and hence $f_c$ can be considered as a function from $\Omega_c$ to $\Omega_c$.
\item[2.] $f_c$ is a measurable function.
\end{itemize}
\end{lemma}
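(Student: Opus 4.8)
The plan is to establish Part 1 by verifying the three defining clauses of saturation (Definition \ref{satu}) for the set $u := f_c(w)$, and to establish Part 2 by showing that $f_c$ is in fact continuous, so that measurability comes for free. Throughout I would use that $w$, being saturated, is negation complete, has the Archimedean property, and (by the Proposition preceding Lemma \ref{lin}) is closed under deduction in $\mathcal{H}_{\mathsf{DPL}}$.

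For Part 1, I would first treat \emph{negation completeness}. Fix an arbitrary $\varphi \in \mathbb{F}$. Since $w$ is negation complete, either $\bigcirc\varphi \in w$ or $\neg\bigcirc\varphi \in w$. In the first case $\varphi \in u$ by definition; in the second case the functionality axiom Func$_\bigcirc$, together with the deductive closure of $w$, yields $\bigcirc\neg\varphi \in w$, whence $\neg\varphi \in u$. Next I would handle the \emph{Archimedean property}: if $\{\bigcirc^n L_{r_1 \dots r_k s}\varphi \mid s < r\} \subseteq u$, then by the definition of $u$ we have $\{\bigcirc^{n+1} L_{r_1 \dots r_k s}\varphi \mid s < r\} \subseteq w$; applying the Archimedean property of $w$ with $n+1$ in place of $n$ gives $\bigcirc^{n+1} L_{r_1 \dots r_k r}\varphi \in w$, i.e. $\bigcirc^n L_{r_1 \dots r_k r}\varphi \in u$, as required.

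The main work of Part 1 is \emph{finite consistency}, and this is where I expect the only real (though still modest) obstacle to lie, since one must transport a hypothetical inconsistency of $u$ back to one of $w$ across the $\bigcirc$ operator. Suppose a finite subset $\{\varphi_1, \dots, \varphi_m\} \subseteq u$ were inconsistent, so that $\vdash \neg(\varphi_1 \wedge \dots \wedge \varphi_m)$. Applying the rule Nec$_\bigcirc$ gives $\vdash \bigcirc\neg(\varphi_1 \wedge \dots \wedge \varphi_m)$, and then Func$_\bigcirc$ yields $\vdash \neg\bigcirc(\varphi_1 \wedge \dots \wedge \varphi_m)$. On the other hand, each $\bigcirc\varphi_i \in w$ by the definition of $u$, and an easy induction on $m$ using the conjunction axiom Conj$_\bigcirc$ shows (again via deductive closure of $w$) that $\bigcirc(\varphi_1 \wedge \dots \wedge \varphi_m) \in w$. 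These two facts together make $w$ inconsistent, contradicting the saturation of $w$. Hence $u$ is finitely consistent, and combining the three clauses we conclude $f_c(w) \in \Omega_c$.

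For Part 2, since $\mathcal{A}_c$ is the Borel $\sigma$-algebra generated by the basis $\mathcal{B}_c = \{[\varphi] \mid \varphi \in \mathbb{F}\}$, it suffices to compute preimages of basic sets. Unfolding the definitions directly gives $f_c^{-1}([\varphi]) = \{w \in \Omega_c \mid \varphi \in f_c(w)\} = \{w \in \Omega_c \mid \bigcirc\varphi \in w\} = [\bigcirc\varphi]$, which is itself a basic open set and therefore lies in $\tau_c \subseteq \mathcal{A}_c$. Thus the preimage of every basic open set is open, so $f_c$ is continuous and in particular measurable, which completes the proof.
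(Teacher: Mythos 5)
Your proposal is correct and follows essentially the same route as the paper: verifying the three saturation clauses for $f_c(w)$ (finite consistency via Nec$_\bigcirc$, Func$_\bigcirc$ and Conj$_\bigcirc$; negation completeness via Func$_\bigcirc$; the Archimedean property by shifting $n$ to $n+1$), and establishing measurability through the identity $f_c^{-1}([\varphi]) = [\bigcirc\varphi]$. The only cosmetic difference is that you phrase the finite-consistency contradiction as deriving $\bigcirc\bigl(\bigwedge_i \varphi_i\bigr) \in w$ against the theorem $\neg\bigcirc\bigl(\bigwedge_i \varphi_i\bigr)$, whereas the paper observes directly that $\bigcirc w'$ is an inconsistent finite subset of $w$; these are the same argument.
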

\begin{proof}
For Part 1, we have
\begin{itemize}
\item[1.] $f_c(w)$ is finitely consistent. Suppose not. Then, there is a finite subset $w'$ of $f_c(w)$ such that $w' \vdash \perp$. Then, $\vdash \neg \bigwedge w'$.
By the rule Nec$_\bigcirc$, we have $\vdash \bigcirc\neg \bigwedge w'$. So by the axioms Func and Conj$_\bigcirc$, we obtain that $\vdash \neg \bigwedge \bigcirc w'$. So $\bigcirc w' \vdash \perp$. But, this contradicts the finite consistency of $w$, since $\bigcirc w'$ is a finite subset of $w$.
\item[2.] $f_c(w)$ is negation complete. Assume that $\varphi\notin f_c(w)$. Then, we have $\bigcirc\varphi\notin w$. So $\neg\bigcirc\varphi\in w$, since $w$ is negation complete. By the axiom Func$_\bigcirc$, $\bigcirc\neg\varphi\in w$. This implies that $\neg\varphi\in f_c(w)$. Similarly, it can be shown that if $\neg\varphi\in f_c(w)$ then $\varphi\notin f_c(w)$.
\item[3.] $f_c(w)$ has the Archimedean property. Suppose that $\{{\bigcirc}^n L_{r_1\dots r_k s} \varphi\;|\; s<r\}\subseteq f_c(w)$. Then, $\{{\bigcirc}^{n+1} L_{r_1\dots r_k s} \varphi\;|\; s<r\}\subseteq w$ by the definition of $f_c$. Now since $w$ has the Archimedean property, so ${\bigcirc}^{n+1} L_{r_1\dots r_k r} \varphi \in w$. Consequently, ${\bigcirc}^{n} L_{r_1\dots r_k r} \varphi \in f_c(w)$.
\end{itemize}
For Part 2, it suffices to show that for each $\varphi\in\mathbb{F}$, $f_{c}^{-1}([\varphi]) = [\bigcirc\varphi]$.
Let $w\in f_{c}^{-1}([\varphi]) $. Then, we have $f_{c}(w)\in [\varphi]$. Hence, $\varphi\in f_{c}(w)$. This yields that $\bigcirc \varphi\in w$. So, we have $w\in [\bigcirc\varphi]$.
The other direction can be shown similarly.
\end{proof}

In the view of Corollary \ref{A}, Fact \ref{T} and Lemma \ref{f}, we introduce a dynamic Markov model which can be exploited as a tool to establish strong completeness theorem.

\begin{definition}[Canonical model for $\mathsf{DPL}$] \label{cpmodel1}
The \textnormal{canonical model} for $\mathsf{DPL}$ is the tuple $\mathfrak{M}_c = \langle{\Omega_c, \mathcal{A}_c, T_c, f_c, v_c}\rangle$ where
\begin{itemize}
\item[1.] $\langle{\Omega_c, \mathcal{A}_c, T_c}\rangle$ is the Markov process obtained from Fact \ref{T};
\item[2.] $f_c: \Omega_c \to \Omega_c$ defined as ${f_c}(w) = \{ \varphi\in \mathbb{F} \;|\;\bigcirc\varphi \in w\}$ for each $w\in \Omega_c$;
\item[3.] $v_c$ is the valuation defined as $v_c(p) = \{ w\in \Omega_c \;|\;p\in w\}$ for each $p \in \mathbb{P}$.
\end{itemize}
\end{definition}

\begin{lemma}[Truth lemma] \label{truth2}
Consider the canonical model ${\mathfrak{M}}_c$. Then for every formula $\varphi$ of $\mathcal{L}_\mathsf{DPL}$ and $w\in\Omega_c$, we have ${\mathfrak{M}}_c, w \vDash\varphi$ iff $\varphi \in w.$
\end{lemma}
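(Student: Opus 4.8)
The plan is to prove the biconditional ${\mathfrak{M}}_c, w \vDash \varphi$ iff $\varphi \in w$ by induction on the structure of $\varphi$, stated uniformly for all $w \in \Omega_c$ (so that the hypothesis is available at every world of lower complexity). The Boolean and temporal cases rely only on the basic features of saturated sets recorded above, whereas the genuine content of the lemma sits in the probability case $\varphi = L_r\psi$.

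For the base case $\varphi = p$, the definition $v_c(p) = \{w \mid p \in w\}$ gives ${\mathfrak{M}}_c, w \vDash p$ iff $w \in v_c(p)$ iff $p \in w$. For $\varphi = \neg\psi$, I would combine the induction hypothesis with negation completeness of $w$: ${\mathfrak{M}}_c, w \vDash \neg\psi$ iff $\psi \notin w$ iff $\neg\psi \in w$. For $\varphi = \psi \wedge \chi$, the induction hypothesis reduces the claim to $\psi,\chi \in w$ iff $\psi \wedge \chi \in w$, which holds since $w$ is maximally consistent, hence deductively closed. For $\varphi = \bigcirc\psi$, I would invoke Lemma \ref{f}(1), which guarantees $f_c(w) \in \Omega_c$ so that the induction hypothesis applies at $f_c(w)$: ${\mathfrak{M}}_c, w \vDash \bigcirc\psi$ iff ${\mathfrak{M}}_c, f_c(w) \vDash \psi$ iff $\psi \in f_c(w)$ iff (by the definition of $f_c$) $\bigcirc\psi \in w$.

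The crux is $\varphi = L_r\psi$. By the induction hypothesis the truth set of $\psi$ is exactly the basic set $[\![\psi]\!]_{{\mathfrak{M}}_c} = [\psi] \in \mathcal{B}_c \subseteq \mathcal{A}_c$, whence ${\mathfrak{M}}_c, w \vDash L_r\psi$ iff $T_c(w)([\psi]) \geq r$. Since $[\psi]$ lies in the generating algebra $\mathcal{B}_c$, Fact \ref{T}(3) lets me replace $T_c(w)$ by $\mu_c(w)$ on this set, and Fact \ref{T}(2) then gives $T_c(w)([\psi]) = \sup\{r' \in \mathbb{Q}\cap[0,1] \mid L_{r'}\psi \in w\}$. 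It thus remains to show that this supremum is $\geq r$ iff $L_r\psi \in w$. The right-to-left direction is immediate, as $L_r\psi \in w$ places $r$ in the defining set of the supremum.

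For the converse, which I expect to be the only delicate point, I would argue as follows. Assume $\sup\{r' \mid L_{r'}\psi \in w\} \geq r$. Then for each rational $s < r$ there is some $r'$ with $s < r'$ and $L_{r'}\psi \in w$; by monotonicity of the probability operators in the index (a standard consequence of the finite-additivity axioms, giving $\vdash L_{r'}\psi \to L_s\psi$ for $s \leq r'$) together with deductive closure of $w$, this yields $L_s\psi \in w$. Hence $\{L_s\psi \mid s < r\} \subseteq w$, and the Archimedean property of the saturated set $w$ (Definition \ref{satu}(3), taken with $n = 0$ and empty list $r_1,\dots,r_k$) delivers $L_r\psi \in w$. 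The main obstacle is precisely this boundary case: the closed ``at least $r$'' reading of $L_r$ forces the passage from the open family $\{L_s\psi \mid s<r\}$ to $L_r\psi$, and it is the Archimedean closure condition built into saturation that legitimizes this step.
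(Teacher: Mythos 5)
Your proof is correct and follows exactly the route the paper intends: the paper's own proof is just the one-line remark that both directions go by simultaneous induction on the complexity of $\varphi$, and your argument fills in that induction, with the Boolean and $\bigcirc$ cases handled as expected and the $L_r$ case correctly reduced via Fact \ref{T} to the equivalence $\sup\{r' \mid L_{r'}\psi \in w\} \geq r \iff L_r\psi \in w$, whose delicate direction you rightly settle by index-monotonicity plus the Archimedean clause of saturation (with $n=0$ and empty index list). No gaps.
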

\begin{proof}
Both direction can be simultaneously proved by induction on complexity of a formula $\varphi$. The proofs are straightforward and follow from the satisfiability relation of each logical complexities of formulas.
\end{proof}

The truth lemma yields the strong completeness theorem for $\mathcal{H}_{\mathsf{DPL}}$.

\begin{theorem}[Strong completeness of $\mathcal{H}_{\mathsf{DPL}}$] \label{scompletef}
$\mathcal{H}_{\mathsf{DPL}}$ is strongly complete with respect to the class $\mathcal{MS}$ of all dynamic Markov processes based on standard Borel spaces.
\end{theorem}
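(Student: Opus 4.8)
The plan is to derive strong completeness from the model-existence (satisfiability) property of the canonical model, in the manner standard for modal and probabilistic logics. First I would reduce the claim to its satisfiability form: it suffices to show that every consistent set $\Gamma$ of formulas of $\mathcal{L}_\mathsf{DPL}$ is satisfiable at some world of a dynamic Markov model whose underlying state space is a standard Borel space. Granting this, strong completeness follows by contraposition. If $\Gamma \nvdash \varphi$, then by the deduction theorem (Theorem \ref{De}) and propositional reasoning the set $\Gamma \cup \{\neg\varphi\}$ is consistent in $\mathcal{H}_{\mathsf{DPL}}$, hence satisfiable at some world $w$ of a model $\mathfrak{M}$ based on a standard Borel space; then every formula of $\Gamma$ holds at $w$ while $\varphi$ fails there, contradicting $\Gamma \vDash_{\mathcal{MS}} \varphi$. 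Thus $\Gamma \vDash_{\mathcal{MS}} \varphi$ forces $\Gamma \vdash \varphi$.

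To establish the satisfiability property I would assemble the components already developed. Given a consistent $\Gamma$, the Lindenbaum lemma (Lemma \ref{lin}) extends it to a saturated set $w^* \supseteq \Gamma$, which by the corollary following Definition \ref{satu} is exactly a maximally consistent set, and therefore an element of the carrier $\Omega_c$ of the canonical model. By Corollary \ref{A} the pair $\langle \Omega_c, \mathcal{A}_c \rangle$ is a standard Borel space; by Fact \ref{T} the function $T_c$ is a genuine Markov kernel on this space; and by Lemma \ref{f} the map $f_c$ is a measurable self-map of $\Omega_c$. Hence $\mathfrak{M}_c = \langle \Omega_c, \mathcal{A}_c, T_c, f_c, v_c \rangle$ is a bona fide dynamic Markov model based on a standard Borel space, i.e. an element of the class $\mathcal{MS}$.

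Finally I would invoke the Truth lemma (Lemma \ref{truth2}): for every formula $\psi$ and every $w \in \Omega_c$ we have $\mathfrak{M}_c, w \vDash \psi$ iff $\psi \in w$. Applying this at $w^*$ and using $\Gamma \subseteq w^*$ yields $\mathfrak{M}_c, w^* \vDash \gamma$ for every $\gamma \in \Gamma$, so $\Gamma$ is satisfiable in $\mathfrak{M}_c \in \mathcal{MS}$, which completes the reduction and hence the proof.

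As for the main obstacle, the theorem itself is a short assembly, since all the substantive work has been front-loaded into the supporting results. The genuinely hard steps are precisely the ones discharged earlier: verifying in Fact \ref{T} that the finitely additive set function $\mu_c(w)$ on the algebra $\mathcal{B}_c$ is continuous from above at the empty set, so that it extends (by the Carath\'eodory/Hahn--Kolmogorov theorem) to a countably additive probability measure $T_c(w)$ on $\mathcal{A}_c$; and establishing in Corollary \ref{A}, through Remark \ref{Polish} and the $G_\delta$ argument of Kozen, that $\langle \Omega_c, \mathcal{A}_c \rangle$ is standard Borel rather than merely a measurable space. The delicate point specific to $\mathsf{DPL}$ is that the Archimedean property of saturated sets is exactly what guarantees the supremum and infimum defining $\mu_c(w)$ coincide and that the measure extension goes through; once this and the standard Borel structure are in hand, no further difficulty arises in the theorem proper.
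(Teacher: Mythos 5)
Your proposal is correct and follows essentially the same route as the paper's proof: contraposition to the consistency of $\Gamma\cup\{\neg\varphi\}$, the Lindenbaum lemma (Lemma \ref{lin}) to obtain a saturated set $w^*\supseteq\Gamma\cup\{\neg\varphi\}$ in $\Omega_c$, and the Truth lemma (Lemma \ref{truth2}) applied in the canonical model $\mathfrak{M}_c$, which lives in $\mathcal{MS}$ by Corollary \ref{A}, Fact \ref{T} and Lemma \ref{f}. Your closing remarks correctly identify that the substantive content is front-loaded into those supporting results.
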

\begin{proof}
Let $\Gamma\cup\{\varphi\}$ be a set of formulas of $\mathcal{L}_\mathsf{DPL}$ such that $\Gamma\nvdash\varphi$. Then $\Gamma\cup\{\neg\varphi\}$ is consistent. Therefore, by Lindenbaum lemma, there is a saturated set $w^*$ such that $ \Gamma\cup\{\neg\varphi\}\subseteq w^*$. But by truth lemma, we have ${\mathfrak{M}}_c, w^*\vDash\gamma$ for each $\gamma\in \Gamma$ and ${\mathfrak{M}}_c, w^*\nvDash\varphi$. So we have a dynamic Markov model based on a standard Borel space with a node in which $\Gamma$ is satisfied and $\varphi$ is not satisfied, that is, $\Gamma\nvDash\;\varphi$.
\end{proof}
\subsection{Computability in topological and metric spaces} \label{ETOP}

As said earlier, in this subsection, we collect all necessary definitions and facts about computability of topological spaces in the view of computable analysis (see e.g. \cite{Bra, Hoy, wi, Wi and Gru}). Although, we assume that the reader is familiar with the basic notions of computable analysis, we will sketch the basic definitions of computable analysis needed here. For a more detailed discussion of materials on this subject, one can see the book \cite{wi}.

Recall that a real number $x$ is {\em computable} if there is a computable sequence $(x_i)_{i\in\mathbb{N}}$ of rational numbers which converges effectively to $x$. This means that
there exists a computable function $f: \mathbb{N}\to \mathbb{N}$ such that for each $n$ and $i \geq f(n)$, we have that $|x-x_i|<2^{-n}$. Likewise a sequence $(b_i)_{i\in \mathbb{N}}$ of real numbers is computable if there is a double sequence $(a_{i,k})_{i,k\in \mathbb{N}}$ of computable rational numbers for which the sequence $(a_{i,k})_{k\in \mathbb{N}}$ converges to $b_i$ effectively and uniformly in $i$. In other words, there exists a computable function $g: \mathbb{N}^2 \to \mathbb{N}$ such that for each $i$, $n$ and $k\geq g(i, k)$, we have that $|b_i-a_{i, k}|<2^{-n}$. Notice that the bijective paring function $\langle , \rangle: \mathbb{N}^2 \to \mathbb{N}$ would allow us to consider a double sequence on $\mathbb{N}^2$ as a sequence on $\mathbb{N}$. Therefore, a computable double sequence of real numbers can be considered as a computable sequence of real numbers.

Throughout this subsection, we assume that all topological spaces are Hausdorff and second countable.

\begin{definition}[\cite{Wi and Gru}, Definition 4] \label{comput}
Let $\langle X,\tau\rangle$ be a topological space and $\mathcal{B}$ be a countable base for the topology $\tau$. Suppose that $(O_i)_{i\in \mathbb{N}}$ ia an enumeration of $\mathcal{B}$. The triple $\langle X,\tau, (O_i)_{i\in \mathbb{N}}\rangle$ is a {\em computable topological space} if there is a total computable function $f: \mathbb{N}^3 \to \mathbb{N}$ such that $O_i \cap O_j= \bigcup_{k\in \mathbb{N}} O_{f(i,j, k)}$ for each $i, j\in \mathbb{N}$.
\end{definition}

Similarly, the notion of a computable metric space can be given as follows.

\begin{definition} [\cite{Hoy}, Definition 2.4.1]
A {\em computable metric space} is a triple $\langle X, d, (a_i)_{i\in \mathbb{N}}\rangle$ where $\langle X, d \rangle$ is a metric space and $(a_i)_{i\in \mathbb{N}}$ is a dense subset of $X$ such that the sequence $(d(a_i, a_j))_{i, j\in \mathbb{N}}$ forms a computable double sequence of real numbers.
\end{definition}

Let $\mathcal{B}_d$ be the set consisting of open balls around $a_i$'s with positive rational radius. Then $\mathcal{B}_d$ is a base for the metric topology $\tau_d$. It is known that for a given computable metric space $\langle X, d, (a_i)_{i\in \mathbb{N}}\rangle$, there exists a natural enumeration $(O_i)_{i\in\mathbb{N}}$ of $\mathcal{B}_d$ such that the triple $\langle X, \tau_d, (O_i)_{i\in\mathbb{N}} \rangle$ is a computable metric space \cite[Subsection 2.2]{Hoy}.

\begin{definition} [\cite{Bra}, Definition 5.2.4]
Let $\langle X,\tau, (O_i)_{i\in\mathbb{N}}\rangle$ be a computable topological space. An open subset $U$ of $X$ is called {\em lower-computable} if there is a computable function $f: \mathbb{N} \to \mathbb{N}$ such that $U= \bigcup_{k\in \mathbb{N}} O_{f(k)}$.
\end{definition}

\begin{definition} [\cite{Bra}, Definition 5.2.12] \label{comfun}
Let $\langle X,\tau, (O_i)_{i\in\mathbb{N}}\rangle$ and $\langle Y,\tau', (O'_j)_{j\in\mathbb{N}}\rangle$ be two computable topological spaces. We say that a function $f: X\to Y$ is {\em computable} if $f^{-1}(O'_j)$ is lower-computable uniformly in $j$. This means that there exists a computable function $g:\mathbb{N}^2\to \mathbb{N}$ such that $f^{-1}(O'_j) = \bigcup_{j\in \mathbb{N}} O_{g(j, k)}$.
\end{definition}

Clearly, a computable function is continuous. Now, we turn our attention into the space of probability measures $\mathcal{M}(X)$ of a separable metric space $\langle X,d, (a_i)_{i\in \mathbb{N}}\rangle$. It turns out that the space $\mathcal{M}(X)$ can be endowed with the Prokhorov metric $d_P$. The metric topology provided by this metric coincides with the weak topology on $\mathcal{M}(X)$.

\begin{definition}
Let $\mathcal{M}(X)$ be the set of Borel probability measures over the metric space $X$. This set can be endowed with the {\em weak topology}, which is the finest topology for which $\mu_n \to \mu$ if and only if $\int \! f \mathrm{d}\mu_n \to \int \! f \mathrm{d}\mu$ for all continuous bounded function $f : X \to \mathbb{R}$. This topology is metrizable and if the metric topology of $X$ is Polish, then so is the weak topology on $\mathcal{M}(X)$.
\end{definition}

It is known that if $\mathcal{D} \subset \mathcal{M}(X)$ is the set of those probability measures that are concentrated in finitely many points of $A= \{a_i\; |\; i\in \mathbb{N}\}$
and assign rational values to them, is a dense subset of $\mathcal{M}(X)$ (\cite[Section 6]{Bil}).
The enumerations of $(a_i)_{i\in\mathbb{N}}$ and the set of rational numbers $\mathbb{Q}$ give rise to an enumeration $a_{\mathcal{D}}$ of the set $\mathcal{D}$. $a_{\mathcal{D}} $ consists of measures of the form $\mu_{\langle\langle n_1,\dots , n_k\rangle, \langle m_1,... ,m_k\rangle\rangle}$ that is concentrated over the finite set $\{a_{n_1},\dots , a_{n_k}\}$ and $q_{m_i}$ is the weight of $a_{n_i}$.
\begin{definition} \cite[Section 6]{Bil}
The {\em Prokhorov metric} $\pi$ on $\mathcal{M}(X)$ is defined by:
\[\pi(\mu, \nu) := \inf{\{ \epsilon \in \mathbb{R}^+ \;| \;\mu(A) \leq \nu(A^\epsilon
) + \epsilon \text{ for every Borel set } A \}}\]
where $A^\epsilon = \{x\;| \; d(x, A)< \epsilon\}$.
\end{definition}

\begin{fact}[\cite{Ga}, Proposition B.17] \label{Open}
Assume that $\langle X, d\rangle$ is a matric space and $\nu$ is a measure concentrated on a finite subset $S $ of $X$.
Then, $\pi(\mu, \nu) < \epsilon$ if and only if
$\mu(A^{\epsilon}) > \nu(A) - \epsilon$ for each finite subset $A$ of $S$.
\end{fact}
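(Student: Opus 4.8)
The plan is to unfold the definition of the Prokhorov metric and to exploit two features of the present situation: the finiteness of $S$ and the continuity from below of $\mu$. As a preliminary step I would record the standard fact that, for Borel probability measures, the one-sided condition defining $\pi$ is symmetric: for any $\delta>0$, the inequality $\mu(B)\le \nu(B^\delta)+\delta$ holds for every Borel set $B$ if and only if $\nu(B)\le \mu(B^\delta)+\delta$ holds for every Borel set $B$. This is proved by complementation: given a Borel set $B$, apply the hypothesis to $A=X\setminus B^\delta$ and note that $A^\delta\subseteq X\setminus B$ (if $d(x,A)<\delta$, then the witnessing $a\in A$ has $d(a,B)\ge \delta$, which forces $x\notin B$), so that $\nu(A^\delta)\le 1-\nu(B)$ while $\mu(A)=1-\mu(B^\delta)$; rearranging the hypothesis $\mu(A)\le \nu(A^\delta)+\delta$ yields $\nu(B)\le \mu(B^\delta)+\delta$. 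Consequently $\pi(\mu,\nu)=\inf\{\delta>0 \mid \nu(B)\le \mu(B^\delta)+\delta \text{ for all Borel } B\}$.

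Next I would reduce the family of test sets to subsets of $S$, using that $\nu$ is concentrated on $S$. For an arbitrary Borel set $B$, put $A=B\cap S\subseteq S$; then $\nu(B)=\nu(A)$, and $A\subseteq B$ gives $A^\delta\subseteq B^\delta$, hence $\mu(A^\delta)\le \mu(B^\delta)$. Therefore the condition ``$\nu(B)\le \mu(B^\delta)+\delta$ for every Borel $B$'' is equivalent to ``$\nu(A)\le \mu(A^\delta)+\delta$ for every $A\subseteq S$''. With this reduction in hand the forward direction is immediate: if $\pi(\mu,\nu)<\epsilon$, choose a witnessing $\delta<\epsilon$; then for each $A\subseteq S$ we have $\nu(A)\le \mu(A^\delta)+\delta$, and since $\delta<\epsilon$ forces $A^\delta\subseteq A^\epsilon$ we obtain $\nu(A)\le \mu(A^\epsilon)+\delta<\mu(A^\epsilon)+\epsilon$, that is, $\mu(A^\epsilon)>\nu(A)-\epsilon$.

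The backward direction is the crux. Assume $\mu(A^\epsilon)>\nu(A)-\epsilon$, equivalently $\nu(A)<\mu(A^\epsilon)+\epsilon$, for every $A\subseteq S$. The key observation is that $A^\delta$ increases to $A^\epsilon$ as $\delta\uparrow\epsilon$, since $d(x,A)<\epsilon$ holds precisely when $d(x,A)<\delta$ for some $\delta<\epsilon$; hence continuity from below of the measure $\mu$ gives $\mu(A^\delta)+\delta\to \mu(A^\epsilon)+\epsilon>\nu(A)$ as $\delta\uparrow\epsilon$. Thus for each $A\subseteq S$ there is $\delta_A<\epsilon$ such that $\nu(A)\le \mu(A^\delta)+\delta$ whenever $\delta\in(\delta_A,\epsilon)$. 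Because $S$ is finite there are only finitely many such $A$, so $\delta^*:=\max_A \delta_A<\epsilon$ is well defined; fixing any $\delta\in(\delta^*,\epsilon)$ makes $\nu(A)\le \mu(A^\delta)+\delta$ hold simultaneously for all $A\subseteq S$, whence, by the reduction, for all Borel $B$. This shows $\pi(\mu,\nu)\le \delta<\epsilon$, as desired.

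I expect the main obstacle to be exactly this backward passage: turning finitely many strict inequalities, stated at the single fixed radius $\epsilon$, into one witness $\delta<\epsilon$ for the infimum that defines $\pi$. Finiteness of $S$ (so that only finitely many test sets must be controlled, allowing a uniform $\delta^*$) and continuity from below of $\mu$ (so that $\mu(A^\epsilon)$ is approximated by $\mu(A^\delta)$) are precisely the two ingredients that make this possible; neither can be dispensed with in general.
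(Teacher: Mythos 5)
Your proof is correct. Note that the paper does not prove this statement at all: it is imported verbatim as a Fact from G\'acs (Proposition B.17 of the cited reference), so there is no in-paper argument to compare against. Your three ingredients --- the complementation argument showing the one-sided Prokhorov condition is symmetric for probability measures, the reduction of the test sets to subsets of $S$ using that $\nu$ is concentrated there, and the passage from the single radius $\epsilon$ to a uniform witness $\delta<\epsilon$ via continuity from below of $\mu$ together with the finiteness of the collection $\{A : A\subseteq S\}$ --- are exactly the standard ones, and you have correctly identified the backward direction as the only place where finiteness of $S$ is genuinely needed. The only cosmetic remark is that in the forward direction one should say explicitly that $\pi(\mu,\nu)<\epsilon$ yields, by definition of the infimum, some $\delta<\epsilon$ for which the defining condition holds (the infimum need not be attained), which is what you in fact use.
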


\begin{fact} [\cite{Hoy}, Proposition 4.1.1] \label{M(X)}
The triple $\langle \mathcal{M}(X), \pi, a_{\mathcal{D}}\rangle$ is a computable metric space.
\end{fact}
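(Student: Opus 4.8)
The plan is to verify directly the three requirements in the definition of a computable metric space for the triple $\langle \mathcal{M}(X), \pi, a_{\mathcal{D}}\rangle$. That $\langle \mathcal{M}(X), \pi\rangle$ is a metric space and that the enumerated family $\mathcal{D}$ (hence $a_{\mathcal{D}}$) is dense in $\mathcal{M}(X)$ are the classical facts recalled above from \cite{Bil}, so the whole content lies in the last clause: showing that the double sequence $\big(\pi(\mu,\nu)\big)$, with $\mu$ and $\nu$ ranging over the measures listed by $a_{\mathcal{D}}$, is a computable double sequence of reals, uniformly in the two indices. First I would unwind what these measures are: each is of the form $\mu=\sum_i w^\mu_i\,\delta_{a_{n_i}}$, concentrated on finitely many of the dense points $a_{n_i}$ with rational weights $w^\mu_i = q_{m_i}$, all of which are read off effectively from the coding index via the pairing functions. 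So, given two such measures $\mu$ (supported on a finite set $R$) and $\nu$ (supported on a finite set $S$), the task reduces to approximating $\pi(\mu,\nu)$ to within $2^{-N}$ uniformly in $N$ and in the two indices.

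The key device is Fact \ref{Open}: since $\nu$ is concentrated on the finite set $S$, for every $\epsilon$ we have $\pi(\mu,\nu)<\epsilon$ iff $\mu(A^\epsilon) > \nu(A)-\epsilon$ for each $A\subseteq S$, a finite conjunction. Here $\mu(A^\epsilon) = \sum\{w^\mu_i : d(a_{n_i}, A) < \epsilon\}$, and $d(a_{n_i},A)=\min_{a\in A} d(a_{n_i},a)$ is a finite minimum of entries of the computable double sequence $\big(d(a_i,a_j)\big)$, hence itself computable uniformly. Because $\mu(A^\epsilon)$ is nondecreasing in $\epsilon$ and jumps only at the finitely many (computable) values $d(a_{n_i},A)$, one can, for each fixed $A$, express the monotone threshold $\epsilon_A = \inf\{\epsilon : \mu(A^\epsilon) > \nu(A)-\epsilon\}$ as a finite combination of those distances and the rational weights using only $+$, $\max$ and $\min$, and then $\pi(\mu,\nu) = \max_{A\subseteq S}\epsilon_A$. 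Since $+$, $\max$ and $\min$ are computable operations on reals that preserve uniform computability, and since every datum feeding this expression is either a rational or an entry of the computable sequence $\big(d(a_i,a_j)\big)$, the value $\pi(\mu,\nu)$ is computable uniformly in the two indices, which is exactly the required double-sequence condition.

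The step I expect to be the main obstacle is making the threshold computation for $\epsilon_A$ rigorous, because the natural description of $A^\epsilon$ uses the strict inequality $d(\cdot,A)<\epsilon$, and the jump points $d(a_{n_i},A)$ are computable reals that may coincide (ties) or sit exactly at a candidate threshold; as comparison of computable reals is not in general decidable, I cannot simply locate which interval $\pi$ falls into. The way around this is to avoid deciding any equality at all: exploiting the monotonicity of $\mu(A^\epsilon)$ in $\epsilon$, one writes $\epsilon_A$, and then $\pi(\mu,\nu)$, purely as a nested $\max$/$\min$ over the finitely many candidate values $\{d(a_{n_i},A)\}$ together with the rationals of the form $\nu(A)-c$ (for $c$ a partial weight sum), so that only $\max$, $\min$ and $+$ are used and no boundary case ever has to be resolved by hand. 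With that closed form in place the conclusion follows, completing the verification that $\langle \mathcal{M}(X), \pi, a_{\mathcal{D}}\rangle$ is a computable metric space.
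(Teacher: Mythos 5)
Your argument is correct, but note that the paper does not prove this statement at all: it is imported verbatim as Proposition 4.1.1 of Hoyrup--Rojas, so there is no in-paper proof to match. Your route is essentially the standard one (reduce to uniform computability of $\pi(\mu,\nu)$ for $\mu,\nu\in\mathcal{D}$ via the finite-support characterization of Fact~\ref{Open}), but where the cited source argues by showing that, for rational $\epsilon$, the relations $\pi(\mu,\nu)<\epsilon$ and $\pi(\mu,\nu)>\epsilon$ are semi-decidable (upper plus lower semicomputability), you instead extract an explicit closed form, which is arguably more elementary and makes the uniformity transparent. The one step you leave implicit --- the min/max expression for the threshold --- does go through: for $A\subseteq S$ and $B$ ranging over subsets of the support of $\mu$, a short monotonicity argument gives
\begin{equation*}
\epsilon_A=\min_{B}\,\max\Bigl(\max_{i\in B} d(a_{n_i},A),\;\nu(A)-\mu(B)\Bigr),\qquad \pi(\mu,\nu)=\max_{A\subseteq S}\epsilon_A,
\end{equation*}
since any $\epsilon$ exceeding the inner $\max$ for some $B$ forces $\mu(A^{\epsilon})\geq\mu(B)>\nu(A)-\epsilon$, while conversely any $\epsilon$ with $\mu(A^{\epsilon})>\nu(A)-\epsilon$ exceeds the candidate value for $B=\{i: d(a_{n_i},A)<\epsilon\}$. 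All quantities appearing are rationals or finite $\min$'s of entries of the computable double sequence $(d(a_i,a_j))$, and $+$, $\max$, $\min$ are uniformly computable on reals without any equality tests, exactly as you say; so the boundary/tie issue you flag never has to be resolved. Two small points worth making explicit if you write this up: the Prokhorov metric as defined is one-sided in appearance but symmetric (so applying Fact~\ref{Open} with $\nu$ as the finitely supported measure is legitimate even though both of yours are), and the empty set $A=\emptyset$ contributes $\epsilon_{\emptyset}=0$ and so is harmless in the outer maximum.
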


\subsection{Computability of canonical model} \label{comcan}

We finally turn our attention into proving the main result of this section showing that the canonical model $\mathfrak{M}_c = \langle{\Omega_c, \mathcal{A}_c, T_c, f_c, v_c}\rangle$ introduced in Definition \ref{cpmodel1} is a computable structure.

In order to achieve to this goal, we have to give an effective enumeration of formulas of $\mathcal{L}_{\mathsf{DPL}}$. So we use standard G\"odel numbering and effectively enumerate formulas of $\mathcal{L}_{\mathsf{DPL}}$. This means that we effectively associate a unique natural number to each symbol of the language $\mathcal{L}_{\mathsf{DPL}}$ and code each formula of $\mathsf{DPL}$ as a finite sequence of symbols in accordance with the usual G\"odel coding methods.
This coding defines a bijective function $\ulcorner . \urcorner$ from the set of all formulas into a computable set $Form \subseteq \mathbb{N}$. This bijection, in particular, gives an enumeration of all formulas according to their G\"odel numbering. So for each $n\in \mathbb{N}$, we let $\varphi_n$ be the unique formula in $\mathsf{DPL}$ whose G\"odel numbering is the $n$-th member of the   set $Form$. Henceforth, in the sequel, suppose that $\mathcal{E}:= (\varphi_0, \varphi_1, \dots, \varphi_n, \dots)$ is the above enumeration.
Decidability of $\mathsf{DPL}$ implies the following straightforward proposition.

\begin{proposition} \label{sets}
The following sets are computable:
\begin{itemize}
\item $T:= \{i\in \mathbb{N}\;|\; \vdash \varphi_i \}$,
\item $C:=\{i\in \mathbb{N}\;|\; \varphi_i \;\text{is consistent}\}$,
\item $D:= \{(i, j) \in \mathbb{N}^2\;|\; \varphi_i \vdash \varphi_j \}$.
\end{itemize}
\end{proposition}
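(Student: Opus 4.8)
The plan is to reduce each of the three decision problems to the decidability of the satisfiability problem established in Theorem \ref{decid}, combining it with soundness (Theorem \ref{sound}), weak completeness (Theorem \ref{com}), and the deduction theorem (Theorem \ref{De}). The only extra ingredient needed is the routine observation that the basic syntactic operations on formulas are effective with respect to the G\"odel enumeration $\mathcal{E}$. Concretely, since $Form\subseteq\mathbb{N}$ is computable and $\ulcorner\cdot\urcorner$ is a computable bijection onto $Form$, the maps $i\mapsto\ulcorner\neg\varphi_i\urcorner$ and $(i,j)\mapsto\ulcorner\varphi_i\to\varphi_j\urcorner$ are computable, because forming $\neg\varphi$ or $\varphi\to\psi$ from the underlying symbol codes is a primitive recursive operation. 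Composing with the inverse of $\ulcorner\cdot\urcorner$ yields computable functions $\mathrm{neg}$ and $\mathrm{imp}$ with $\varphi_{\mathrm{neg}(i)}=\neg\varphi_i$ and $\varphi_{\mathrm{imp}(i,j)}=\varphi_i\to\varphi_j$. I also record that, by Corollary \ref{proof} and the subsequent corollary, derivability and consistency for single formulas agree in $\mathcal{H}^{-}_{\mathsf{DPL}}$ and $\mathcal{H}_{\mathsf{DPL}}$, so Theorem \ref{decid} may be freely invoked in what follows.

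For the set $T$ of (codes of) theorems, I would argue that by soundness and weak completeness, $\vdash\varphi$ iff $\vDash\varphi$, and $\vDash\varphi$ holds exactly when $\neg\varphi$ is unsatisfiable. Thus, given $i$, one computes $\mathrm{neg}(i)$ and runs the satisfiability decision procedure of Theorem \ref{decid} on $\varphi_{\mathrm{neg}(i)}=\neg\varphi_i$, concluding that $i\in T$ iff $\neg\varphi_i$ is not satisfiable. This is a total effective procedure, so $T$ is computable. For the set $C$ of consistent formulas, I use that $\varphi_i$ is consistent iff $\nvdash\neg\varphi_i$, which by soundness and weak completeness is equivalent to $\varphi_i$ being satisfiable. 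Hence $i\in C$ iff $\varphi_i$ is satisfiable, and the latter is decidable directly by Theorem \ref{decid}.

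For the relation $D$, I would apply the deduction theorem: $\varphi_i\vdash\varphi_j$ iff $\vdash\varphi_i\to\varphi_j$, that is, iff $\mathrm{imp}(i,j)\in T$. Since $T$ is computable by the first step and $\mathrm{imp}$ is a computable function, $D$ is the preimage of a computable set under a computable function and is therefore computable. All three arguments are short; the statements are, as the text notes, straightforward consequences of Theorem \ref{decid}. The only point that genuinely requires attention—though it is minor—is the effectivity of the syntactic operations $\mathrm{neg}$ and $\mathrm{imp}$ under the chosen G\"odel coding, which is what licenses passing from a decision procedure on formulas to a decision procedure on their indices in $\mathcal{E}$.
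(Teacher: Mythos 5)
Your proposal is correct and follows exactly the route the paper intends: the paper offers no explicit proof, merely noting that the proposition is a ``straightforward'' consequence of the decidability of $\mathsf{DPL}$ (Theorem \ref{decid}), and your reductions of $T$, $C$ and $D$ to the decidable satisfiability problem via soundness, weak completeness and the deduction theorem are precisely the intended argument. Your explicit attention to the effectivity of the syntactic maps $\mathrm{neg}$ and $\mathrm{imp}$ under the G\"odel coding, and to the agreement of $\mathcal{H}^{-}_{\mathsf{DPL}}$ and $\mathcal{H}_{\mathsf{DPL}}$ on single-formula derivability, is a welcome (if routine) addition that the paper leaves implicit.
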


For each $i\in \mathbb{N}$, we let $O_i=[\varphi_i] = \{w\in \Omega_c\;|\; \varphi_i\in w\}$.
The following proposition follows easily from the decidability of $\mathsf{DPL}$.

\begin{proposition}
The triple $\langle \Omega_c, \tau_c, (O_i)_{i\in \mathbb{N}}\rangle$ is a computable topological space.
\end{proposition}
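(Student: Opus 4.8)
The plan is to check the two clauses of Definition~\ref{comput}. The structural hypotheses are immediate. By Definition~\ref{topo}, $\mathcal{B}_c = \{[\varphi] \mid \varphi \in \mathbb{F}\}$ is a base for $\tau_c$ (it covers $\Omega_c$ since $[\top] = \Omega_c$), and the G\"odel enumeration $\mathcal{E} = (\varphi_0, \varphi_1, \dots)$ converts it into the enumeration $(O_i)_{i\in\mathbb{N}}$ with $O_i = [\varphi_i]$. Second countability is clear because $\mathbb{F}$ is countable, and the Hausdorff property is inherited from the Stone space $\langle \Omega^*, \tau^*\rangle$, of which $\langle \Omega_c, \tau_c\rangle$ is a subspace by Remark~\ref{Polish}. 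So the real content is to produce a total computable $f\colon \mathbb{N}^3 \to \mathbb{N}$ realizing $O_i \cap O_j = \bigcup_{k} O_{f(i,j,k)}$.

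The decisive point is that $\mathcal{B}_c$ is already closed under finite intersection. Indeed, Part~1 of Fact~\ref{T} gives $O_i \cap O_j = [\varphi_i] \cap [\varphi_j] = [\varphi_i \wedge \varphi_j]$ for all $i, j$, so the intersection of two basic opens is itself a single basic open and no infinite union is ever needed. It therefore suffices to locate the set $[\varphi_i \wedge \varphi_j]$ within the enumeration, i.e.\ to compute an index $g(i,j)$ with $\varphi_{g(i,j)} = \varphi_i \wedge \varphi_j$, and then set $f(i,j,k) := g(i,j)$ for every $k$; this yields $\bigcup_{k} O_{f(i,j,k)} = O_{g(i,j)} = [\varphi_i \wedge \varphi_j] = O_i \cap O_j$.

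To see that $g$ is total computable I would appeal to the effectiveness of the coding $\ulcorner \cdot \urcorner$ onto the computable set $Form$. Given $i$ and $j$, one computably recovers $\ulcorner \varphi_i \urcorner$ and $\ulcorner \varphi_j \urcorner$ (the $i$-th and $j$-th members of $Form$), decodes them, applies the primitive recursive syntactic operation that forms the conjunction $\varphi_i \wedge \varphi_j$, computes the G\"odel number of the result, and returns its rank in $Form$. Each of these steps is effective, so $g$, and hence $f$, is total computable.

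The only delicacy is the index--code bookkeeping just described: since $\mathcal{E}$ orders formulas by their position in $Form$ rather than directly by their G\"odel number, one must move back and forth between an index $n$ and the code $\ulcorner \varphi_n \urcorner$. Because $Form$ is computable both passages are computable, and forming a conjunction is a trivial manipulation on codes, so no genuine obstacle arises. I would stress that the substance here is the purely syntactic identity $[\varphi] \cap [\psi] = [\varphi \wedge \psi]$ together with effective coding; the full decidability of $\mathsf{DPL}$ (Theorem~\ref{decid}, used for Proposition~\ref{sets}) is not actually required for the intersection condition itself.
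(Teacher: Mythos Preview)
Your proof is correct and reaches the same conclusion via a slightly different, more elementary route. The paper, like you, uses $O_i \cap O_j = [\varphi_i \wedge \varphi_j]$, but then defines
\[
f(i,j,k) = \min\{\, l \in \mathbb{N} \mid {}\vdash \varphi_l \leftrightarrow \varphi_i \wedge \varphi_j \,\}
\]
and appeals to the decidability of $\mathsf{DPL}$ (Proposition~\ref{sets}) to conclude that this minimum is computable. You instead compute the index of the \emph{literal} conjunction $\varphi_i \wedge \varphi_j$ directly from the G\"odel coding, using only that $Form$ is computable and that the syntactic operation of conjunction is primitive recursive on codes. Your observation that decidability is not actually needed here is correct and worth making: the paper's choice buys nothing extra for this proposition, while your argument isolates the minimal ingredients (closure of the basis under intersection plus effective syntax). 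Either way the resulting $f$ is constant in $k$, so the two proofs differ only in how that single index is located.
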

\begin{proof}
To see this, it is enough to prove that there exists a computable function $f: \mathbb{N}^3\to \mathbb{N}$ such that $O_i \cap O_j = \bigcup_{k\in \mathbb{N}} O_{f(i, j, k)}$, for each $i,j\in \mathbb{N}$. But since $\mathcal{B}$ is closed under intersection, if $O_i= [\varphi_i]$ and $O_j= [\varphi_j]$, then $O_i \cap O_j= [\varphi_i\land \varphi_j]$. So define the function $f$ as \[f(i, j, k)= \min{\{l\in\mathbb{N} \; |\;\vdash (\varphi_l\leftrightarrow\varphi_i\land\varphi_j)}\}.\]
By Proposition \ref{sets}, it is clear that the function $f$ is computable.
\end{proof}

In the following, we will show that the space $\langle \Omega_c, \tau_c\rangle$ forms a computable metric space. So we will define a metric $d_c$ which generates the same topology as $\tau_c$.

\begin{definition} Let $w_1, w_2\in \Omega_c$. We say that $w_1$ and $w_2$ {\em disagree} on a formula $\varphi$ if $(\varphi \in w_1 \land \neg \varphi\in w_2)$ or $(\neg\varphi \in w_1 \land \varphi\in w_2)$.
Define
\begin{equation*}
d_c(w_1, w_2) = \left\{
\begin{array}{rl}
0\;\;\; & \text{if } w_1= w_2\\
\dfrac{1}{2^{n_0}} & \text{if } n_0= \min{\{n\in \mathbb{N}\;|\; w_1 \text{ and } w_2 \text{ disagree on }\varphi_n\}}
\end{array} \right.
\end{equation*}
\end{definition}

\begin{proposition}
$d_c$ is a metric on $\Omega_c$. Moreover, the metric topology coincides with $\tau_c$.
\end{proposition}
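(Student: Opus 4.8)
The plan is to verify the three metric axioms for $d_c$ and then show that the $d_c$-metric topology has the same open sets as $\tau_c$. For the metric axioms, non-negativity and the identity of indiscernibles follow directly from the definition together with negation-completeness: if $w_1 \neq w_2$ then, since both are negation complete, there is a least $n$ on which they disagree, so $d_c(w_1, w_2) = 1/2^{n_0} > 0$. Symmetry is immediate from the symmetric phrasing of ``disagree.'' The only axiom requiring an argument is the triangle inequality, and here I would use the standard fact that an ultrametric dominates the ordinary triangle inequality. Concretely, I claim $d_c(w_1, w_3) \leq \max\{d_c(w_1, w_2), d_c(w_2, w_3)\}$. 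To see this, let $n_0$ be the least index on which $w_1$ and $w_3$ disagree. Then $w_1$ and $w_3$ agree on every $\varphi_m$ with $m < n_0$, so for each such $m$ either $w_2$ agrees with both or disagrees with both; in either case the first disagreement of $w_1$ with $w_2$ and the first disagreement of $w_2$ with $w_3$ cannot both occur at an index exceeding $n_0$, whence $\max\{d_c(w_1, w_2), d_c(w_2, w_3)\} \geq 1/2^{n_0} = d_c(w_1, w_3)$. Since the maximum is bounded by the sum, the ordinary triangle inequality follows.

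For the topological coincidence, I would show that each $\tau_c$-basic open set $[\varphi]$ is $d_c$-open and, conversely, each $d_c$-ball is $\tau_c$-open. For the first direction, fix $w \in [\varphi]$, so $\varphi \in w$; since $\varphi$ equals $\varphi_N$ for some $N$ under the G\"odel enumeration $\mathcal{E}$, any $w'$ with $d_c(w, w') < 1/2^{N}$ agrees with $w$ on $\varphi_0, \dots, \varphi_N$ and in particular contains $\varphi = \varphi_N$, so the open ball $B_{d_c}(w, 1/2^{N})$ is contained in $[\varphi]$. This exhibits $[\varphi]$ as a union of $d_c$-balls, hence $d_c$-open. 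For the converse, it suffices to show each basic ball is a union of sets $[\varphi]$. A ball $B_{d_c}(w, 1/2^{n})$ consists exactly of those $w'$ agreeing with $w$ on $\varphi_0, \dots, \varphi_{n-1}$; writing $\epsilon_m \varphi_m$ for $\varphi_m$ if $\varphi_m \in w$ and $\neg \varphi_m$ otherwise, this ball equals $[\chi]$ where $\chi = \bigwedge_{m < n} \epsilon_m \varphi_m$, which is again a $\tau_c$-basic open set. Thus the two bases refine each other and generate the same topology.

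I expect the main subtlety to be the careful bookkeeping in the converse direction, namely confirming that the finite conjunction $\chi$ capturing ``agreement on the first $n$ formulas'' is itself a formula of $\mathcal{L}_\mathsf{DPL}$ (it is, being a finite Boolean combination) and that $[\chi]$ literally equals the prescribed ball rather than merely being contained in it. Both points hinge on negation-completeness of saturated sets: membership of $w'$ in $[\chi]$ is equivalent, by Fact \ref{T}(1) governing the Boolean behavior of the sets $[\cdot]$, to $w'$ containing each $\epsilon_m \varphi_m$, which is exactly the condition of agreeing with $w$ on $\varphi_0, \dots, \varphi_{n-1}$. Once this equivalence is spelled out, the coincidence of topologies is immediate, and the computability content of the enumeration $\mathcal{E}$ plays no role here beyond fixing the indices $n$; it will instead be used in the subsequent verification that $\langle \Omega_c, d_c, \cdot \rangle$ is a \emph{computable} metric space.
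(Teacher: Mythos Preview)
Your proof is correct and follows essentially the same approach as the paper's: verify the metric axioms directly (the paper merely calls this ``clear,'' while you spell out the ultrametric inequality) and establish the coincidence of topologies by showing each $[\varphi_N]$ contains a $d_c$-ball around each of its points and, conversely, that each ball $B_{1/2^n}(w)$ is the basic open $[\chi]$ for the conjunction $\chi$ encoding agreement with $w$ on the initial segment of the enumeration. There is a harmless off-by-one in your characterization of the open ball (agreement should run through $\varphi_n$, not merely $\varphi_{n-1}$, since $d_c(w,w') = 1/2^{n}$ when the first disagreement is at index $n$), but the argument is otherwise the same as the paper's.
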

\begin{proof}
Proving the conditions of metric for $d_c$ is clear. Now, we show that $d_c$ generates the same topology as $\tau_c$. For each $n\in \mathbb{N}$ and $w\in \Omega_c$, let $B_{\frac{1}{2^n}}(w)$ be the ball around $w$ of radius $\frac{1}{2^n}$. Notice that $B_{\frac{1}{2^n}}(w)$ is an open set in $\tau_c$. This is true since if we take $\varphi= \bigwedge_{i=1}^n \theta_i$ where $\theta_i =\varphi_i$ whenever $\varphi_i\in w$ and $\theta_i= \neg\varphi_i$ otherwise, then we have that $w\in [\varphi]\subseteq B_{\frac{1}{2^n}}(w)$. Furthermore, for each $w\in O_n= [\varphi_n]$, it is the case that $w\in B_{\frac{1}{2^n}}(w) \subseteq O_n$. Hence, the metric topology generated by $d_c$ coincides with $\tau_c$.
\end{proof}

\begin{definition} A saturated set $w$ is called {\em computable} if the set $\{n\in \mathbb{N}\;|\; \varphi_n\in w\}$ is computable.
\end{definition}

\begin{remark}
It is not hard to see that a saturated set $w$ is computable if $\{n\in \mathbb{N}\;|\; \varphi_n\in w\}$ is a recursively enumerable set. This is true, since every saturated set is negation-complete.
\end{remark}

\begin{lemma}[Computable Lindenbaum lemma] \label{comlin2}
\begin{itemize}
\item[1.] Let $\varphi$ be a consistent formula of $\mathcal{L}_\mathsf{DPL}$. Then there exists a computable saturated set $w$ such that $\varphi \in w$.
\item[2.] More generally, there is a sequence $(w_n)_{n\in \mathbb{N}}$ of computable saturated sets such that for each consistent formula $\varphi_i$, there exists $k\in \mathbb{N}$ with $\varphi_i\in w_k$. Furthermore, the set $\{(i, k)\in \mathbb{N}^2 \;|\; \varphi_i\in w_k\}$ is computable.
\end{itemize}
\end{lemma}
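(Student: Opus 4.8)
The plan is to construct each saturated set $w$ effectively by mimicking the proof of the Lindenbaum Lemma (Lemma \ref{lin}), but performing the construction so that membership in the resulting set is decidable. The crucial enabling fact is Proposition \ref{sets}: the sets $T$, $C$, and $D$ are computable, so in particular the relation $\varphi_i \vdash \varphi_j$ is decidable. This means that at each stage $k$ of the Lindenbaum construction, given a finite conjunction $\gamma_k = \bigwedge \Gamma_k$ representing the finitely many formulas adjoined so far, we can \emph{effectively} decide whether $\Gamma_k \vdash \varphi_k$ by checking whether $(\ulcorner\gamma_k\urcorner, \ulcorner\varphi_k\urcorner) \in D$, and hence decide which of the three cases of the recursion applies.

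First I would prove Part 1. Fix a consistent $\varphi$ and set $\Gamma_0 = \{\varphi\}$. Since the construction only ever adds formulas by taking conjunctions with $\varphi_k$ or its negation, each $\Gamma_k$ is (equivalent to) a single formula $\gamma_k$ whose G\"odel number is computable from $\gamma_{k-1}$ and $\varphi_k$. At stage $k$: decide $\Gamma_k \vdash \varphi_k$ using $D$; if so, set $\gamma_{k+1} = \gamma_k \wedge \varphi_k$; if not and $\varphi_k$ is not of the form $\bigcirc^n L_{r_1\dots r_k r}\theta$, set $\gamma_{k+1} = \gamma_k \wedge \neg\varphi_k$; and in the remaining case search for the least $l$ with $s_l < r$ and $\Gamma_k \nvdash \bigcirc^n L_{r_1\dots r_k s_l}\theta$ and adjoin both $\neg\varphi_k$ and $\neg\bigcirc^n L_{r_1\dots r_k s_l}\theta$. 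Each test in this last search is an instance of $D$, so the search halts (as argued in Lemma \ref{lin}, such $l$ always exists). The set $w = \bigcup_k \Gamma_k$ is saturated by exactly the argument of Lemma \ref{lin}. To see $w$ is computable, observe that to decide whether a given formula $\varphi_n \in w$ it suffices to run the construction through stage $n$: since $w$ is negation-complete, $\varphi_n \in w$ iff $\vdash \gamma_{n+1} \to \varphi_n$, which is decidable via $T$ and the computability of $\gamma_{n+1}$ from the index $n$.

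For Part 2, I would uniformize the above. The construction of $w$ from Part 1 takes as its only data the seed formula $\varphi$; so for each index $i$ with $\varphi_i$ consistent (membership in $C$ being decidable), run the construction seeded with $\varphi_i$ to produce a computable saturated set $w_i$. Enumerating over $i \in C$ yields the sequence $(w_n)_{n\in\mathbb{N}}$, and every consistent $\varphi_i$ belongs to $w_i$ by Part 1. The decidability of $\{(i,k) : \varphi_i \in w_k\}$ follows because, given $i$ and $k$, we can compute the seed of $w_k$, run its construction to a stage past $\ulcorner\varphi_i\urcorner$, and apply the decision procedure from Part 1 for membership of $\varphi_i$ in $w_k$; all steps are primitive recursive in the computable relations $T$, $C$, $D$.

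The main obstacle will be verifying that the index $\ulcorner\gamma_k\urcorner$ of the running finite conjunction is genuinely computable \emph{uniformly} in $k$ (and, for Part 2, also uniformly in the seed), together with confirming that the inner minimization search in the third case is guaranteed to terminate. The termination is handed to us by the contradiction argument in Lemma \ref{lin} via the rule GArch$_{\bigcirc^n, r}$, but care is needed to present it as an effective unbounded search whose halting is a theorem rather than merely a semidecidable procedure; the computability of $D$ is precisely what makes each test in the search effective.
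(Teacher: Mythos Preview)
Your proposal is correct and follows essentially the same approach as the paper: seed the Lindenbaum construction of Lemma~\ref{lin} with a consistent formula, use the decidability of $D$ from Proposition~\ref{sets} to make each of the three cases effective, and observe that membership of $\varphi_n$ in the resulting set is determined by stage $n+1$; for Part~2 the paper likewise enumerates the consistent formulas via $C$ and runs the construction uniformly in the seed. Your exposition is in fact more explicit than the paper's about maintaining the running conjunction $\gamma_k$ and about why the inner search in the Archimedean case terminates effectively.
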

\begin{proof}
As the first item is a special case of the second one, we only prove the second statement.
By the second item of Proposition \ref{sets}, there exists a computable increasing sequence $i_0< i_1< \dots<i_k<\dots$ of indices of consistent formulas of $\mathcal{L}_{\mathsf{DPL}}$.
So for given $k\in \mathbb{N}$, we define by induction on $l \in \mathbb{N}$ an increasing sequence
$\Gamma_0^k\subseteq \Gamma_1^k \subseteq \dots \subseteq \Gamma_l^k \subseteq \dots$
of finite subsets of formulas as follows:

Set $\Gamma_0^k := \{\varphi_{i_k}\}$ and suppose that $\Gamma_l^k$ is already defined. Then, we put
\begin{equation*}
\Gamma_{l+1}^k := \left\{
\begin{array}{rl}
\Gamma_l^k \cup \{\varphi_l\}\;\;\;\;\; \;\;\;\;\;\;\;\;\;\;\;\;\; \;\;\;\;\;\;\;\;\;\;\;\;\;
& \text{if } \Gamma_l^k \vdash\varphi_l,\\
\Gamma_l^k \cup \{\neg\varphi_l\}\;\;\;\;\;\;\;\;\;\;\;\;\;\;\;\; \;\;\;\;\;\;\;\;\;\;\;\;\;\!
& \text{if } \Gamma_l^k \nvdash\varphi_l \;\text{and $\varphi_l$ is not of}\\
& \text{the form}\;{\bigcirc}^n L_{r_1\dots r_k r} \theta,\\
\Gamma_l^k \cup \{\neg\varphi_l, \neg {\bigcirc}^n L_{r_1\dots r_k s} \theta\}\;\;\;
& \text{if } \Gamma_l^k \nvdash\varphi_l \;\text{and}\;\varphi_l \text{ is of the form } {\bigcirc}^n L_{r_1\dots r_k r} \theta,\\
& \text{and ${\bigcirc}^n L_{r_1\dots r_k s} \theta$ is a formula whose indice is}\\
& \text{the minimum in the list $\mathcal{E}$ and } \Gamma_l^k \nvdash {\bigcirc}^n L_{r_1\dots r_k s} \theta.
\end{array} \right.
\end{equation*}

Now, consider
\begin{equation*}
w_k := \bigcup_{l \in \mathbb{N}}\Gamma_l^k.
\end{equation*}

Similar to the Lindenbaum lemma, one can prove that $w_k$ is a saturated set and $\varphi_{i_k} \in w_k$. Furthermore, one can check effectively and uniformly in $k, l\in\mathbb{N}$ whether $\varphi_l \in\Gamma_{l}^k$ or not. Hence, the set $\{(i, k)\in \mathbb{N}^2\;|\; \varphi_{i}\in w_k\}$ is computable.
\end{proof}

\begin{corollary} \label{cms}
\begin{itemize}
\item[1.] The set $\mathcal{S}$ of all computable saturated sets is dense in $\langle \Omega_c, d_c \rangle$.
\item[2.] Let $(w_n)_{n\in \mathbb{N}}$ be a sequence as given in Lemma \ref{comlin2}. Then, the triple $\langle\Omega_c, d_c, (w_n)_{n\in \mathbb{N}} \rangle$ is a computable metric space.
\end{itemize}
\end{corollary}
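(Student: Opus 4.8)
The plan is to handle the two parts in tandem, since both rest on the same computational engine: the membership relation $\{(i,k)\in\mathbb{N}^2 : \varphi_i\in w_k\}$ furnished by Lemma \ref{comlin2}, together with the observation (implicit in the proof that $d_c$ generates $\tau_c$) that two saturated sets $w,w'$ satisfy $d_c(w,w')\le 2^{-(N+1)}$ precisely when they agree on $\varphi_0,\dots,\varphi_N$. For Part 1, I would fix an arbitrary saturated set $w$ and $\epsilon>0$, choose $N$ with $2^{-(N+1)}<\epsilon$, and form the characteristic conjunction $\chi=\bigwedge_{i=0}^{N}\theta_i$, where $\theta_i=\varphi_i$ if $\varphi_i\in w$ and $\theta_i=\neg\varphi_i$ otherwise. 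Since $w$ is finitely consistent, $\chi$ is a consistent formula, so Part 1 of Lemma \ref{comlin2} yields a computable saturated set $w'\ni\chi$. As $w'$ is deductively closed and negation complete, $\chi\in w'$ forces $w'$ to agree with $w$ on each of $\varphi_0,\dots,\varphi_N$, whence $d_c(w,w')\le 2^{-(N+1)}<\epsilon$. This establishes density of $\mathcal{S}$.

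For Part 2, I must verify the two clauses in the definition of a computable metric space for $\langle\Omega_c,d_c,(w_n)_{n\in\mathbb{N}}\rangle$. Density of the enumerated sequence $(w_n)$ follows verbatim from the argument above, except that the consistent formula $\chi$ occurs in $\mathcal{E}$ as some $\varphi_m$, and I invoke Part 2 of Lemma \ref{comlin2} to obtain $w_k$ with $\chi=\varphi_m\in w_k$; again $w_k$ agrees with the target on $\varphi_0,\dots,\varphi_N$. The substantive clause is that $\big(d_c(w_i,w_j)\big)_{i,j}$ is a computable double sequence of reals. Here I would use that, because $\{(m,k):\varphi_m\in w_k\}$ is computable and each $w_k$ is negation complete, the predicate ``$w_i$ and $w_j$ disagree on $\varphi_n$'' is decidable uniformly in $i,j,n$. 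I then define the approximations: searching $n=0,\dots,k$, set $a_{\langle i,j\rangle,k}=2^{-n_0}$ if $n_0\le k$ is the least disagreement index found, and $a_{\langle i,j\rangle,k}=2^{-(k+1)}$ if no disagreement is found among $\varphi_0,\dots,\varphi_k$. These are uniformly computable rationals.

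Finally I would establish effective convergence with the uniform modulus $g(\langle i,j\rangle,n)=n$. If $w_i\ne w_j$ with true least disagreement $n_0^{*}$, then for every $k\ge n_0^{*}$ the value is exactly $2^{-n_0^{*}}=d_c(w_i,w_j)$, while for $k<n_0^{*}$ the approximation $2^{-(k+1)}$ exceeds the target $2^{-n_0^{*}}$ by at most $2^{-(k+1)}$; if $w_i=w_j$ the target is $0$ and the approximation $2^{-(k+1)}$ tends to it. In all cases $|d_c(w_i,w_j)-a_{\langle i,j\rangle,k}|\le 2^{-(k+1)}<2^{-n}$ whenever $k\ge n$, which is exactly effective convergence uniform in the pairing index $\langle i,j\rangle$. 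Together with density this gives the computable metric space, so both parts of the corollary follow.

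The step I expect to be the main obstacle is precisely this convergence analysis: equality $w_i=w_j$ cannot be decided in finite time, so the argument must be arranged so that the default approximation $2^{-(k+1)}$ drives the error to zero \emph{without} ever certifying equality. The point is that only disagreement on a fixed formula is decidable, and the construction must lean on decidability of disagreement rather than on decidability of equality; once the approximations are set up to decay automatically in the absence of detected disagreement, the uniform modulus $g(\langle i,j\rangle,n)=n$ falls out and the remaining verifications are routine.
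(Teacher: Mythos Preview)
Your proposal is correct. The paper states this corollary without proof, treating it as an immediate consequence of Lemma~\ref{comlin2} and the definition of $d_c$; your argument fills in exactly the expected details---density via the characteristic conjunction $\chi$ and computability of the distance sequence via the decidable membership relation $\{(i,k):\varphi_i\in w_k\}$---and is the natural verification the paper leaves implicit.
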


In the light of the above corollary and Fact \ref{M(X)}, the following corollary is established.

\begin{corollary} \label{CMS}
Let $\langle\Omega_c, d_c, (w_n)_{n\in \mathbb{N}} \rangle$ be given as Corollary \ref{cms} and $\mathcal{S}_c = \{w_n\;|\;n\in \mathbb{N}\}$. Then, the triple $\langle \mathcal{M}(\Omega_c), \pi, a_{\mathcal{S}_c} \rangle$ is a computable metric space.
\end{corollary}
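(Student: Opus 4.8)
The plan is to obtain the statement as a direct instantiation of Fact \ref{M(X)}, applied to the computable metric space furnished by Corollary \ref{cms}. First I would recall that by Part 2 of Corollary \ref{cms}, the triple $\langle\Omega_c, d_c, (w_n)_{n\in\mathbb{N}}\rangle$ is a computable metric space; in particular, the set $\mathcal{S}_c = \{w_n \mid n\in\mathbb{N}\}$ is dense in $\langle\Omega_c, d_c\rangle$, and the double sequence $(d_c(w_i, w_j))_{i,j\in\mathbb{N}}$ is a computable double sequence of real numbers. This is precisely the hypothesis under which Fact \ref{M(X)} is stated, so the whole task reduces to checking that the data of the present corollary match those in that fact.

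Next I would identify the canonical dense sequence. Following the construction that precedes Fact \ref{M(X)}, the enumeration $(w_n)_{n\in\mathbb{N}}$ of $\mathcal{S}_c$ together with an enumeration of $\mathbb{Q}$ gives rise to the enumeration $a_{\mathcal{S}_c}$ of the set $\mathcal{D}\subseteq\mathcal{M}(\Omega_c)$ of Borel probability measures concentrated on finitely many points of $\mathcal{S}_c$ with rational weights. By the cited result of Billingsley, this set is dense in $\mathcal{M}(\Omega_c)$ for the weak topology, which is exactly the topology induced by the Prokhorov metric $\pi$. Thus $a_{\mathcal{S}_c}$ is a bona fide dense sequence in $\langle\mathcal{M}(\Omega_c), \pi\rangle$.

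With these identifications in place, Fact \ref{M(X)} applies verbatim with $X = \Omega_c$, $d = d_c$, and dense sequence $(w_n)_{n\in\mathbb{N}}$, yielding that $\langle\mathcal{M}(\Omega_c), \pi, a_{\mathcal{S}_c}\rangle$ is a computable metric space, as desired. The only point that would require care is the computability assertion hidden inside Fact \ref{M(X)}, namely that the distances $\pi(a_{\mathcal{S}_c}(i), a_{\mathcal{S}_c}(j))$ form a computable double sequence. By Fact \ref{Open} this reduces to comparing finite sums of the rational weights against the $\epsilon$-neighbourhoods of finite subsets of the (finite) supports, and the relevant inequalities become decidable once the pairwise distances $d_c(w_m, w_n)$ are computable. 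Since that computability is exactly what Corollary \ref{cms} supplies, I expect no genuine obstacle here: the argument is essentially a bookkeeping verification that the hypotheses of Fact \ref{M(X)} are met, after which the fact may be invoked as a black box.
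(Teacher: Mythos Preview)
Your proposal is correct and mirrors the paper's own approach: the paper simply notes that the corollary follows directly from Corollary \ref{cms} together with Fact \ref{M(X)}, which is exactly the instantiation you carry out. Your additional unpacking of the density of $\mathcal{D}$ and the computability of Prokhorov distances is accurate but more detailed than what the paper records.
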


Now we are ready to prove our main result to show that the canonical dynamic Markov process introduced for the proof of strong completeness theorem \ref{scompletef} is a computable structure. Here computability of this structure means that in addition to $\langle\Omega_c, d_c, (w_n)_{n\in \mathbb{N}} \rangle$ is a computable metric space, the functions $T_c: \Omega_c \times \mathcal{A}_c\to [0, 1]$ and $f_c: \Omega_c \to \Omega_c $
are computable. Notice that we view the function $T_c$ as a mapping from the set $\Omega_c $ into $\mathcal{M}(\Omega_c)$. So the computability of $T_c$ fits into the framework of Definition \ref{comfun}.

\begin{theorem} \label{comstr}
The canonical dynamic Markov process $\mathfrak{P}_c= \langle \Omega_c, \mathcal{A}_c, T_c, f_c \rangle$ is a computable structure.
\end{theorem}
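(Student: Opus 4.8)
The plan is to verify the two computability conditions that define a computable structure in this setting: first that $\langle\Omega_c, d_c, (w_n)_{n\in\mathbb{N}}\rangle$ is a computable metric space, and second that both $T_c$ and $f_c$ are computable functions in the sense of Definition \ref{comfun}. The metric-space part is already in hand by Corollary \ref{cms}, so the work is entirely in establishing computability of the two maps. For $f_c$, I would invoke Definition \ref{comfun} directly: the basic open sets of the codomain are the $O_j=[\varphi_j]$, and by Part 2 of Lemma \ref{f} we have the clean identity $f_c^{-1}([\varphi_j]) = [\bigcirc\varphi_j]$. Since $\bigcirc\varphi_j$ is again a formula of $\mathcal{L}_\mathsf{DPL}$, it has some index in the enumeration $\mathcal{E}$, and that index is obtainable effectively from $j$ via the Gödel coding; thus $f_c^{-1}(O_j)$ is a single basic open set, lower-computable uniformly in $j$. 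This gives a computable function $g$ witnessing computability of $f_c$ with essentially no analysis.

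The substantive step is the computability of $T_c$, viewed as a map $\Omega_c\to\mathcal{M}(\Omega_c)$ between the computable metric spaces $\langle\Omega_c,d_c,(w_n)\rangle$ and $\langle\mathcal{M}(\Omega_c),\pi,a_{\mathcal{S}_c}\rangle$ of Corollary \ref{CMS}. To show this I would again use Definition \ref{comfun}: it suffices to show that the preimage under $T_c$ of each basic Prokhorov ball around a rational finitely-supported measure $a_{\mathcal{S}_c}$ is lower-computable uniformly. The strategy is to reduce the Prokhorov condition to a decidable syntactic one. By Fact \ref{Open}, for a measure $\nu$ concentrated on a finite set $S$ of basis points $w_{n_1},\dots,w_{n_k}$, the inequality $\pi(T_c(w),\nu)<\epsilon$ is equivalent to finitely many inequalities of the form $T_c(w)(A^\epsilon) > \nu(A) - \epsilon$ ranging over the finite subsets $A\subseteq S$. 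The key is that each enlargement $A^\epsilon$ and each value $T_c(w)(A^\epsilon)$ can be captured syntactically: membership $w'\in A^\epsilon$ is governed by agreement of $w'$ with the points of $A$ on an initial segment of formulas (since $d_c$ measures the least index of disagreement), so $A^\epsilon$ is a finite Boolean combination of basic clopen sets $[\theta]$, and $T_c(w)([\theta])$ is computed from the indices $r$ with $L_r\theta\in w$ via the sup/inf formula of Fact \ref{T}.

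The main obstacle I anticipate is the last link: showing that the condition ``$T_c(w)([\theta]) > c$'' is effectively recognizable in $w$ uniformly, and assembling these into a lower-computable preimage. Because $T_c(w)([\theta])=\sup\{r\in\mathbb{Q}\cap[0,1]\mid L_r\theta\in w\}$, the event $T_c(w)([\theta])>c$ holds exactly when $L_r\theta\in w$ for some rational $r>c$; this is a countable union over rationals $r>c$ of the basic open sets $\{w\mid L_r\theta\in w\}=[L_r\theta]$, each of which is an $O_i$ whose index is computable from $r$ and $\theta$. A strict lower bound on a measure is thus an open (indeed lower-computable) condition, which is precisely what is needed to make the preimage a computable union of basic opens. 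I would therefore express $T_c^{-1}$ of a Prokhorov ball as a finite Boolean combination of such conditions over the finite list of subsets $A\subseteq S$, observe that the finitely many strict inequalities each contribute lower-computable open sets (intersections of finitely many, unioned over the rational witnesses), and then use that finite intersections of basic opens are lower-computable uniformly (by the computable topological space structure) to conclude that the whole preimage is lower-computable uniformly in the index of the ball. Combining the computability of $f_c$, the computability of $T_c$, and the computable metric space of Corollary \ref{cms} then yields that $\mathfrak{P}_c$ is a computable structure.
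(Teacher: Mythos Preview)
Your proposal is correct and follows essentially the same approach as the paper: the computability of $f_c$ via $f_c^{-1}([\varphi_j])=[\bigcirc\varphi_j]$, and the computability of $T_c$ by reducing a Prokhorov ball (via Fact \ref{Open}) to finitely many strict inequalities $T_c(w)(A^\epsilon)>\nu(A)-\epsilon$, expressing each $A^\epsilon$ as a clopen $[\psi]$ determined by agreement on an initial segment of the enumeration, and then writing $\{w:T_c(w)([\psi])>c\}=\bigcup_{s>c}[L_s\psi]$. The paper carries out exactly this computation, specializing to radii $r=2^{-n}$ and writing the preimage explicitly as $\bigcap_{i=1}^N\bigcup_{s_i>\nu(A_i)-r}[L_{s_i}\psi_i]$.
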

\begin{proof}
Let $d_c$ and $(w_n)_{n\in \mathbb{N}}$ be as defined above. Then, by Corollary \ref{cms}, the triple $\langle \Omega_c, d_c, (w_n)_{n\in \mathbb{N}} \rangle$ is a computable metric space.
Now, we show that $T_c: \Omega_c \to \mathcal{M}(\Omega_c)$ is a computable function.
Let $\nu= \mu_{\langle\langle n_1,\dots , n_k\rangle, \langle m_1,... ,m_k\rangle\rangle} \in a_{\mathcal{D}}$ be a measure that is concentrated over the finite set $S= \{w_{n_1},\dots , w_{n_k}\}$ and $q_{m_i}$ is the weight of $w_{n_i}$. Suppose
$r= \frac{1}{2^n}$ and let $B_r^{\pi}(\nu)$ be the open ball around $\nu$ of radius $r$ with respect to Prokhorov metric.
Then, $T_c^{-1}(B_r^{\pi}(\nu)) =\{w\in \Omega_c\;|\; \pi(T_c(w), \nu) < r\}$. But, in the light of Fact \ref{Open}, the condition $\pi(T_c(w), \nu) < r$ is equivalent to $T_c(w, A^r) > \nu(A) - r$ for each subset $A\subseteq S$.
Let $\mathcal{I}= \{A_1, \dots, A_N\}$ be the set of all non-empty subsets of $S$. Now, we have that $A^r = \bigcup_{w \in A} B_r^{d_c}(w)$ for each $A\in \mathcal{I}$. Since $r= \frac{1}{2^n}$, for each $i \leq N$
and $w\in A_i$, we let $\theta_w= \bigwedge_{j\leq n} \varphi_{j}^c$ where $\varphi_{j}^c = \varphi_{j}$ if $\varphi_{j}\in w$ and $\varphi_{j}^c = \neg \varphi_{j}$ if $\neg\varphi_{j}\in w$.
Put $\psi_i = \bigvee_{w \in A_i} \theta_w$. Then, $A_i^r= [\psi_i]$. Hence,
\begin{align*}
T_c(w, A_i^r) = & T_c(w, [\psi_i])\\
= &\sup{\{s\in \mathbb{Q}\cap[0, 1]\;|\; L_s \psi_i \in w\}}.
\end{align*}
Therefore, $T_c(w, A_i^r) > \nu(A_i) -r$ if and only if there exists $s_i> \nu(A_i)-r$ such that $ L_{s_i} \psi_i\in w$.
Thus,
\[T_c^{-1}(B_r^{\pi}(\nu)) = \bigcap_{i=1}^N \; \bigcup_{s_i> \nu(A_i)- r} [L_{s_i} \psi_i]. \]

It is clear that the set $T_c^{-1}(B_r^{\pi}(\nu))$ is a lower-computable open set uniformly in $\nu$.
Hence, $T_c$ is a computable function.

To prove that the function $f_c$ is computable, notice that $f_c^{-1}([\varphi_i]) = [\bigcirc \varphi_i]$ and therefore, $f_c^{-1}([\varphi_i])$ is a lower-computable open set uniformly in $i$.
\end{proof}

\begin{definition}
The {\em satisfaction function} $Sat_c: \Omega_c\times \mathbb{N} \to \{0, 1\}$ is defined as
\begin{equation*}
Sat_c(w, i) := \left\{
\begin{array}{rl}
1
& \text{if }\mathfrak{M}_c, w\vDash \varphi_i\\
0
& \text{otherwise}.
\end{array} \right.
\end{equation*}
\end{definition}

Now we consider the discrete topology on both sets $\mathbb{N}$ and $\{0, 1\}$, they form of computable metric spaces. Therefore, the following corollary is immediate.

\begin{corollary}
The satisfaction function $Sat_c: \Omega_c\times \mathbb{N} \to \{0, 1\}$ is computable.
\end{corollary}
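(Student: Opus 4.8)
The plan is to establish computability of the satisfaction function $Sat_c\colon \Omega_c \times \mathbb{N} \to \{0,1\}$ by unwinding the definition of a computable function from Definition \ref{comfun} and exploiting the product structure of the domain. Since $\{0,1\}$ carries the discrete topology, its canonical base consists of the singletons $\{0\}$ and $\{1\}$, so it suffices to show that the preimages $Sat_c^{-1}(\{1\})$ and $Sat_c^{-1}(\{0\})$ are lower-computable open sets in $\Omega_c \times \mathbb{N}$, uniformly over the base of the product. The product $\Omega_c \times \mathbb{N}$ is itself a computable metric space whose basic open sets have the form $O_i \times \{m\}$ (using that $\mathbb{N}$ is discrete), and on each such slice the function is determined by the single formula $\varphi_m$.

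First I would observe that by the Truth Lemma (Lemma \ref{truth2}), for each fixed $m \in \mathbb{N}$ we have $\mathfrak{M}_c, w \vDash \varphi_m$ if and only if $\varphi_m \in w$, which holds precisely when $w \in [\varphi_m] = O_m$. Consequently the fibre condition $Sat_c(w,m) = 1$ is equivalent to $w \in O_m$, and $Sat_c(w,m) = 0$ is equivalent to $w \in O_{m'}$ where $\varphi_{m'}$ is a formula logically equivalent to $\neg\varphi_m$ (recall from Fact \ref{T} that $\Omega_c \setminus [\varphi_m] = [\neg\varphi_m]$). Thus the two preimages decompose slicewise as
\begin{align*}
Sat_c^{-1}(\{1\}) &= \bigcup_{m \in \mathbb{N}} \bigl( O_m \times \{m\} \bigr), \\
Sat_c^{-1}(\{0\}) &= \bigcup_{m \in \mathbb{N}} \bigl( [\neg\varphi_m] \times \{m\} \bigr).
\end{align*}

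Next I would verify that these unions are lower-computable uniformly. For $Sat_c^{-1}(\{1\})$ the basic open set $O_m \times \{m\}$ is already a product of a base element of $\Omega_c$ with a base element of $\mathbb{N}$, so it belongs to the canonical base of the product space and its index depends computably on $m$. For $Sat_c^{-1}(\{0\})$ one must replace $[\neg\varphi_m]$ by $O_{g(m)}$ where $g(m) = \min\{l \in \mathbb{N} \mid\, \vdash (\varphi_l \leftrightarrow \neg\varphi_m)\}$; by Proposition \ref{sets} the relation $\vdash(\varphi_l \leftrightarrow \neg\varphi_m)$ is decidable (it is expressible via the set $T$), so $g$ is a total computable function and the slice $O_{g(m)} \times \{m\}$ again has a computably obtainable index. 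Hence there are computable functions producing, uniformly in the base index of the target and in an auxiliary parameter, the enumeration of basic open boxes whose union equals each preimage, which is exactly the condition in Definition \ref{comfun}.

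I do not expect a serious obstacle here, since the discreteness of both $\mathbb{N}$ and $\{0,1\}$ collapses the usual difficulty of matching open balls across a product into a purely slicewise computation, and the only genuinely computational ingredient is the decidability of provable equivalence supplied by Theorem \ref{decid} and repackaged in Proposition \ref{sets}. The one point requiring mild care is confirming that the base of the product computable metric space is indexed so that the map $m \mapsto (\text{index of } O_m \times \{m\})$ and $m \mapsto (\text{index of } O_{g(m)} \times \{m\})$ are computable; this follows from the standard construction of the product of computable topological spaces together with the computability of the pairing and of $g$. Assembling these, the preimages of both target base elements are lower-computable uniformly, so $Sat_c$ is computable.
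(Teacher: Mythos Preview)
Your proposal is correct and follows essentially the same approach as the paper: both arguments decompose $Sat_c^{-1}(\{1\}) = \bigcup_m ([\varphi_m]\times\{m\})$ and $Sat_c^{-1}(\{0\}) = \bigcup_m ([\neg\varphi_m]\times\{m\})$ and observe that these are lower-computable open sets. You simply supply more detail than the paper does, in particular making explicit the computable index function $g(m)$ for $[\neg\varphi_m]$ and the product-space bookkeeping, whereas the paper leaves these routine points implicit.
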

\begin{proof}
It is enough to show that both sets $Sat^{-1}(0)$ and $Sat^{-1}(1)$ are lower-computable open sets. But,
$$Sat^{-1}(\{1\})=\{(w,i)\;|\; \mathfrak{M}_c, w\vDash\varphi_i \}=\bigcup_{i \in \mathbb{N}}([\varphi_i]\times \{i\}), $$
while $$Sat^{-1}(\{0\}) = \{(w,i)\;|\; \mathfrak{M}_c, w\nvDash\varphi_i \}=\bigcup_{i \in \mathbb{N}}([\neg\varphi_i]\times \{i\}).$$
Therefore, both sets are lower-computable open sets.
\end{proof}
\section{Conclusions and further studies}
In this article, we studied the decidability and computability issues of dynamic probability logic. In Theorem \ref{com}, we proved that the proof system $\mathcal{H}^{-}_{\mathsf{DPL}}$ is weakly complete. We further verified that this system enjoys the finite model property and decidability (Corollary \ref{fmp} and Theorem \ref{decid}). In Section \ref{efsc}, we presented a strongly complete proof system $\mathcal{H}_{\mathsf{DPL}}$ and proved that the canonical model $\mathfrak{M}_c = \langle{\Omega_c, \mathcal{A}_c, T_c, f_c, v_c}\rangle$ is a computable structure (Theorem \ref{comstr}).

In the article  \cite{Zhou2009}, an {\em effective finite model property} for probability logic ($\mathsf{PL}$) is shown, i.e. for a given formula $\varphi \in \mathcal{L}_{\mathsf{PL}}$, there exists an effective upper bound on the size of a model which satisfies $\varphi$. It would be interesting to verify whether the same result for $\mathsf{DPL}$ is true or not. It is also worth studying another extensions of probability logic, such as epistemic probability logic and considering decidability and computability issues as explored here. Another compelling topic for further research is to investigate the computability aspects of infinitary extensions of $\mathsf{DPL}$ introduced in \cite{CP2024}.


\end{document}